\documentclass[3p]{elsarticle}



\journal{arXiv}









\bibliographystyle{elsarticle-num}

\usepackage{amsfonts}
\usepackage{amsmath, amscd, amssymb, bm, amsbsy, epsf, amsthm}
\usepackage{enumerate}
\usepackage{paralist}
\usepackage{graphicx,color}
\usepackage{subfigure}
\usepackage{mathrsfs}
\usepackage{verbatim}
\usepackage{inputenc}
\usepackage{diagbox}
\usepackage{algorithm}
\usepackage{algorithmicx}
\usepackage{algpseudocode}
\usepackage{hyperref}
\usepackage{pifont}
\usepackage{inputenc}
\usepackage{bbm, dsfont}
\usepackage{multicol}
\usepackage{balance}
\usepackage{verbatim, booktabs}

\usepackage{multirow, url}
\usepackage[table]{xcolor}
\usepackage{diagbox}
\usepackage{tikz}

\usepackage{cmbright}


\DeclareMathAlphabet{\mathpzc}{OT1}{pzc}{m}{it}
\DeclareMathAlphabet{\mathpzc}{OT1}{pzc}{m}{it}

\newtheorem{theorem}{Theorem}

\newtheorem{proposition}[theorem]{Proposition}
\newdefinition{definition}{Definition}
\newdefinition{hypothesis}{Hypothesis}
\newdefinition{problem}{Problem}
\newdefinition{remark}{Remark}
\newdefinition{example}{Example}



\def\N{\boldsymbol{\mathbbm{N}}}
\def\R{\boldsymbol{\mathbbm{R}}}
\def\defining{\overset{\mathbf{def}}=}

\def\B{\mathcal{B}}

\def\D{\mathcal{D}}
\def\A{\mathcal{A}}
\def\S{\mathcal{S}}

\def\Exp{\boldsymbol{\mathbb{E}}}
\def\prob{\boldsymbol{\mathbbm{P} } }

\def\x{\boldsymbol{x} }
\def\y{\boldsymbol{y} }

\def\p{\boldsymbol{p}}
\def\g{\boldsymbol{g}}
\def\capac{\boldsymbol{c} }

%
%
%
%

%

%
\DeclareMathAlphabet{\mathpzc}{OT1}{pzc}{m}{it}

\DeclareMathOperator*{\talg}{\textit{T-alg}}
\DeclareMathOperator*{\dist}{\textit{dist}}
\DeclareMathOperator*{\eff}{\textit{eff}}

\newcommand\X{ \mathsf{X} } 
\newcommand\Y{ \mathsf{Y} } 
\newcommand\Z{ \mathsf{Z} } 
\newcommand\ups{ \mathsf{S} } 
\newcommand\uph{ \mathsf{H} } 
\newcommand\upc{ \mathsf{C} } 

\begin{document}

\begin{frontmatter}

\title{Analysis of Divide \& Conquer strategies for the \\0-1 Minimization Knapsack Problem}
\tnotetext[mytitlenote]{This material is based upon work supported by project HERMES 45713 from Universidad Nacional de Colombia,
Sede Medell\'in.}

\author[mymainaddress]{Fernando A Morales} 
\cortext[mycorrespondingauthor]{Corresponding Author}
\ead{famoralesj@unal.edu.co}

\author[mymainaddress]{Jairo A Mart\'inez}




\address[mymainaddress]{Escuela de Matem\'aticas
Universidad Nacional de Colombia, Sede Medell\'in \\
Carrera 65 \# 59A--110, Bloque 43, of 106,
Medell\'in - Colombia}


\begin{abstract}
We introduce and asses several Divide \& Conquer heuristic strategies aimed to solve large instances of the \c0-1 Minimization Knapsack Problem. The method subdivides a large problem in two smaller ones (or recursive iterations of the same principle), to lower down the global computational complexity of the original problem, at the expense of a moderate loss of quality in the solution. Theoretical mathematical results are presented in order to guarantee an algorithmically successful application of the method and to suggest the potential strategies for its implementation. In contrast, due to the lack of theoretical results, the solution's quality deterioration is measured empirically by means of Monte Carlo simulations for several types and values of the chosen strategies. Finally, introducing parameters of efficiency we suggest the best strategies depending on the data input.
\end{abstract}

\begin{keyword}
Divide and Conquer, Minimization Knapsack Problem, Monte Carlo simulations, method's efficiency.
\MSC[2010] 90C59 \sep 90C06 \sep 90C10 \sep 65C05 \sep 68U01
\end{keyword}

\end{frontmatter}



%
%
%
%
%
%
%
%
%
\section{Introduction}   
%
%
%
%
The \textit{Knapsack Problem} (KP) is, beyond dispute, one of the fundamental problems in integer optimization for three main reasons. First, due to its simplicity with respect to a general linear integer (or mixed integer) optimization problem. Second, because of its occurrence as a subproblem of an overwhelming number of optimization problems, including a wide variety of real life situations which can be modeled by KP. Third, because it belongs to the NP hard class problems which makes it relevant from the theoretical perspective. As a natural consequence, there is a vast literature dedicated to the KP solution, comprising a broad spectrum between exact algorithms such as Dynamic Programming (DP) and Branch \& Bound (B\&B) techniques \cite{kellerer2005knapsack}, metaheuristic schemes such as Genetic Algorithms (GA), Ant Colony Algorithms (ACO's) and hybrid algorithms, including \textit{matheuristics} and \textit{symheuristics} \cite{blum2011hybrid}, \cite{blum2003metaheuristics}, \cite{juan2015review}, \cite{voss2009matheuristics}, \cite{wilbaut2008survey}; an early review of non-standard versions of KP is found in \cite{lin1998bibliographical}, a detailed review of some versions is found in the texts \cite{kellerer2005knapsack} and \cite{martello1990knapsack}. The convergence analysis for some of the aforementioned algorithms is presented in \cite{GLOVER20023}, \cite{Gutjahr2003}, \cite{Gutjahr2010}, \cite{Hanafi2001}.
As with any optimization problem, for the KP solution it is crucial to exploit the trade-off between the quality of the solution in terms of the value of the objective function, and the computational effort required to obtain it.\\

\noindent Both, exact methods and metaheuristic algorithms have disadvantages. Exact algorithms such as DP and B\&B usually are insufficient to address large instances: all dynamic programming versions for KP are pseudo-polynomial, i.e. time and memory requirements are dependent on the instance size. Commonly, the computational complexity of the algorithms B\&B cannot be explicitly described, as it is not possible to estimate \textit{a priori} the number of search tree nodes required (see \cite{kellerer2005knapsack}, \cite{pisinger2005hard}). On the other hand, most metaheuristics lack sufficient theoretical justification. Despite the widespread success of such techniques, among researchers there is little understanding about the key aspects of their design, including the identification of search space characteristics that influence the difficulty of the problem. There are some theoretical results related to the convergence of algorithms under appropriate probabilistic hypotheses, however these are not useful from the practical point of view. Moreover, it is not possible to argue that any of the particular metaheuristics is on average superior to any other, so the choice of a metaheuristics to address a specific optimization problem depends largely on the user's experience \cite{voss2009matheuristics}.\\

\noindent As a consequence of the KP's relevance, it is natural that any proposed method for solving integer optimization problems: theoretical, empirical or mixed, is usually first tested on a Knapsack Problem.
This is the case of the present work, where we introduce a Divide and Conquer (D\&C) strategy aimed to solve large instances of the 0-1 Minimization Knapsack Problem \ref{Pblm Integer Problem} below (from now on 0-1 MKP). 
%
The main goal of the proposed approach is to reduce the computational complexity of the 0-1 MKP by subdividing the original/initial problem in two smaller subproblems, at the price of giving up (to some extent) quality of the solution. Moreover, using multiple recursive D\&C iterations the initial problem can be decomposed on several subproblems of suitable size (at the price of further deterioration in the solution's quality), in a multilevel scheme, see \textsc{Figures} \ref{Fig Tree Generated by Head-Left Algorithm}, \ref{Fig Tree Generated by Head-Left Algorithm Biased} and \ref{Fig Tree Generated by Balanced Left-Right Algorithm}. 
The multilevel paradigm is not a metaheuristic in itself, on the contrary, it must act in collaboration with some solution strategy, be it an exact or approximate procedure. For the method to be worthy, the loss of quality vs. the reduction of computational time must lie within an acceptable range. Consequently, the present work first introduces the technique, together with several strategies for its implementation. Next, the quality is defined using several parameters of efficiency. Finally, since no theoretical results can be mathematically shown for measuring the efficiency of the method, we proceed empirically using Monte Carlo simulations and the Law of Large Numbers (see \textsc{Theorem} \ref{Th the Law of Large Numbers} below) to identify which strategy will likely be the best, when the data input of the problem are regarded as random variables with known probabilistic distribution.\\

\noindent We close this section mentioning that different authors have reported the increased performance of metaheuristic techniques when used in conjunction with a multilevel scheme on large instances. The multilevel paradigm has been used mainly in mesh construction, \textit{Graph Partition Problem} (GPP), \textit{Capacitated Multicommodity Network Design} (CMND), \textit{Covering Design} (CD), \textit{Graph Colouring } (GC), \textit{Graph Ordering} (GO), \textit{Traveling Salesman Problem} (TSP) and \textit{Vehicle Routing Problem} (VRP) \cite{banos2003multilevel}, \cite{walshaw2008multilevel}. To the Authors' best knowledge, the use of a multilevel D\&C scheme for solving the 0-1 Minimization Knapsack Problem has not been reported.\\

\section{Preliminaries}\label{Sec Preliminaries}
%
%
%
%
%
%
In this section the general setting and preliminaries of the problem are presented. We start introducing the mathematical notation. For any natural number $ N\in \N $, the symbol $ [N] \defining \{ 1, 2, \ldots , N \} $ indicates the set/window of the first $ N $ natural numbers. For any set $ E $ we denote by $ \vert E \vert $ its cardinal and $ \wp(E) $ its power set. 
A particularly important set is $ \mathcal{S}_{N} $, where $ \mathcal{S}_{N} $ denotes the collection of all permutations in $ [N] $, its elements will be usually denoted by $ \pi, \sigma, \tau $, etc. Random variables will be represented with upright capital letters, e.g. $ \X, \Y, \Z, ... $ and its respective expectations with $ \Exp(\X),  \Exp(\Y), \Exp(\Z), ... $.
Vectors are indicated with bold letters, namely $ \p, \g, ... $ etc. Particularly important collections of objects will be written with calligraphic characters, e.g. $ \mathcal{A}, \mathcal{D}, \mathcal{E} $ to add emphasis. For any real number $ x \in \R $ the floor and ceiling function are given (and denoted) by $ \lfloor x \rfloor \defining \max\{k: \ell\leq x, \, k \text{ integer}\} $, $ \lceil x \rceil \defining \max\{k: k\geq x, \, k \text{ integer}\} $, respectively.
%
%
\subsection{The Problem}\label{Sec The Problem}
%
%
Now we introduce the 0-1 Minimization Knapsack Problem. 
%
%
\begin{problem}[0-1 Minimization KP]\label{Pblm Integer Problem}
	\begin{subequations}\label{Eqn Integer Problem}
		\begin{equation}\label{Eqn Integer Problem Objective Function}
		\min \sum\limits_{i\, \in \, [N] } p_{i}x_{i} ,
		\end{equation}
		subject to
		\begin{equation}\label{Eqn Integer Problem Capacity Constraint}
		\sum\limits_{i\, \in \, [N] } c_{i}x_{i} \geq D,
		\end{equation}
		\begin{align}\label{Eqn Integer Problem Choice Constraint}
		& x_{i} \in \{0,1\}, &
		& \text{for all } i \,\in \,[N] .
		\end{align}	
	\end{subequations}
	Here, $  \x \defining \big(x_{i}: i \in [N] \big) $ is the list of binary valued decision variables. In addition, the capacity coefficients $ \capac \defining \big(c_{i}: i \in [N]\big) $ as well as the costs $ \p \defining \big(p_{i}: i \in [N]\big) $
	, are all positive integers. In the sequel, the feasible set is denoted by 
	\begin{equation}\label{Eqn Feasible Set}
	S \defining \big\{ \x\in \{0,1\}^{N}: \capac\cdot \x \geq D \big\}
	\end{equation}
	and the problem can be written concisely as
	\begin{equation}\label{Eqn Concise Integer Problem}
	z^{*} \defining \min\big\{ \p\cdot \x : \x\, \in \, S
	\big\} ,
	\end{equation}
	where $ z_{*} $ denotes the optimal solution value.	
\end{problem}
%
%
In general, the 0-1 MKP can be understood as the problem of buying items (buses, aircraft, ships fleet), denoted by the index $ i = 1, \ldots, N $, with corresponding costs $ p_{i} $ and capacities $ c_{i} $. Therefore, the natural question is to choose a set of items to minimize its total cost but whose overall capacity satisfies a minimum threshold demand $ D $.
Observe that the solution of \textsc{Problem} \ref{Pblm Integer Problem} above can be found using the solution of the following Knapsack Problem
\begin{problem}\label{Pblm Equivalent Knapsack Problem}
	%
	%
	\begin{subequations}\label{Eqn Equivalent Knapsack Problem}
		\begin{equation}\label{Eqn Equivalent Knapsack Problem Objective Function}
		\max \sum\limits_{i\, \in \, [N] } p_{i}\xi_{i} ,
		\end{equation}
		subject to
		\begin{equation}\label{Eqn Equivalent Knapsack Problem Capacity Constraint}
		\sum\limits_{i\, \in \, [N] } c_{i}\xi_{i} \leq \sum\limits_{i\, \in \, [N]}c_{i} -  D,
		\end{equation}
		\begin{align}\label{Eqn Equivalent Knapsack Problem Choice Constraint}
		& \xi_{i} \in \{0,1\}, &
		& \text{for all } i \,\in \,[N] .
		\end{align}	
	\end{subequations}
	%
	%
\end{problem}
\begin{proposition}\label{Thm Equivalence Knapsack Integer}
	Let $ \boldsymbol{\xi} = \big(\xi_{i}: i\in [N] \big) \in \{0, 1\}^{N} $ be a solution to \textsc{Problem} \ref{Pblm Equivalent Knapsack Problem} and define $ x_{i} \defining 1 - \xi_{i} $ for all $ i\in [N] $ then, the vector $ \x = \big(x_{i}: i\in [N] \big)  \in \{0, 1\}^{N} $ is a solution to \textsc{Problem} \ref{Pblm Integer Problem}.
\end{proposition}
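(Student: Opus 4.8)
The plan is to exploit the affine change of variables $ \xi_{i} = 1 - x_{i} $ (equivalently $ x_{i} = 1 - \xi_{i} $), which is an involution of $ \{0,1\}^{N} $ onto itself, and to show that it simultaneously matches the two feasible sets and reverses the optimization sense, so that optima are carried to optima. Since the whole statement reduces to the behaviour of the two constraints and the two objective functions under this substitution, I would organize the argument around those two correspondences and then assemble them into an optimality transfer.

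First I would establish the feasibility correspondence. Substituting $ \xi_{i} = 1 - x_{i} $ into the capacity constraint \eqref{Eqn Equivalent Knapsack Problem Capacity Constraint} of \textsc{Problem} \ref{Pblm Equivalent Knapsack Problem} and using $ \sum_{i \in [N]} c_{i}(1 - x_{i}) = \sum_{i \in [N]} c_{i} - \sum_{i \in [N]} c_{i} x_{i} $, one sees that $ \sum_{i} c_{i}\xi_{i} \leq \sum_{i} c_{i} - D $ is equivalent to $ \sum_{i} c_{i} x_{i} \geq D $, which is precisely constraint \eqref{Eqn Integer Problem Capacity Constraint}. Because the map $ \boldsymbol{\xi} \mapsto \boldsymbol{1} - \boldsymbol{\xi} $ obviously preserves the binary restrictions, it carries the feasible set of \textsc{Problem} \ref{Pblm Equivalent Knapsack Problem} bijectively onto the feasible set $ S $ of \textsc{Problem} \ref{Pblm Integer Problem}. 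Next I would relate the objectives: writing $ \sum_{i} p_{i}\xi_{i} = \sum_{i} p_{i}(1 - x_{i}) = P - \sum_{i} p_{i} x_{i} $, where $ P \defining \sum_{i \in [N]} p_{i} $ is a fixed constant, maximizing the objective \eqref{Eqn Equivalent Knapsack Problem Objective Function} is equivalent to minimizing $ \sum_{i} p_{i} x_{i} $, i.e. the objective \eqref{Eqn Integer Problem Objective Function}, over the matched feasible set.

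Finally I would combine the two observations to obtain the optimality transfer. Given that $ \boldsymbol{\xi} $ is optimal for \textsc{Problem} \ref{Pblm Equivalent Knapsack Problem}, take any feasible $ \x' \in S $ of \textsc{Problem} \ref{Pblm Integer Problem}; then $ \boldsymbol{\xi}' \defining \boldsymbol{1} - \x' $ is feasible for \textsc{Problem} \ref{Pblm Equivalent Knapsack Problem}, so $ \sum_{i} p_{i}\xi_{i}' \leq \sum_{i} p_{i}\xi_{i} $, and subtracting both sides from $ P $ via the objective identity yields $ \sum_{i} p_{i} x_{i}' \geq \sum_{i} p_{i} x_{i} $ with $ \x = \boldsymbol{1} - \boldsymbol{\xi} $. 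As $ \x' $ was arbitrary in $ S $, this shows $ \x $ is optimal. I do not expect a genuine obstacle here, since the result is a clean complementation duality; the only point requiring care is to phrase the optimality transfer through the \emph{bijection on feasible sets} together with the affine objective relation, rather than by manipulating the single candidate in isolation, so that the argument certifies global optimality and not merely feasibility.
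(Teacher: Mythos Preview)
Your proof is correct and follows essentially the same approach as the paper: both use the involution $x_{i}=1-\xi_{i}$ to show that the capacity constraints are equivalent and that the objective $\sum_{i} p_{i}\xi_{i} = P - \sum_{i} p_{i}x_{i}$ converts the maximization into the minimization. Your version is slightly more explicit in carrying out the optimality transfer via an arbitrary competitor $\x'\in S$, whereas the paper simply notes that the minimum of $g$ occurs at the maximum of $f$, but the underlying argument is identical.
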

\begin{proof}
	The proof uses the well-known classic transformation of complementary binary variables, $ x_{i} = 1 - \xi_{i} \in \{0, 1\} $ for all $ i\in [N] $, to relate the problems \ref{Pblm Integer Problem} and \ref{Pblm Equivalent Knapsack Problem} (see Section 13.3.3 in \cite{kellerer2005knapsack} for details).
	\qed
\end{proof}
%
%
%
%
\subsection{Greedy Algorithm vs Linear Optimization Relaxation}
%
%
In this section, we explore the relationship between the solution of the natural linear relaxation of \textsc{Problem} \ref{Pblm Integer Problem} and the solution provided by the natural Greedy Algorithm. First we introduce the following definitions
\begin{definition}\label{Def Specific Weight}
	Let $ \capac= \big(c_{i}:i \in [N] \big) $,  $ \p = \big(p_{i}:i \in [N] \big) $ be the lists of capacities and prices 
	respectively, we define the list of \textbf{specific weights} by
	\begin{align}\label{Eqn Specific Weight}
	& \gamma_{i} \defining \frac{c_{i}}{p_{i}}\, , &
	& \text{for all } i \in [N] .
	\end{align}
\end{definition}
Consider now the Greedy Algorithm \ref{Alg Greedy Algorithm} to find a feasible solution to \textsc{Problem} \ref{Pblm Integer Problem}. 
%
%
\begin{algorithm} 
	\caption{Greedy Algorithm, returns feasible solution to \textsc{Problem} \ref{Pblm Integer Problem} $ \big(y_{i}: i \in [N]\big) $ and corresponding value of objective function $ \sum\big\{ p_{i}y_{i} : i\, \in \, [N] \big\}$ }
	\label{Alg Greedy Algorithm}
	\begin{algorithmic}[1]
		\Procedure{Greedy Algorithm}{Prices: $ \p = \big(p_{i}: i \in [N]\big) $,
			Capacities: $ \capac= \big(c_{i}: i \in [N]\big) $,
			Demand: $ D $}
		
		
		\If{$ \sum\limits_{i\, \in \, [N] } c_{i} < D $}
		\textbf{print} ``Feasible region is empty" \Comment{Checking if the problem is infeasible}
		\Else
		
		\State \textbf{compute} list of specific weights $ \big(\gamma_{i}: i \in [N]\big) $ \Comment{Introduced in \textsc{Definition} \ref{Def Specific Weight}.}
		
		\State \textbf{sort} the list $ \big(\gamma_{i}: i \in [N]\big) $ in descending order
		
		\State \textbf{denote} by $ \sigma \in \S[N] $ the associated ordering permutation, i.e., 
		\begin{align}\label{Eqn Sorting Permutation}
		& \gamma_{\sigma(i)} \geq \gamma_{\sigma(i + 1)}, &
		& \text{for all } i \in [N - 1] .
		\end{align}

		\State $ y_{i} = 0 $ for all $ i \in [N] $, $ capacity = 0 $ \Comment{Initializing feasible solution and capacity}
		
		%
		
		\State i = 1
		\While{$ capacity \geq D $}{
			$ y_{\sigma(i) } = 1 $, \;
			$ capacity = capacity +  c_{\sigma(i) } $,\;	
			$ i = i + 1 $\;		
		}
		\EndWhile
		
		\State \Return 
		
		$ \big(y_{i}: i \in [N]\big) $, 
		$ \sum\limits_{i\, \in \, [N] } p_{i}y_{i} $ 
		\Comment{Feasible solution and corresponding value of objective function}
		
		%
		
		\EndIf
		\EndProcedure
	\end{algorithmic}
\end{algorithm}
Observe that due to the condition ($ \sum\limits_{i\, \in \, [N] } c_{i} \geq D $) for the loop to start, it will stop after a finite number of iterations. Next we introduce 
\begin{definition}\label{Pblm Natural LOP Problem}
	The natural linear relaxation of \textsc{Problem} \ref{Pblm Integer Problem}, is given by 
	\begin{problem}[Linear Relaxation, 0-1 Minimization KP]\label{Pblm Integer Problem LOP Relaxation}
		%
		\begin{subequations}\label{Eqn Integer Problem LOP Relaxation}
			\begin{equation}\label{Eqn Integer Problem Objective Function LOP}
			\min \sum\limits_{i\, \in \, [N] } p_{i}\xi_{i} ,
			\end{equation}
			subject to
			\begin{equation}\label{Eqn Integer Problem Capacity Constraint LOP}
			\sum\limits_{i\, \in \, [N] } c_{i}\xi_{i} \geq D,
			\end{equation}
			\begin{align}\label{Eqn Integer Problem Choice Constraint LOP}
			& 0 \leq \xi_{i} \leq 1, &
			& \text{for all } i \,\in \,[N] ,
			\end{align}	
		\end{subequations}
		i.e., the decision variables $ \big(\xi_{i}: i \in [N] \big) $ are are now real-valued. 
	\end{problem}
\end{definition}
Next we introduce a convenient notation and recall a classic result
\begin{definition}\label{Def Zero One sets}
	Let $ \boldsymbol{\xi} = \big(\xi_{i}:i \in [N]\big) $ be a solution of  Problem \ref{Pblm Integer Problem LOP Relaxation} 
	define the index sets
	\begin{align}\label{Eqn Zero One sets}
	& P \defining \big\{i \in [N]: \xi_{i} > 0 \big\}, &
	& Z \defining \big\{i \in [N]: \xi_{i} = 0 \big\} .
	\end{align}
	Define its \textbf{associated integer solution}
	$ \x^{\boldsymbol{\xi}} = \big(x_{i}^{\boldsymbol{\xi}}:i \in [N]\big) $ by
	\begin{equation}\label{Eqn From Simplex to Greedy Solution}
	x_{i}^{\boldsymbol{\xi}} \defining \begin{cases}
	1 & i \in P, \\
	0 & i \in  Z .
	\end{cases}
	\end{equation}
\end{definition}
\begin{theorem}\label{Thm Simplex and Greedy}
	Let $ \boldsymbol{\xi} = \big(\xi_{i}:i \in [N]\big) $ be an optimal solution of Problem \ref{Pblm Integer Problem LOP Relaxation} 
	and let $ \x^{\boldsymbol{\xi}} $ be as in \textsc{Definition} \ref{Def Zero One sets} above.
	%
	Then, $ \x^{\boldsymbol{\xi}} = \big(x_{i}^{\boldsymbol{\xi}}:i \in [N]\big) $ is the solution furnished by the \textsc{Greedy Algorithm} \ref{Alg Greedy Algorithm}.
\end{theorem}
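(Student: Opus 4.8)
The plan is to pin down the support structure of an optimal solution $\boldsymbol{\xi}$ of the relaxed \textsc{Problem} \ref{Pblm Integer Problem LOP Relaxation} and match it against the set of rooms chosen by the \textsc{Greedy Algorithm} \ref{Alg Greedy Algorithm}. Throughout let $\sigma$ be the sorting permutation of \eqref{Eqn Sorting Permutation}, so that $\gamma_{\sigma(1)} \geq \gamma_{\sigma(2)} \geq \cdots \geq \gamma_{\sigma(N)}$, and let $k \in [N]$ be the first index for which the cumulative capacity reaches the demand, i.e. $\sum_{j=1}^{k} c_{\sigma(j)} \geq D$ while $\sum_{j=1}^{k-1} c_{\sigma(j)} < D$ (such $k$ exists by feasibility). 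By construction the greedy output is $y_{\sigma(j)} = 1$ for $j \leq k$ and $y_{\sigma(j)} = 0$ for $j > k$; equivalently its support is $G \defining \{\sigma(1), \ldots, \sigma(k)\}$. It therefore suffices to show that the index set $P$ of \eqref{Eqn Zero One sets} coincides with $G$, since then the vector $\x^{\boldsymbol{\xi}}$ of \eqref{Eqn From Simplex to Greedy Solution} is exactly the indicator of $G$.

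First I would record the fundamental exchange property: in an optimal $\boldsymbol{\xi}$ one cannot have a less efficient room carrying positive weight while a more efficient room is left unsaturated. Concretely, suppose $\gamma_i > \gamma_j$ with $\xi_j > 0$ and $\xi_i < 1$. Shift a little capacity from $j$ to $i$ by setting $\xi_j' = \xi_j - \varepsilon$ and $\xi_i' = \xi_i + (c_j/c_i)\varepsilon$ for $\varepsilon > 0$ small; this leaves $\capac \cdot \boldsymbol{\xi}$ unchanged and stays feasible, while the objective changes by $\varepsilon\, c_j\,(p_i/c_i - p_j/c_j) = \varepsilon\, c_j\,(\gamma_i^{-1} - \gamma_j^{-1}) < 0$, contradicting optimality. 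Hence, using \eqref{Eqn Specific Weight}, every optimal solution satisfies $\xi_i = 1$ whenever some strictly less efficient room has positive weight.

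Next I would use that the capacity constraint \eqref{Eqn Integer Problem Capacity Constraint LOP} is tight at an optimum: since all $p_i > 0$, any slack ($\capac \cdot \boldsymbol{\xi} > D$) could be removed by lowering a positive coordinate, strictly decreasing the cost, so $\capac \cdot \boldsymbol{\xi} = D$. Combined with the exchange property, this forces the profile $\xi_{\sigma(1)} = \cdots = \xi_{\sigma(k-1)} = 1$, a single threshold value $\xi_{\sigma(k)} = \big(D - \sum_{j < k} c_{\sigma(j)}\big)/c_{\sigma(k)} \in (0,1]$, and $\xi_{\sigma(j)} = 0$ for $j > k$. Reading off the positivity pattern yields $P = \{\sigma(1), \ldots, \sigma(k)\} = G$, which is exactly the claim of \textsc{Definition} \ref{Def Zero One sets}.

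The main obstacle is the possibility of ties among the specific weights $\gamma_i$: when several rooms share the critical ratio at the threshold, the relaxation has a whole face of optima, and an arbitrary optimal $\boldsymbol{\xi}$ may split its weight among tied rooms, so that $P$ could strictly contain the greedy set $G$ selected under $\sigma$. I would resolve this either by assuming the $\gamma_i$ are pairwise distinct (so the optimum is unique and the exchange argument pins it down completely), or by restricting attention to the basic (vertex) solution returned by the simplex method, which has at most one fractional coordinate and is consistent with the tie-breaking fixed by $\sigma$; with this caveat the equality $P = G$, and hence $\x^{\boldsymbol{\xi}} = \big(y_i : i \in [N]\big)$, follows.
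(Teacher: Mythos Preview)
Your proof is correct and follows essentially the same route as the paper: an exchange argument showing that any optimal $\boldsymbol{\xi}$ must load rooms in decreasing order of specific weight, combined with tightness of the capacity constraint, so that $P$ coincides with the greedy support. Your exchange property (impossibility of $\gamma_i>\gamma_j$ with $\xi_j>0$ and $\xi_i<1$) is a slightly sharper formulation than the paper's $P$-versus-$Z$ comparison, and your explicit discussion of ties is more careful than the paper, which tacitly assumes the simplex tie-breaking matches $\sigma$.
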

\begin{proof}
	See Theorem 2.1 in \cite{kellerer2005knapsack}.
	\qed
\end{proof}
\begin{remark}\label{Rem Simplex and Greedy}
	It is important to stress that the \textsc{Greedy Algorithm} \ref{Alg Greedy Algorithm} may not produce an optimal solution as the following example shows. Consider \textsc{Problem} \ref{Pblm Integer Problem} for $ D = 40 $ and the following data.\\
	\begin{table}[h!]
		\begin{centering}
			\rowcolors{2}{gray!25}{white}
			\begin{tabular}{cccc}
				\hline
				\rowcolor{gray!50}
				Item & Capacity: $ \capac$ & Price: $ \p $ & Specific Weigh: $ \boldsymbol{\gamma} $ \\
				\hline
				1 & 100 & 4  & 25 \\
				2 & 40 & 2 & 20 \\
				\hline
			\end{tabular}
			\caption{\textsc{Remark} \ref{Rem Simplex and Greedy} Data}
			\label{Tbl Greedy Algorithm Counterexample}
		\end{centering}
	\end{table}
	
	\noindent Clearly, the \textsc{Greedy Algorithm} \ref{Alg Greedy Algorithm} would choose the solution $ \x = \big( 1, 0\big) $ with $ \p\cdot \x = 4 $ while $ \y = \big(0, 1 \big) $ gives $ \p\cdot \y = 2 $ and $ \x $ is not optimal. Moreover, the linear relaxation of this problem would yield $ \boldsymbol{\xi} =  \big( 0.4, 0\big) $ with $ \p\cdot \boldsymbol{\xi} = 1.6 $ and associated integer solution $ \x^{\boldsymbol{\xi}} = \big( 1, 0\big) $ i.e, the solution produced by the \textsc{Greedy Algorithm} \ref{Alg Greedy Algorithm}. 
	%
\end{remark}
%
%
%
%
%
\subsection{Introducing a Price-Capacity Rate}
%
%
In the sequel we adopt a relationship between capacities $ \capac$ and prices $ \p $ as it usually holds in real life scenarios.
\begin{definition}[Rate Price Capacity]\label{Def Rate Proctors Students}
	Let $r\in \big[ 1 ,  \max\limits_{i} c_{i} \big] $ be a fixed price-capacity increase threshold, then
	\begin{align}\label{Eq Rate Proctors Students}
	& p_{i} \defining \Big\lceil \frac{c_{i}}{r} \Big\rceil , &
	& \text{for all } i \in [N] .
	\end{align}
	In the following, we refer to $ r $ as the \textbf{price-capacity rate}. 
\end{definition}
Next we recall the main result of this part
\begin{theorem}\label{Thm Quality of the Greedy Algorithm}
	Let $ \big( c_{i} : i \in [N] \big) $ be a given list of capacities, let the list of prices $ \big( p_{i} : i \in [N] \big) $ be computed by the map \eqref{Eq Rate Proctors Students} and let $ r $ be the price-capacity rate introduced in \textsc{Definition} \ref{Def Rate Proctors Students}. 
	\begin{enumerate}[(i)]
		\item The Greedy Algorithm \ref{Alg Greedy Algorithm} produces the exact solution for $ r \geq \max\limits_{i}c_{i} $.
		
		
		\item Let $ r \vert c_{i} $ for all $ i\in [N] $ (i.e, a common divisor of all the capacities). Then, the effectiveness of the Greedy Algorithm \ref{Alg Greedy Algorithm} is entirely random.
		
		\item Let $ r \vert c_{i} $ for all $ i\in [N] $ (i.e, a common divisor of all the capacities). Let $ \boldsymbol{\xi} = \big(\xi_{i}:i \in [N]\big) $ be an optimal solution of Problem \ref{Pblm Integer Problem LOP Relaxation} furnished by the Simplex Algorithm and let $ \x^{\boldsymbol{\xi}} $ be as in \textsc{Definition} \ref{Def Zero One sets}. Then, $ \x^{\boldsymbol{\xi}}$ is a random element of the set
		\begin{equation}\label{Eq Minimal Cover}
		K \defining \Big\{ \x \in S: \sum\limits_{i\,\in\, A} c_{i}x_{i} < D, \;
		\forall\, A \subsetneq\{i\in [N]: x_{i} = 1\}  \Big\},
		\end{equation}
		where, $ S $ is the set of feasible solutions to \textsc{Problem}\ref{Pblm Integer Problem}, introduced in \textsc{Expression} \eqref{Eqn Feasible Set}.
		
	\end{enumerate}
\end{theorem}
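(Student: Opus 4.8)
The plan is to handle the three parts separately, exploiting in each case the explicit link between the rate $r$, the induced prices $p_{i}=\lceil c_{i}/r\rceil$, and the specific weights $\gamma_{i}=c_{i}/p_{i}$ of \textsc{Definition}~\ref{Def Specific Weight}. For part~(i), I would first observe that $r\geq \max_{i}c_{i}$ forces $0<c_{i}/r\leq 1$, hence $p_{i}=1$ for every $i$ (the capacities being positive integers). The objective \eqref{Eqn Integer Problem Objective Function} then merely counts the selected rooms and $\gamma_{i}=c_{i}$, so the \textsc{Greedy Algorithm}~\ref{Alg Greedy Algorithm} picks rooms in descending order of capacity. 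Optimality follows from a counting bound: if the loop halts after $k$ rooms, the $k-1$ largest capacities already sum to less than $D$, so no feasible selection can use fewer than $k$ rooms, while greedy attains feasibility with exactly $k$.

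For parts~(ii) and~(iii) the pivotal computation is that $r\mid c_{i}$ makes $c_{i}/r$ an integer, whence $p_{i}=c_{i}/r$ and $\gamma_{i}=c_{i}/(c_{i}/r)=r$ for all $i$; every specific weight collapses to the common value $r$. For part~(ii) this means the sorting step \eqref{Eqn Sorting Permutation} accepts every permutation of $[N]$ as a valid descending order, so the rooms actually chosen — and therefore the value of the objective attained — are dictated entirely by the implementation's tie-breaking rule. This is the precise sense in which the effectiveness of the algorithm is \emph{entirely random}.

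For part~(iii), membership $\x^{\boldsymbol{\xi}}\in K$ is already furnished by the proof of \textsc{Theorem}~\ref{Thm Simplex and Greedy}, which shows $\x^{\boldsymbol{\xi}}$ is feasible and that no proper subset of its support meets \eqref{Eqn Integer Problem Capacity Constraint}, exactly the defining condition of $K$ in \eqref{Eq Minimal Cover}. The substance of the claim is the converse: that \emph{every} minimal cover arises as some $\x^{\boldsymbol{\xi}}$. Given $\x\in K$ with support $T$, I would construct a vertex of the relaxation Problem~\ref{Pblm Integer Problem LOP Relaxation} supported exactly on $T$ by setting $\xi_{i}=1$ on all but one index of $T$ and fixing the remaining entry so the capacity constraint \eqref{Eqn Integer Problem Capacity Constraint LOP} is tight at $D$; feasibility of $T$ gives this value $\leq 1$ and minimality gives it $>0$, while \textsc{Proposition}~\ref{Thm Fractional Solutions in Knapsack Problem} certifies the point is a genuine vertex. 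Because $\gamma_{i}\equiv r$, all such vertices share the common optimal value $D/r$, so each is an admissible Simplex output and the particular one returned — equivalently, the element of $K$ realized by $\x^{\boldsymbol{\xi}}$ — is governed by the pivoting rule.

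The main obstacle I anticipate lies in part~(iii): whereas the forward inclusion is immediate from \textsc{Theorem}~\ref{Thm Simplex and Greedy}, the converse demands certifying that the constructed point is an \emph{optimal} vertex rather than merely feasible, and phrasing ``random element'' rigorously — here read as ``any element of $K$ is attainable under some admissible run,'' since all competing vertices are tied in objective value.
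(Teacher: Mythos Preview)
Your proposal is correct and follows essentially the same route as the paper's proof: in (i) both reduce to $p_{i}=1$, $\gamma_{i}=c_{i}$, and a counting argument on the largest-capacity rooms; in (ii) both compute $\gamma_{i}\equiv r$ and conclude the sort is arbitrary; in (iii) both establish the LP value as $D/r$, then for the converse pick one index of the support of $\x\in K$ and set it fractionally so that \eqref{Eqn Integer Problem Capacity Constraint LOP} is tight, invoking \textsc{Proposition}~\ref{Thm Fractional Solutions in Knapsack Problem} and optimality to certify Simplex-eligibility. The only cosmetic difference is that for the forward inclusion $\x^{\boldsymbol{\xi}}\in K$ you delegate to the minimality argument inside the proof of \textsc{Theorem}~\ref{Thm Simplex and Greedy}, whereas the paper re-derives it directly from the activeness of \eqref{Eqn Integer Problem Capacity Constraint LOP} under $p_{i}=c_{i}/r$.
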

\begin{proof}
	See \cite{kellerer2005knapsack}.
	\qed
\end{proof}
\begin{remark}\label{Rem Capacity as Greed Function}
	\begin{enumerate}[(i)]
		\item Observe that if $ r \vert c_{i} $ for all $ i \in [N] $, the \textsc{Problem} \eqref{Pblm Integer Problem} becomes
		\begin{align*}
		& \min\frac{1}{r}\sum\limits_{i\,\in\,[N]} c_{i}x_{i} , & 
		& S = \Big\{ \x \in \{0,1\}^{N} : \sum\limits_{i\,\in\,[N]} c_{i}x_{i} \geq D\Big\} .
		\end{align*}
		Hence, it reduces to a problem of approximating and integer from above using an integer partition of $ \sum\limits_{i}c_{i} $ in $ N $ blocks. 
		
		\item Notice that if $ r = \dfrac{d}{q} $ with $ d $ a common divisor of the capacities, the conclusion of   \textsc{Theorem} \ref{Thm Quality of the Greedy Algorithm} part (ii) holds.
		
		\item Given that the Greedy Algorithm effectiveness becomes entirely random when $ r $ is a common divisor of the capacities, we would like to use another criterion to distinguish the eligible items. To this end, the only possibility is to sort them according to its capacities. However, using the capacity as Greed function may not produce the exact solution as the Greedy Algorithm produced for the case $ r\geq \max\limits_{i} c_{i} $. Consider the following example 
		\begin{table}[h!]
			\begin{centering}
				\rowcolors{2}{gray!25}{white}
				\begin{tabular}{c c c c c c}
					\hline
					\rowcolor{gray!50}
					Item & Capacity & \multicolumn{2}{ c }{$ D = 11 $}  & \multicolumn{2}{ c }{$ D = 15 $}  \\
					\rowcolor{gray!50}
					& $ \capac$ & Greedy Decreasing & Optimal & Greedy Increasing & Optimal\\
					\hline
					1 & 10 & 1 & 0 & 0 & 0 \\
					2 & 9  & 1 & 0 & 0 & 0 \\
					3 & 8  & 0 & 0 & 1 & 1 \\
					4 & 7  & 0 & 1 & 1 & 1\\
					5 & 6  & 0 & 1 & 1 & 0 \\
					\hline
				\end{tabular}
				\caption{\textsc{Remark} \ref{Rem Capacity as Greed Function} Data}
				\label{Tbl Counterexample when Capacity is Greedy Function}
			\end{centering}
		\end{table}

	\end{enumerate}
\end{remark}
%
%
\section{A Divide \& Conquer Approach}
%
%
%
%
In the present section we introduce the Divide and Conquer method together with some theoretical results to assure the successful implementation of the method, from the algorithmic point of view. We begin with the following definition
\begin{definition}[Divide \& Conquer pairs and trees]\label{Def Divide and Conquer Setting}
	\begin{enumerate}[(i)]
		\item Let $ \capac= \big(c_{i}: i \in [N]\big) $ and $ \p = \big(p_{i}: i \in [N]\big) $ be the data associated to \textsc{Problem} \ref{Pblm Integer Problem}. A \textbf{subproblem} of \textsc{Problem} \ref{Pblm Integer Problem} is an integer problem with the following structure
		%
		\begin{align*} 
		& \min \sum\limits_{i\, \in \, A } p_{i}x_{i} , &
		& A\subseteq [N] ,
		\end{align*}
		subject to
		\begin{align*} 
		& \sum\limits_{i\, \in \, A } c_{i}x_{i} \geq D^{A},& 
		& D^{A}\leq D ,
		\end{align*}
		\begin{align*} 
		& x_{i} \in \{0,1\}, &
		& \text{for all } i \,\in \,A .
		\end{align*}	
		%
		
		\item Let $ (A^{0}, A^{1}) $ be a set partition of $ [N] $ and let $ (D^{0}, D^{1}) $ be an integer partition of $ D $ i.e., $ D = D^{0} + D^{1} $. 
		%
		We say a Divide and Conquer instance of \textsc{Problem} \ref{Pblm Integer Problem} is the pair of subproblems $ \big(\Pi^{b}: \, b \in \{0,1\} \big) $, defined by
		\begin{problem}[$ \Pi^{b}, b = 0, 1 $]\label{Pblm DC subproblems}
			%
			\begin{subequations}\label{Eqn Left-Right Integer Subproblem}
				\begin{equation}\label{Eqn Left-Right Integer Subproblem Objective Function}
				\min \sum\limits_{i\, \in \, A^{b} } p_{i}x_{i} ,
				\end{equation}
				subject to
				\begin{equation}\label{Eqn Left-Right Integer Subproblem Capacity Constraint}
				\sum\limits_{i\, \in \, A^{b} } c_{i}x_{i} \geq D^{b},
				\end{equation}
				\begin{align}\label{Eqn Left-Right Integer Subproblem Choice Constraint}
				& x_{i} \in \{0,1\}, &
				& \text{for all } i \,\in \,A^{b} .
				\end{align}	
			\end{subequations}
		\end{problem}
		In the sequel, we refer to $ \big(\Pi^{b}, b = 0, 1 \big) $ as a \textbf{D\&C pair}. Defining
		\begin{align*}
		& c^{b}_{i} \defining \begin{cases}
		c_{i} & i\in A^{b},\\
		0 & i \notin A^{b} , 
		\end{cases} &
		& p^{b}_{i} \defining \begin{cases}
		p_{i} & i\in A^{b},\\
		0 & i \notin A^{b} , 
		\end{cases}
		\end{align*}
		the corresponding feasible sets and the D\&C pair can be respectively written
		\begin{align}\label{Pblm Concise DC subproblems}
		z^{b}_{*} \defining\min\big\{\p^{b}\cdot \y: \y\in S^{b} \} &
		& \text{with} &
		& S
		^{b} \defining \big\{ \y\in \{0,1\}^{N}: \capac^{b} \cdot \y \geq D^{b} \big\} , &
		\end{align}
		where $ z^{b}_{*} $ denotes the optimal solution value of the problem $ \Pi^{b} $.	
		
		\item 
		A \textbf{D\&C tree} (see \textsc{Figures} \ref{Fig Tree Generated by Head-Left Algorithm}, \ref{Fig Tree Generated by Head-Left Algorithm Biased} and \ref{Fig Tree Generated by Balanced Left-Right Algorithm} below) for \textsc{Problem} \ref{Pblm Integer Problem} is a binary tree satisfying the following 
		\begin{enumerate}[(a)]
			\item Every \textbf{vertex} of the tree is in \textbf{bijective correspondence} with a \textbf{subproblem} of \textsc{Problem} \ref{Pblm Integer Problem}.
			
			\item The \textbf{root} of the tree is associated with \textsc{Problem} \ref{Pblm Integer Problem} itself.
			
			\item Every \textbf{internal vertex} $ V $ (which is not a leave) has a \textbf{left} and \textbf{right children}, $ V_{l}, V_{r} $ respectively, whose associated subproblems make a D\&C pair for the subproblem associated to $ V $. 
		\end{enumerate} 
		%
		%
		%
	\end{enumerate}
\end{definition}
\begin{remark}\label{Rem D&C trees' vertex and subproblems identification}
	\begin{enumerate}[(i)]
		\item Observe that, due to property (iii) a D\&C tree is, in particular, a \textbf{complete binary tree} (see \cite{GrossYellen} pg 127). 
		
		\item In the same way that in the knapsack problem the eligible items are identified with their corresponding labels $ j = 1, \ldots, N $, from now on, in order to ease notation, we \textbf{identify} every vertex of a D\&C tree with its associated subproblem. More specifically, a vertex/node of a D\&C tree will also act as the label of a subproblem of \textsc{Problem} \ref{Pblm Integer Problem}. Given that the vertex-subproblem assignment is a bijective map, such identification will introduce no confusion, 
		see \textsc{Figure} \ref{Fig Tree Generated by Head-Left Algorithm}-\textsc{Table} \ref{Tbl Tree Generated by Head-Left Algorithm} and \textsc{Figure} \ref{Fig Tree Generated by Head-Left Algorithm Biased}-\textsc{Table} \ref{Tbl Balanced Tree for Particular Example} for concrete examples; see also \textsc{Figure} \ref{Fig Tree Generated by Balanced Left-Right Algorithm} below.
	\end{enumerate}
\end{remark}
\begin{theorem}\label{Thm DC feasibility}
	Suppose that \textsc{Problem} \ref{Pblm Integer Problem} is feasible, then
	\begin{enumerate}[(i)]
		\item A feasible solution $ \y $ of \textsc{Problem} \ref{Pblm Integer Problem} can be infeasible for at most one problem of the D\&C pair. 
		
		\item At most one problem of the D\&C pair is infeasible.
		
		\item Let $ \big(A^{b}: b \in \{0,1\}\big) $ be a fixed partition of $ [N] $ then, both \textsc{Problems} \ref{Pblm DC subproblems}, $ \big(\Pi^{b}: b \in \{0,1\} \big) $ are feasible if and only if 
		\begin{align}\label{Stmt Distribution Feasibily DC}
		& D - \sum\limits_{ i\, \in \, A^{1 - b} } c_{i} \leq
		D^{b} \leq \sum\limits_{ i\, \in \, A^{b} } c_{i}, & 
		& \text{for } b = 0, 1.
		\end{align} 
		\item Let $ \big(A^{b}: b \in \{0,1\}\big) $ be a fixed partition of $ [N] $ and define
		\begin{align}\label{Eqn Demand Fraction}
		& D^{0} \defining  \bigg\lfloor \frac{D}{\sum\{c_{i}: i \in [N]\} }
		\sum\limits_{i \, \in \, A^{0}}c_{i}\bigg\rfloor , &
		& D^{1} \defining D - D^{0} .
		\end{align}
		Then, if 
		\begin{equation}\label{Ineq  Control on the Complement}
		\frac{ D }{\sum\{c_{i}: i \in [N] \}  } \, \sum\limits_{i\,\in\, A^{1}}c_{i}+ 1 
		\leq \sum\limits_{i\,\in\, A^{1}}c_{i} \, , 
		\end{equation}
		both \textsc{Problems} \ref{Pblm DC subproblems}, $ \big(\Pi^{b}: b \in \{0,1\} \big) $ are feasible.
		
		\item The following inclusions for the feasible sets $ S^{0}, S^{1}, S $ hold
		\begin{align}\label{Eqn Feasible Set Inclusions}
		& S \subseteq S^{0} \cup S^{1} , &
		& S^{0} \cap S^{1} \subseteq S .
		\end{align}
	\end{enumerate}
\end{theorem}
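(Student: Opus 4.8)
The plan is to establish part (v) first, because parts (i) and (ii) are immediate consequences of it, and then to dispatch the feasibility characterisations (iii) and (iv). Everything rests on one elementary observation: since $(A^{0}, A^{1})$ partitions $[N]$, the truncated capacity vectors satisfy $\capac = \capac^{0} + \capac^{1}$, whence $\capac\cdot\y = \capac^{0}\cdot\y + \capac^{1}\cdot\y$ for every $\y \in \{0,1\}^{N}$. For the inclusion $S \subseteq S^{0}\cup S^{1}$ I would take $\x \in S$ and argue by contradiction: if $\x \notin S^{0}$ and $\x \notin S^{1}$ then $\capac^{0}\cdot\x < D^{0}$ and $\capac^{1}\cdot\x < D^{1}$, and adding these while recalling $D = D^{0}+D^{1}$ yields $\capac\cdot\x < D$, contradicting $\x \in S$. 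For $S^{0}\cap S^{1}\subseteq S$, any $\x$ lying in both sets satisfies $\capac^{0}\cdot\x\geq D^{0}$ and $\capac^{1}\cdot\x\geq D^{1}$, and summing gives $\capac\cdot\x\geq D$. No integrality is needed here.

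Parts (i) and (ii) then follow with no extra work. Part (i) is merely a restatement of $S\subseteq S^{0}\cup S^{1}$: a feasible $\y\in S$ belongs to at least one $S^{b}$, so it can fail the capacity constraint of at most one subproblem of the pair. For part (ii), feasibility of \textsc{Problem}~\ref{Pblm Integer Problem} supplies some $\x\in S$; by part (v) this $\x$ lies in $S^{0}\cup S^{1}$, so at least one feasible set is nonempty and hence at most one subproblem is infeasible.

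For part (iii) the preliminary fact is that $\Pi^{b}$ is feasible precisely when $D^{b}\leq\sum_{i\in A^{b}}c_{i}$, since the largest value attainable by $\sum_{i\in A^{b}}c_{i}x_{i}$ over $\{0,1\}^{N}$ is $\sum_{i\in A^{b}}c_{i}$, realised by turning on every room in $A^{b}$. Thus \emph{both} subproblems are feasible exactly when $D^{0}\leq\sum_{i\in A^{0}}c_{i}$ and $D^{1}\leq\sum_{i\in A^{1}}c_{i}$. The point I would stress is that the lower bound $D-\sum_{i\in A^{1-b}}c_{i}\leq D^{b}$ appearing in \eqref{Stmt Distribution Feasibily DC} is, after substituting $D = D^{0}+D^{1}$, nothing but the feasibility inequality $D^{1-b}\leq\sum_{i\in A^{1-b}}c_{i}$ of the \emph{other} subproblem, while the upper bound $D^{b}\leq\sum_{i\in A^{b}}c_{i}$ is the feasibility inequality of $\Pi^{b}$ itself. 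Hence the four inequalities of \eqref{Stmt Distribution Feasibily DC} collapse to the two feasibility conditions, each written twice, which gives the claimed equivalence.

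Finally, for part (iv) it suffices by (iii) to verify the two upper bounds for the specific choice \eqref{Eqn Demand Fraction}. Writing $C\defining\sum_{i\in[N]}c_{i}$, feasibility of \textsc{Problem}~\ref{Pblm Integer Problem} forces $D\leq C$, so $D/C\leq 1$ and $D^{0}=\lfloor\frac{D}{C}\sum_{i\in A^{0}}c_{i}\rfloor\leq\frac{D}{C}\sum_{i\in A^{0}}c_{i}\leq\sum_{i\in A^{0}}c_{i}$, which is the first condition. For the second I would use $\lfloor t\rfloor > t-1$ to obtain $D^{1}=D-\lfloor\frac{D}{C}\sum_{i\in A^{0}}c_{i}\rfloor < D-\frac{D}{C}\sum_{i\in A^{0}}c_{i}+1 = \frac{D}{C}\sum_{i\in A^{1}}c_{i}+1$, and then the hypothesis \eqref{Ineq  Control on the Complement} bounds the right-hand side by $\sum_{i\in A^{1}}c_{i}$, so $D^{1}<\sum_{i\in A^{1}}c_{i}$. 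I expect the only subtle point to be the bookkeeping in (iii) — seeing that the lower bounds are the complementary subproblem's feasibility conditions rather than independent constraints; once the identity $\capac=\capac^{0}+\capac^{1}$ is in hand every inequality is a one-line manipulation, and the floor estimate in (iv) is the lone genuine computation.
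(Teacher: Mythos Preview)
Your proof is correct and essentially identical in content to the paper's. The only difference is organisational: you establish part (v) first and read off (i)--(ii) as corollaries, whereas the paper proves (i) directly (by the same contradiction argument you use for $S\subseteq S^{0}\cup S^{1}$), uses the all-ones vector for (ii), and only then invokes (i) to obtain the first inclusion of (v); the underlying manipulations, the feasibility characterisation in (iii), and the floor estimate in (iv) are the same line for line.
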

\begin{proof}
	\begin{enumerate}[(i)]
		\item Let $ \y $ be a feasible solution of \textsc{Problem} \ref{Pblm Integer Problem}, then $ \sum\limits_{i\,\in\, [N]} c_{i}y_{i}  \geq D $; equivalently
		\begin{equation*}
		\sum_{b \, \in \, \{0, 1\}}\sum\limits_{i\,\in\, A^{b}} c_{i}y_{i} \geq D^{0} + D^{1} .
		\end{equation*}
		Hence, if $ \y $ is $ \Pi^{b} $-infeasible we have $ \sum\limits_{i\,\in\, A^{b}} c_{i}y_{i} < D^{b}  $ and the expression above writes
		\begin{equation*}
		\sum\limits_{i\,\in\, A^{1 - b}} c_{i}y_{i} \geq D^{1 - b} + 
		D^{b} - \sum\limits_{i\,\in\, A^{b}} c_{i}y_{i} > D^{1 - b},
		\end{equation*}
		i.e., $ \y $ is $ \Pi^{1 - b} $-feasible. Since $ b \in \{0, 1\} $ was arbitrary, the claim of this part follows.
		
		\item Since \textsc{Problem} \ref{Pblm Integer Problem} is feasible, the vector $ \y \in \{0,1\}^{N} $ having all its entries equal to one is also feasible, due to the previous part the result follows.
		
		\item Fix $ b\in \{0,1\} $ arbitrary, then it is trivial to see that the second inequality in \eqref{Stmt Distribution Feasibily DC} is necessary and sufficient condition for the problem $ \Pi^{b} $ to be feasible, as well as the condition $ D^{1 - b} \leq \sum\limits_{ i\, \in \, A^{1 - b} } c_{i} $ is necessary and sufficient for $ \Pi^{1 - b} $ to be feasible. Recalling that $ D^{b} = D - D^{1 - b} $, the first inequality in \eqref{Stmt Distribution Feasibily DC} follows.
		
		\item Since \textsc{Problem} \ref{Pblm Integer Problem} is feasible then $ D \leq \sum\{c_{i}: i \in [N]\}  $, therefore
		\begin{equation*}
		D^{0} = \bigg\lfloor \frac{D}{\sum\{c_{i}: i \in [N]\} }\sum\limits_{i \, \in \, A^{0}}c_{i}\bigg\rfloor
		\leq \bigg\lfloor \sum\limits_{i \, \in \, A^{0}}c_{i}\bigg\rfloor =  \sum\limits_{i \, \in \, A^{0}}c_{i}, 
		\end{equation*}
		i.e., the problem $ \Pi^{0} $ is feasible. 
		On the other hand,
		\begin{equation*}
		D^{0} = \bigg\lfloor \frac{D}{\sum\{c_{i}: i \in [N]\} }\sum\limits_{i \, \in \, A^{0}}c_{i}\bigg\rfloor
		\geq \frac{D}{\sum\{c_{i}: i \in [N]\} }\sum\limits_{i \, \in \, A^{0}}c_{i} - 1 .
		\end{equation*}
		Since $ D^{1} = D - D^{0} $, we have
		\begin{align*}
		D^{1} & \leq D - \frac{D}{\sum\{c_{i}: i \in [N]\} }\sum\limits_{i \, \in \, A^{0}}c_{i} + 1 \\
		& = \frac{D}{\sum\{c_{i}: i \in [N]\} } \sum\limits_{i \, \in \, A^{1}}c_{i} + 1 \\
		& \leq \sum\limits_{i\, \in\, A^{1}}c_{i} ,
		\end{align*}
		where the last bound holds due to \textsc{Inequality} \ref{Ineq  Control on the Complement}.
		Hence, the problem $ \Pi^{1} $ is also feasible.
		
		\item Due to the first part, if $ \y \in S $ then it must be $ \Pi^{0} $ or $ \Pi^{1} $-feasible. Equivalently, it belongs to $ S ^{0} $ or $ S^{1} $, i.e. $ \y \in S^{0} \cup S^{1} $.
		
		Finally, if $ \y \in S^{0} \cap S^{1} $ then $ \sum\limits_{i\,\in\, A^{b}} c_{i}y_{i} \geq D^{b} $ for $ b = 0, 1 $. Adding both inequalities yields
		\begin{equation*}
		\capac\cdot \y = \sum\limits_{i\,\in\, [N]} c_{i}y_{i} =
		\sum\limits_{i\,\in\, A^{0}} c_{i}y_{i}  
		+\sum\limits_{i\,\in\, A^{1}} c_{i}y_{i} \geq 
		D^{0} + D^{1} = D,
		\end{equation*}
		i.e., $ y $ belongs to the set $ S $ and the proof is complete.
		\qed
	\end{enumerate}
\end{proof}
\begin{remark}\label{Rem Demand Computation}
	Observe that \textsc{Inequality} \ref{Ineq  Control on the Complement} in (iv) from \textsc{Theorem} \ref{Thm DC feasibility}, is a mild hypothesis. It is equivalent to
	\begin{equation}\label{Ineq Equivalent Control on the Complement}
	D
	\leq \frac{\sum \{c_{i}: i\,\in\, A^{1}\} - 1}{\sum \{c_{i}: i\,\in\, A^{1}\}} 
	\sum\limits_{i\, \in\, [N]} c_{i} ,
	\end{equation}
	i.e., \textsc{Inequality} \ref{Ineq  Control on the Complement} demands a reasonable slack $  \sum\limits_{i\, \in\, [N]} c_{i} - D $  between total capacity and demand.
\end{remark}
%
%
\begin{proposition}\label{Thm Feasibility of DC}
	Let $ \capac= \big(c_{i}: i \in [N]\big) $, $ \p = \big(p_{i}: i \in [N]\big) $ be the data associated to \textsc{Problem} \ref{Pblm Integer Problem} and let $ \big(\Pi^{0}, \Pi^{1}\big) $ be a D\&C pair. 
	\begin{enumerate}[(i)]
		
		\item Let $ \x $ be an optimal solution to \textsc{Problem} \ref{Pblm Integer Problem} and let $ \y^{0}, \y^{1} $ be optimal solutions to \textsc{Problems} \ref{Pblm DC subproblems} $ \Pi^{0}, \Pi^{1} $-respectively. Then
		\begin{equation}\label{Ineq DC optimal pair}
		z_{*} = \sum\limits_{i\,\in \, [N]} p_{i} x_{i} 
		\leq \sum\limits_{j\,\in \, A^{0}} p_{j} y_{j}^{0} + 
		\sum\limits_{j\,\in \, A^{1}} p_{j} y_{j}^{1} = z^{0}_{*} + z^{1}_{*},   
		\end{equation}
		where $ z_{*}, z^{0}_{*}, z^{1}_{*} $ denote the optimal solution values for the problems \ref{Pblm Integer Problem}, $ \Pi^{0} $ and $ \Pi^{1} $ respectively. 
		
		\item Let $ \x $ be an optimal solution to \textsc{Problem} \ref{Pblm Integer Problem} which is both $ \Pi^{b} $ and $ \Pi^{1 - b} $-feasible, then $ \x $ is a $ \Pi^{b} $ and $ \Pi^{1-b} $-optimal solution.
	\end{enumerate}
\end{proposition}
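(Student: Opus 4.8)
The plan is to prove both parts by the same device: splicing $\{0,1\}$-vectors across the partition $(A^0,A^1)$ and exploiting the additive split $D=D^0+D^1$. For part (i), I would assemble a feasible point of \textsc{Problem} \ref{Pblm Integer Problem} out of the two subproblem optima. Define $\z\in\{0,1\}^{N}$ by $z_i \defining y^0_i$ for $i\in A^0$ and $z_i\defining y^1_i$ for $i\in A^1$; this is well defined precisely because $(A^0,A^1)$ partitions $[N]$. Feasibility of $\y^0$ and $\y^1$ for $\Pi^0,\Pi^1$ gives $\sum_{i\in A^0}c_i z_i\geq D^0$ and $\sum_{i\in A^1}c_i z_i\geq D^1$, so $\z\in S^0\cap S^1$ and hence $\z\in S$ by the inclusion in \textsc{Theorem} \ref{Thm DC feasibility}(v). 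Since $\x$ is optimal for \textsc{Problem} \ref{Pblm Integer Problem} and $\z$ is feasible, $\p\cdot\x\leq\p\cdot\z$, and the objective of $\z$ splits additively along the partition into $\sum_{j\in A^0}p_j y^0_j+\sum_{j\in A^1}p_j y^1_j$, which is \textsc{Inequality} \eqref{Ineq DC optimal pair}.

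For part (ii) I would argue by contradiction, keeping $\x$ fixed on one block and replacing it on the other. Suppose $\x$ is not $\Pi^{b}$-optimal; then there is a $\Pi^{b}$-feasible $\y^{b}$ with $\sum_{i\in A^{b}}p_i y^{b}_i<\sum_{i\in A^{b}}p_i x_i$. Form $\z$ by setting $z_i\defining y^{b}_i$ on $A^{b}$ and $z_i\defining x_i$ on $A^{1-b}$. The $A^{b}$-block of $\z$ meets $D^{b}$ by feasibility of $\y^{b}$, while the $A^{1-b}$-block meets $D^{1-b}$ because $\x$ is assumed $\Pi^{1-b}$-feasible; thus $\z\in S^{b}\cap S^{1-b}=S^0\cap S^1\subseteq S$, so $\z$ is feasible for \textsc{Problem} \ref{Pblm Integer Problem}. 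Comparing objectives block by block, $\p\cdot\z<\p\cdot\x$, contradicting optimality of $\x$. Hence $\x$ is $\Pi^{b}$-optimal, and the symmetric argument exchanging $b$ with $1-b$ yields $\Pi^{1-b}$-optimality.

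The arguments are short, and I expect no genuine obstacle; the only point needing care is the book-keeping across the partition. The two structural facts doing all the work are that $(A^0,A^1)$ is a true partition of $[N]$, so that a spliced vector is a well-defined element of $\{0,1\}^{N}$ and both the capacity sum $\capac\cdot\z$ and the cost $\p\cdot\z$ decompose additively over the blocks, and that the demand splits as $D=D^0+D^1$, so that two block-wise capacity inequalities add up to the global constraint. In effect both parts are corollaries of the inclusion $S^0\cap S^1\subseteq S$ from \textsc{Theorem} \ref{Thm DC feasibility}(v): in each case the spliced vector $\z$ is built to lie in $S^0\cap S^1$ and hence in $S$, after which optimality of $\x$ for \textsc{Problem} \ref{Pblm Integer Problem} closes the argument.
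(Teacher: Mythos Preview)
Your proposal is correct and essentially identical to the paper's own proof: both parts proceed by splicing a vector across the partition $(A^{0},A^{1})$, invoking the inclusion $S^{0}\cap S^{1}\subseteq S$ from \textsc{Theorem}~\ref{Thm DC feasibility}(v), and then using optimality of $\x$; part (ii) is the same contradiction argument with the same spliced competitor. The only differences are cosmetic (the paper names the spliced vector $\y$ rather than $\z$).
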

\begin{proof}
	\begin{enumerate}[(i)]
		\item Since $ \y^{b} $ is an optimal solution of $ \Pi^{b} $, define the vector $ \y \in \{0,1\}^{N} $ by
		\begin{equation*}
		y_{i} \defining \begin{cases}
		y^{0}_{i} & i \in A^{0}, \\
		y^{1}_{i} & i \in A^{1}.
		\end{cases}
		\end{equation*}
		Then, $ \p\cdot \y =  \sum\limits_{j\,\in \, A^{0}} p_{j} y_{j}^{0} + 
		\sum\limits_{j\,\in \, A^{1}} p_{j} y_{j}^{1} $ and $ y $ is both $ \Pi^{0} $ and $ \Pi^{1} $-feasible i.e., $ \y \in S^{0}\cap S^{1} $. Recalling the feasible sets inclusion \eqref{Eqn Feasible Set Inclusions} and that $ \x $ is optimal, we have $ \p\cdot \x = \min \big\{\p \cdot \boldsymbol{\xi}: \boldsymbol{\xi}\in S \big\} \leq \p\cdot \y $ i.e., the result follows.
		\item Let $ \x $ be an optimal solution to \textsc{Problem} \ref{Pblm Integer Problem} which is also $ \Pi^{b} $-feasible for $ b \in \{0,1\} $ fixed. Suppose that $ \x $ is not an optimal solution of \textsc{Problem} $ \Pi^{b} $ and let $ \y^{b} $ be its optimal solution, therefore $ \sum\limits_{j\,\in \, A^{b}} p_{j} y_{j}^{b} < \sum\limits_{j\,\in \, A^{b}} p_{j} x_{j}^{b} $. Define $ \y \in \{0,1\}^{N} $ by
		\begin{equation*}
		y_{i} \defining \begin{cases}
		y^{b}_{i} & i \in A^{b} , \\
		x_{i} & i \in A^{1 - b} .
		\end{cases}
		\end{equation*}
		Observe that 
		\begin{equation*}
		\capac\cdot \y = 
		\sum\limits_{j\,\in \, A^{b}} c_{j} y_{j}^{b} + 
		\sum\limits_{j\,\in \, A^{1 - b}} c_{j} x_{j} \geq
		D^{b} + D^{1 - b} .
		\end{equation*}
		Here, the inequality holds because $ \y^{b} $ is $ \Pi^{b} $-feasible and $ \x $ is $ \Pi^{1 - b} $-feasible. Therefore, $ \y $ is feasible for \textsc{Problem} \ref{Pblm Integer Problem}; but then
		\begin{equation*}
		\p\cdot \y = 
		\sum\limits_{j\,\in \, A^{b}} p_{j} y_{j}^{b} + 
		\sum\limits_{j\,\in \, A^{1 - b}} p_{j} x_{j} < 
		\sum\limits_{j\,\in \, A^{b}} p_{j} x_{j} + 
		\sum\limits_{j\,\in \, A^{1 - b}} p_{j} x_{j} =
		\p\cdot \x
		\end{equation*}
		and $ \x $ would not be an optimal solution, which is a contradiction. Since the above holds for any $ b \in \{0, 1\} $ the proof is complete.	
		\qed
	\end{enumerate}
\end{proof}
\begin{remark}\label{Rem Subproblems Optimal Solutions}
	Notice that in \textsc{Proposition} \ref{Thm Feasibility of DC} (ii) the hypothesis requiring the optimal solution $ \x $ being both $ \Pi^{b} $ and $ \Pi^{1 - b} $-feasible can not be relaxed as the following example shows. Consider \textsc{Problem} \ref{Pblm Integer Problem} for $ D = 150 $ and the following data
	\begin{table}[h!]
		\begin{center}
			\rowcolors{2}{gray!25}{white}
			\begin{tabular}{ccc}
				\hline
				\rowcolor{gray!50}
				Item & Capacity: $ \capac$ & Price: $ \p $\\
				\hline
				1 & 100 & 2 \\
				2 & 50 & 1 \\
				3 & 100 & 2 \\
				4 & 50 & 1 \\
				\hline
			\end{tabular}\caption{\textsc{Remark} \ref{Rem Subproblems Optimal Solutions} Data}
		\end{center}
	\end{table}

	\noindent An optimal solution is given by $ \x = \big( 1, 1, 0, 0\big) $ with $ \p\cdot\x = 3 $. Consider $ A^{0} \defining \{1, 2 \} $,  $ A^{1} \defining \{ 3, 4 \} $ with $ D^{0} = 50, D^{1} = 100 $. Then, $ \x $ is $ \Pi^{0} $-feasible but it is not $ \Pi^{1} $-feasible, moreover $ \x $ is not $ \Pi^{0} $-optimal because $ \y = \big(0, 1, 1, 0\big) $ is $ \Pi^{0} $-feasible and 
	\begin{equation*}
	\sum\limits_{ i\, \in\, A^{0} } p_{i}y_{i} =
	1 < 3 =
	\sum\limits_{ i\, \in\, A^{0} } p_{i}x_{i}.		
	\end{equation*}
	Consequently, the optimal solution has to be both $ \Pi^{0} $, $ \Pi^{1} $-feasible to guarantee that \textsc{Proposition} \ref{Thm Feasibility of DC} (ii) holds. \\
	
	\noindent On the other hand if we take the previous setting but replacing $ D^{0} = 60 $, $ D^{1} = 90 $, then $ \y^{0} = \big( 1, 0, 0, 0\big) $, $ \y^{1} = \big( 0, 0, 1, 0\big) $ are $ \Pi^{0} $ and $ \Pi^{1} $ optimal solutions, however
	\begin{equation*}
	\sum\limits_{ i\, \in\, A^{0} } p_{i}x_{i} =  
	2 + 1 < 
	2 + 2 =
	\sum\limits_{ i\, \in\, A^{0} } p_{i}y_{i}^{0} +
	\sum\limits_{ i\, \in\, A^{1} } p_{i}y_{i}^{1} ,		
	\end{equation*}
	i.e., a global optimal solution can not be derived from the local solutions of the D\&C pair.	
	Finally, if we choose $ A^{0} = \{ 1, 2, 3 \} $, $ A^{1} = \{ 4 \} $, $ D^{0} = 90 $, $ D^{1} = 60 $, the problem $ \Pi^{1} $ is not feasible.
\end{remark}
\begin{remark}\label{Rem Computational Time}
	The introduction of a D\&C pair is of course aimed to reduce the computational complexity of the \textsc{Problem} \ref{Pblm Integer Problem} given that \textsc{Problem} \ref{Pblm DC subproblems} $ \big(\Pi^{b}\big) $ can be regarded as a problem in $ \{0,1\}^{\vert A^{b} \vert} $ with $ b\in \{0, 1\} $, instead of a problem in  $ \{0,1\}^{N} $, which reduces the order of complexity (see \textsc{Section} \ref{Sec Computational Time} for details). However, from the discussion above, it follows that the choice of $ D^{0}, D^{1} $ is crucial when designing the pair $ \big( \Pi^{0}, \Pi^{1}\big) $. Ideally, \textsc{Inequality} \eqref{Ineq DC optimal pair} would be an equality for the optimal solutions $ \x $, $ \y^{0} $, $ \y^{1} $, this observation motivates the definition \ref{Def DC efficiency} introduced below.
\end{remark}
\begin{definition}\label{Def DC efficiency}
	Let $ \capac= \big(c_{i}: i \in [N]\big) $ and $ \p = \big(p_{i}: i \in [N]\big) $ be the data associated to \textsc{Problem} \ref{Pblm Integer Problem}. Let $ \big(A^{b}: b \in \{0,1\}\big) $,  $ \big(D^{b}: b \in \{0,1\}\big) $ be partitions of $ [N] $ and $ D $ respectively
	\begin{enumerate}[(i)]		
		\item We say the demands are \textbf{partition-dependent} if both satisfy the relationship \eqref{Eqn Demand Fraction} and we denote this dependence by
		\begin{align}\label{Def Fraction Demands Notation}
		& D^{b} = D^{b}(A^{0}, \, A^{1}), & 
		& b = 0, 1.
		\end{align}
		\item The D\&C pair $ \big(\Pi^{b}: b\in \{0,1\}\big) $ is said to be a \textbf{feasible pair} if both \textsc{Problems} \ref{Pblm DC subproblems} are feasible.
		
		\item If the D\&C pair $ \big(\Pi^{b}: b\in \{0,1\}\big) $ is feasible we define its \textbf{efficiency} as
		\begin{equation}\label{Eqn DC efficiency}
		\eff\big((A^{0}, A^{1}), (D^{0}, D^{1}) \big) \defining 100 \times
		\frac{z_{*}^{0} + z_{*}^{1} - z_{*}}{ z_{*} },
		\end{equation}
		where $ z_{*}, z^{0}_{*}, z^{1}_{*} $ denote the optimal solution values for the problems \ref{Pblm Integer Problem}, $ \Pi^{0} $ and $ \Pi^{1} $ respectively. 
		
		\item Given $ \A \defining \big(A^{j}: j \in [J] \big) $ and $ \D \defining \big(D^{j}: j \in [J] \big) $ be partitions of $ [N] $ and $ D $ respectively such that $ \Pi^{j} $ is feasible for all $ j \in [J] $, then, the its associated efficiency is defined by 
		\begin{equation}\label{Eqn Partition efficiency}
		\eff\big(\A, \D \big) \defining 100\times
		\frac{ \sum \{z_{*}^{j}: j\, \in\, [J]\} - z_{*}}{z_{*} } ,
		\end{equation}
		where $ z_{*} $ is the optimal solution value for \textsc{Problem} \ref{Pblm Integer Problem} and $ z_{*}^{j} $ indicates the optimal solution value for the subproblem  analogous to \textsc{Problem} \ref{Pblm Integer Problem}, whose input data are the demand $ D_{j} $ and items $ A^{j} $. 
	\end{enumerate}
\end{definition}
\begin{remark}
	Notice that for any feasible pair, it holds that $ 0 < \eff\big((A^{0}, A^{1}), (D^{0}, D^{1}) \big) $, due to \textsc{Inequality} \eqref{Ineq DC optimal pair}. 
	Additionally, the notion of efficiency that we are defining is nothing but the relative error introduced by the D\&C approximation of the solution. Finally, for general partitions $ \A $ and $ \D $, 
	an inequality analogous to \eqref{Ineq DC optimal pair} can be derived using induction on the cardinal of $ \A $.  
\end{remark}
Before introducing the definition of efficiency for D\&C trees we recall a classic definition from Graph Theory (see \textit{Section 2.3} in \cite{GrossYellen})
\begin{definition}\label{Def Subtree Definition}
	Let $ T = (V, E) $ be a tree and let $ U \subseteq V $ be a subset of vertices. The \textbf{subtree induced} on $ U $, denoted by $ T(U) $, is the tree whose vertices are $ U $ and whose edge-set consists on all those edges in $ E $ such that both endpoints are contained in $ U $.
	
	%
\end{definition}
\begin{definition}\label{Def DC Tree efficiency}
	Let $ \capac= \big(c_{i}: i \in [N]\big) $, $ \p = \big(p_{i}: i \in [N]\big) $ be the data associated to \textsc{Problem} \ref{Pblm Integer Problem}, let $ DCT $ be a D\&C tree associated. Let $ H $ be the height and $ V_{0} $ be the root of the DCT tree, where $ V_{0} $ is associated to the original problem \ref{Pblm Integer Problem} itself.
	\begin{enumerate}[(i)]
		\item The tree $ DCT $ is said to be \textbf{feasible} if all its nodes are feasible problems.
		
		\item Let  $ h \in [H] $ arbitrary, the \textbf{tree pruned} at height $ h $ is given by
		\begin{multline}\label{Eqn Pruned Tree}
		DCT_{h} = \text{subtree of } DCT \text{ induced on the set }\\
		\{V \text{vertex of } DCT: \text{ height} (V) \leq h\} .
		\end{multline}
		We denote by $ L(DCT_{h}) $ the set of leaves of the tree $ DCT_{h} $ i.e., those vertices whose degree is equal to one.
		
		\item We say that a set of leaves $ L(DCT_{h}) $ for a given $ h \in [H] $ is an \textbf{instance} of the D\&C approach applied to the problem \ref{Pblm Integer Problem}.   
		
		\item 
		Let $ DCT $ be feasible with $ H$, the \textbf{global} and \textbf{stepwise efficiencies} of the tree are defined by
		\begin{subequations}
			\begin{align}\label{Eqn Global Efficiencies}
			& GbE(h) \defining 100 \times
			\frac{ \sum \{z_{*}^{V}: V \in L(DCT_{h}) \} - z_{*}^{V_{0}} }{ z_{*}^{V_{0}} }, &
			& h \in [H] .
			\end{align}
			\begin{multline}\label{Eqn Stepwise Efficiencies}
			SwE(h) \defining 100\times
			\frac{\sum \{ z_{*}^{V}: V \in L(DCT_{h}) \}- 
				\sum \{ z_{*}^{V}: V \in L(DCT_{h - 1}) \} }
			{\sum \{ z_{*}^{V}: V \in L(DCT_{h - 1}) \} }, \\
			h \in [H] - \{0\} .
			\end{multline}
			Here, $ z_{*}^{V} $ indicates the optimal solution of the problem associated to the vertex $ V $ in the $ DCT $ tree and $ L(DCT_{h}) $ stands for the set of leaves in the tree $ DCT_{h} $, (see \textsc{Figures} \ref{Fig Tree Generated by Head-Left Algorithm}, \ref{Fig Tree Generated by Head-Left Algorithm Biased} and \ref{Fig Tree Generated by Balanced Left-Right Algorithm} below). 
		\end{subequations}

		\item 
		Let $ DCT $ be feasible with $ H$, the \textbf{global} and \textbf{stepwise relative computational times} of the tree are defined by
		\begin{subequations}
			\begin{align}\label{Eqn Global Times}
			& GbT(h) \defining 100 \times
			\frac{ \sum \{t^{V}: V \in L(DCT_{h}) \} }{ t^{V_{0}} }, &
			& h \in [H] .
			\end{align}
			\begin{align}\label{Eqn Stepwise Times}
			& SwT(h) \defining 100\times
			\frac{\sum \{ t^{V}: V \in L(DCT_{h}) \} \} }
			{\sum \{ t^{V}: V \in L(DCT_{h - 1}) \} }, &
			h \in [H] - \{0\} .
			\end{align}
			Here, $ t^{V} $ indicates the absolute computational time needed for the solution of the vertex $ V $ in the $ DCT $ tree and $ L(DCT_{h}) $ stands for the set of leaves in the tree $ DCT_{h} $, (see \textsc{Figures} \ref{Fig Tree Generated by Head-Left Algorithm}, \ref{Fig Tree Generated by Head-Left Algorithm Biased} and \ref{Fig Tree Generated by Balanced Left-Right Algorithm} below). 
		\end{subequations}
	\end{enumerate}
\end{definition}
Next we prove that the definition \ref{Def DC Tree efficiency} above makes sense.
\begin{theorem}\label{Thm Partitions in Tree}
	Let $ DCT $ be a D\&C tree with height $ H $, root $ V_{0} $ and let $ DCT_{h} $, $ L(DCT_{h}) $ for $ h \in [H] $ be as in \textsc{Definition} \ref{Def DC Tree efficiency} (ii) above. Then, $ \{A^{V}: V\in L(DCT_{h}) \} $ is a partition of $ [N] $, where $ A^{V} $ is the set of eligible items for the subproblem associated to the node $ V $. 
\end{theorem}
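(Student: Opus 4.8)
The plan is to argue by induction on the pruning height $h$, showing that at every level the room sets attached to the leaves of $DCT_h$ tile $[N]$ without overlap. The single structural fact I will rely on is built into \textsc{Definition} \ref{Def Divide and Conquer Setting}: whenever a vertex $V$ is not a leaf of $DCT$, its children $V_l, V_r$ form a D\&C pair for the problem $V$, so $\big(A^{V_l}, A^{V_r}\big)$ is a set partition of $A^V$; that is, $A^{V_l}\cup A^{V_r} = A^V$ and $A^{V_l}\cap A^{V_r} = \emptyset$. This is the only property of the tree I need beyond its recursive construction.

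For the base case I take the root level $DCT_0$, whose only vertex is \textsc{Problem} \ref{Pblm Integer Problem} itself with room set $[N]$; then $\{A^V : V\in L(DCT_0)\} = \{[N]\}$ is trivially a partition of $[N]$. For the inductive step I assume $\{A^V: V\in L(DCT_h)\}$ is a partition of $[N]$ and pass from $DCT_h$ to $DCT_{h+1}$. The key intermediate claim is a clean description of how the leaf set changes: each leaf $V$ of $DCT_h$ is of exactly one of two types. Either $V$ is already a leaf of the full tree $DCT$, in which case it has no children to restore and it remains a leaf of $DCT_{h+1}$ with the same room set $A^V$; or $V$ is an internal vertex of $DCT$ of height exactly $h$, in which case pruning at height $h+1$ reinstates its two children $V_l, V_r$ (both of height $h+1$), so $V$ ceases to be a leaf and $V_l, V_r$ become the new leaves in its place. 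Hence $L(DCT_{h+1})$ is obtained from $L(DCT_h)$ by leaving every leaf of the first type untouched and replacing each leaf $V$ of the second type by the pair $V_l, V_r$.

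It then remains to read this off at the level of room sets. Leaving a leaf untouched keeps its block unchanged, whereas replacing $V$ by $\{V_l, V_r\}$ replaces the block $A^V$ by the two blocks $A^{V_l}, A^{V_r}$, which by the structural fact above are disjoint and have union $A^V$. Thus the new collection $\{A^V: V\in L(DCT_{h+1})\}$ is obtained from the partition $\{A^V: V\in L(DCT_h)\}$ by subdividing some of its blocks into two disjoint pieces, an operation that manifestly preserves both mutual disjointness and coverage of $[N]$. Therefore $\{A^V: V\in L(DCT_{h+1})\}$ is again a partition of $[N]$, closing the induction.

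The only delicate point — and the step I expect to be the main obstacle — is the combinatorial bookkeeping of the leaf set under pruning: one must verify that every leaf of $DCT_h$ falls into exactly one of the two types above, that no vertices other than the reinstated children newly appear or disappear as leaves when passing to $DCT_{h+1}$, and that the children restored for a height-$h$ internal node are precisely the two members of the governing D\&C pair. Once this description of $L(DCT_{h+1})$ is pinned down, invoking the partition property of each D\&C pair together with the stability of partitions under block subdivision finishes the argument immediately.
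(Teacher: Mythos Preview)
Your proof is correct and follows essentially the same approach as the paper: both arguments pass from one pruning level to the next by splitting the current leaves into those that persist (already leaves of the full tree) and those that get replaced by their two children, and then invoke the D\&C-pair property to see that each such replacement refines the partition. The only cosmetic difference is that you induct directly on the pruning height $h$ within a fixed tree, whereas the paper inducts on the height $H$ of the tree and applies the hypothesis to the pruned subtree $DCT_k$; the inductive step is identical in substance.
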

\begin{proof}
	We proceed by induction on the height of the tree. For $ H = 0 $ the result is trivial and for $ H = 1 $ the tree merely consists of $ V_{0} $ and its left and right children $ V_{l}, V_{r} $ which by definition, are associated to a D\&C pair for \textsc{Problem} \ref{Pblm Integer Problem}; in particular, the sets $ A^{0} $, $ A^{1} $ are a partition of $ [N] $. Now assume that the result is true for $ H \leq  k $ and let $ DCT $ be such that its height is $ k + 1 $. Consider $ DCT_{k} $ and $ L(DCT_{k}) $, given that the result is true for heights less or equal than $ k $ we have that $ \{A^{V}: V\in L(DCT_{k})\} $ is a partition of $ [N] $. We classify this set as follows
	\begin{equation}\label{Eqn Leaves at heigh k}
	\{A^{V}: V\in L(DCT_{k})\} = 
	\{A^{V}: V\in L(DCT_{k})\cap L(DCT_{k + 1})\} \cup 
	\{A^{V}: V\in L(DCT_{k}) -  L(DCT_{k + 1})\} .
	\end{equation}
	However, if $ V  \in L(DCT_{k}) -  L(DCT_{k + 1}) $ it means that its left and right children $ V_{l}, V_{r} $ belong to $ L(DCT_{k + 1}) $. Moreover, since $ (V_{l}, V_{r}) $ are associated to a D\&C pair for the subproblem associated to $ V $, then $ (A^{V_{l}}, A^{V_{r}}) $ is a partition of $ A^{V} $, i.e., 
	\begin{multline}\label{Eqn Leaves at heigh k + 1}
	\{A^{V}: V\in L(DCT_{k}) -  L(DCT_{k + 1})\} = \\
	\{A^{V_{l}}: V_{l} \text{ left child of } V\in L(DCT_{k}) -  L(DCT_{k + 1})\} \cup\\
	\{A^{V_{r}}: V_{r} \text{ right child of } V\in L(DCT_{k}) -  L(DCT_{k + 1})\} .
	\end{multline}
	Putting together \textsc{Expressions} \eqref{Eqn Leaves at heigh k} and \eqref{Eqn Leaves at heigh k + 1} the result follows.
	\qed
\end{proof}
\begin{remark}\label{Rem Consistency of DCT efficiency definition}
	Clearly, due to \textsc{Theorem} \ref{Thm Partitions in Tree} a set of leaves $ L(DCT_{h}) $ for $ h \in [H] $ is a potential instance of the D\&C method applied to \textsc{Problem} \ref{Pblm Integer Problem} as the definition \ref{Def DC Tree efficiency} (iii) states. It is also direct to see that the global and stepwise efficiencies $ GbE(h) $ and $ SwE(h) $ respectively, introduced in (iv) \textsc{Definition} \ref{Def DC Tree efficiency} compute the ratios adding the solution values found for different partitions of the set of eligible items.
\end{remark}
In view of the previous discussion a natural question is how to choose D\&C efficiency-optimal pairs (at least for one step and not for a full D\&C tree) however, allowing complete independence between the pairs $ (A^{0}, A^{1}) $ and $ (D^{0}, D^{1}) $ i.e., the partitions of $ [N] $ and $ D $ respectively would introduce an overwhelmingly vast search space. 
Consequently, from now on, we limit our study to partition-dependent demands, see \textsc{Definition} \ref{Def DC efficiency} (i). \\

\noindent In the next section several ways to generate partitions $ (A^{0}, A^{1}) $ will be introduced, which will be regarded as \textbf{strategies} to implement the D\&C approach. However, other strategies will be explored, such as the price-capacity rate $ r $ and the demand-capacity fraction $ o \defining \frac{1}{D}\, \sum\{ c_{i}: i\,\in\,[N]\} $. These are related to the problem setting (availability of resources), rather than the choice of D\&C pairs. The assessment of all the aforementioned strategies will be done using Monte Carlo simulations, when the list of capacities is regarded as a random variable ($ \upc $ instead of $ \capac$) with known probabilistic distribution.

\section{Strategies and Heuristic Method}\label{Sec The Heuristic Method}
%
%
Since no theoretical results can be found so far for the Divide and Conquer method, its efficiency has to be determined empirically. To that end, numerical experiments will be conducted with randomly generated data, according to classical discrete distributions. Next, several strategies will be evaluated in these settings (see \textsc{Figure} \ref{Fig Branch Strategies Tree}). It is important to stress that the type of strategies, as well as their potential values (numerical in most of the cases) presented here, were chosen in order to simulate \textbf{plausible instances} of the initial problem rather than \textbf{arbitrary instances} of \textsc{Problem} \ref{Pblm Integer Problem}. 
%
%
%
\subsection{Random Setting}\label{Sec Random Setting}
%
%
A \textsc{Random Setting Algorithm} generates lists of eligible items according to certain parameters defined by the user, namely the number of items, the distribution of its capacities (Uniform, Poisson, Binomial) and the demand-capacity fraction $ o $; which will range between $ 0.5 $ and $  0.9 $, this will guarantee the hypotheses of (iv) \textsc{Theorem} \ref{Thm DC feasibility} are satisfied. If $ \upc $ denotes the random variable having the capacity of the eligible items, the code uses the following parameters for the distributions
\begin{enumerate}[(i)]
	\item \textbf{Uniform}. Range sizes $ [40, 120]\cap \N $ i.e., 
	$ \prob (\upc = n) = \dfrac{1}{80} $, for $ n \in [40, 120]\cap \N $.
	
	\item \textbf{Poisson}. Average $ \lambda = 65 $, then 
	$ \prob( \upc = n ) = \dfrac{1}{\exp(\lambda)} \,\dfrac{\lambda^{n}}{n!} $, for $ n \in \N \cup \{0\} $.
	\item \textbf{Binomial}. Sample space $ [480] $, success probability, $ p = 0.2 $, i.e., $ \prob (\upc = n) = { 480 \choose n} p^{n} (1 - p)^{480 - n} $, for $ n \in [0, 480] \cap \N  $.
\end{enumerate}
An example of 4 realizations, each consisting in 8 eligible items, uniformly distributed with demand-capacity fraction of 0.9 is displayed in \textsc{Table} \ref{Tbl Uniform Random Setting Example} below. The \textsc{Random Setting Algorithm} \ref{Alg Random Setting Algorithm}, produces a table analogous to \textsc{Table} \ref{Tbl Uniform Random Setting Example}
. 
\begin{algorithm} 
	\caption{Random Setting Algorithm}
	\label{Alg Random Setting Algorithm}
	\begin{algorithmic}[1]
		\Procedure{Random Setting}{Items' Number: $ n $, 
			Probabilistic Distribution: $ d $, 
			Demand: $ o $,
			Number of Trials: $ t $ }
		\Function{Random Generation}{ $ d, n $ }
		\If{ $ d = $ Uniform }
		\State $ items = $ list of $ n $ random items uniformly distributed on the interval $ [40, 120] $ \Comment{Python command: numpy.random.randint(low = 40, high = 120, size = $ n $)}  
		\ElsIf{ $ d = $ Poisson }
		\State $ items = $ list of $ n $ random items, Poisson distributed with average $ 65 $
		\Comment{Python command: numpy.random.poisson(65, size = $ n $)}  
		\Else
		\State $ items = $ list of $ n $ random items, binomially distributed on the interval $ [0, 480] $ with success probability $ p = 0.2 $
		\Comment{Python command: np.random.binomial(480, 0.2, $ n $)} 
		\EndIf
		\State \Return $ items $
		\EndFunction
		\State  $ Eligible\_Items =  \emptyset $ \Comment{Initialize the $ Eligible\_Items $ table}
		\For{$ trial \leq t $}
		\State \textsc{Random Generation}($ d, n $) $ \rightarrow  Eligible\_Items $\Comment{Push list of $ n $ randomly generated items as a column of the $ Eligible\_Items $ table.}
		\EndFor 
		\State $ last\_row = o\times \sum\limits_{i = 1}^{n} row_{i} $ with $ row_{i} $ $i - $th row of  $ Eligible\_Items $
		\Comment{Computing the demand fraction}
		\State $ last\_row\rightarrow Eligible\_Items $ \Comment{Push $ last\_row $ as the last row or $ Eligible\_Items $}
		\State Export $ Eligible\_Items $ \Comment{In this work, to the file \textbf{Eligible\_Items.xls}.}
		\EndProcedure
	\end{algorithmic}
\end{algorithm}
\begin{table}[h!]
	\begin{centering}
		\rowcolors{2}{gray!25}{white}
		\begin{tabular}{cccccc}
			\rowcolor{gray!50}
			\hline
			Item & Realization 1 & Realization 2 & Realization 3 & Realization 4 & Realization 5 
			\\
			\hline
			0 & 113	& 47    & 84	& 58	& 53 
			\\
			1 & 54	 & 67	 & 119	& 49	& 104 
			\\
			2 & 95	 & 65	 & 64	 & 109	& 119 
			\\
			3 & 89	 & 95	 & 91	 & 78	 & 61 
			\\
			4 & 85	 & 72	 & 94	 & 72	 & 56 
			\\
			5 & 87	 & 60	 & 62	 & 70	 & 94 
			\\
			6 & 76	 & 110	& 71	& 73	& 118 
			\\
			7 & 105	 & 108 & 51	   & 49	   & 72	
			\\
			\hline
			$ \sum_{\, i\, = 0}^{7} 
			c_{i} 
			$ 
			&
			704 & 	624	& 636 &	558	& 677 
			\\
			$ D $ & 	633 & 561 & 572 & 502 & 609 
			\\
			\hline		
		\end{tabular}
		\caption{Example of Random Setting Data: 5 Realizations, 8 eligible items with uniformly distributed capacity and 0.9 demand-capacity fraction}
		\label{Tbl Uniform Random Setting Example}
	\end{centering}
\end{table}
\begin{remark}\label{Rem Set of Rooms Size, N}
	\begin{enumerate}[(i)]
		\item 
		Since the successive application of the D\&C approach generates binary trees, for practical reasons, the numerical experiments will have a power of two (i.e., $ N = 2^{k} $ for some $ k \in \N $) as the number of eligible items.
		
		\item When generating a D\&C tree we want to distribute the demand between left and right children according to the relation \eqref{Eqn Demand Fraction}. Then, the
		inequality \eqref{Ineq Equivalent Control on the Complement} (equivalent to the hypothesis \eqref{Ineq  Control on the Complement} of part (iv) \textsc{Theorem} \ref{Thm DC feasibility}) must be satisfied. To that end a demand-capacity fraction $ o\in \{ 0.5, 0.55, 0.6, \ldots, 0.9\} $, furnishes a reasonable domain for numerical experimentation.
	\end{enumerate}
\end{remark}
%
%
%
\subsection{Tree Generation}
%
%
There will be two ways of generating a D\&C tree. Every vertex $ V $ of the tree, is associated with a subproblem analogous to \textsc{Problem} \ref{Pblm Integer Problem}, whose input is $ (A^{V}, D^{V}) $, with $ A^{V}\subseteq [N] $ a subset of items and an assigned demand $ D^{V} $. Denote by $ V , V_{l}, V_{r} $ a vertex together with its left and right children respectively and by $  \vert A^{V} \vert , A^{V _{l}}\vert, \vert A^{V_{r}}\vert $ the corresponding cardinals. The trees are constructed using the Left Pre-Order i.e., the stack has the structure $ [\text{root}, \text{left-child}, \text{right-child}] $ (see \textit{Algorithm 3.3.1} in\cite{GrossYellen} for details). The assigned demands to the left and right children will be given by the \textsc{Expression} \eqref{Eqn Demand Fraction}. All the difference between the algorithms is the way left-child and right-child are defined.     
\begin{enumerate}[I.]
	\item \textbf{First Case: Head-Left Subtree.} Select the following parameters 
	\begin{enumerate}[(i)]
		\item Select a sorting criterion: Specific Weight $ \boldsymbol{\gamma} $, Capacity $ \capac$, Prices $ \p $ or Random.
		
		
		\item Select a fraction for the head-left subtree, i.e. $ f \in [0, 1] $. 
		
		\item Define the minimum number of items in a subproblem, i.e. the quantity items in the subproblems associated to a leaf of the D\&C tree, namely $ m = 1, 2,... $, etc.
		
	\end{enumerate}
	%
	%
	Once the list of eligible items is sorted according to criterion $ s \in \{ \p, \capac, \boldsymbol{\gamma}\} $, the list of items $ A^{V_{l}} $ assigned to the left-child $V_{l}$ is defined as the first items of the list $ A^{V} $ such that $ \vert A^{V_{l}} \vert = \lfloor f\times \vert A^{V} \vert \rfloor $ i.e., the head of the list. The list assigned to the right-child is defined as the complement of that assigned to the left-child i.e., $ A^{V_{r}} \defining A^{V} - A^{V_{l}} $. The left and right demands are computed according to \textsc{Equation} \eqref{Eqn Demand Fraction}. The tree is constructed recursively as \textsc{Algorithm} \ref{Alg Head-Left Tree Generation} shows.
	\begin{algorithm} 
		\caption{Head-Left Subtree Algorithm, returns a D\&C tree }
		\label{Alg Head-Left Tree Generation}
		\begin{algorithmic}[1]
			\Procedure{Head-Left Subtree Generator}{Items' List.
				Prices: $ \p $, Capacities: $ \capac$,\newline
				Demand: $ D $. 
				Sorting: $s \in \{ \p ,\capac, \boldsymbol{\gamma}, \text{random}\} $,
				Head-left subtree fraction: $ f \in [0, 1] $,
				Minimum list size: $ m \in [1, \# \text{Items' List}]\cap \N $
			}
			\If {$ s = \gamma $ } \Comment{Asking if is necessary to compute specific weight}
			\State \textbf{compute} list of specific weights $ \big(\gamma_{i}: i \in [N]\big) $ 
			\Comment{Introduced in \textsc{Definition} \ref{Def Specific Weight}.}
			\EndIf
			\State 
			$ V_{0} = $ \textbf{sorted} (Items' List) according to chosen criterion $ s $\newline
			\State $ V \defining V_{0} $ \Comment{Initializing the root of the  D\&C tree}
			\State $ \text{D\&C tree} = \emptyset $ \Comment{Initializing D\&C tree as empty list}
			\State\Call{Branch}{ $ V , f , m, D, \capac$, \text{D\&C tree} } \Comment{Calling the \textsc{Branch} function of Algorithm \ref{Alg Branch Function}}
			\EndProcedure
		\end{algorithmic}
	\end{algorithm}
	\begin{algorithm} 
		\caption{Function Branch (Subroutine for Algorithm \ref{Alg Head-Left Tree Generation}) }
		\label{Alg Branch Function}
		\begin{algorithmic}[1]
			\Procedure{Branch Function}{List of Items: $ V $, 
				Head-left subtree fraction: $ f\in [0,1] $, 
				Minimum size list $ m $, 
				Demand: D, 
				Capacities: $ \capac $,
				Divide \& Conquer Tree: \text{D\&C tree}.
			}
			\Function{Branch}{ $ V , f , m, D, \capac$, \text{D\&C tree} }
			\If {$ \vert V \vert > m $}
			\State $ V \rightarrow \text{D\&C tree}  $
			\Comment{Push list $ V $ as node of the D\&C tree}
			\State $ lcs = \big\lfloor f \times \vert V \vert \big\rfloor $\Comment{Defining the size of the left child}
			\State $ V_{l} = \big(R_{i}:  1\leq  i \leq lcs \big) $ \Comment{Computing the left child}
			\State $ D_{l} = \Big\lfloor
			\dfrac{ \sum \{c_{i}: 1 \, \leq \, i \, \leq \, lcs \}}{\sum\{ c_{i}: 1 \, \leq \, i \, \leq \, \vert V \vert\} } \,
			D \Big\rfloor $
			\Comment{Computing the left demand}
			\State \textsc{Branch}( $ V_{l}, f, m, D_{l}, \capac_{l} \defining (c_{i}: 1 \, \leq \, i \, \leq \, lcs ) $) 
			\Comment{Recursing for the left subtree}
			\State $ V_{r} = \big(R_{i}:  lcs <  i \leq \vert V \vert \big) $\Comment{Computing the right child}
			\State $ D_{r} \defining D - D_{l} $ \Comment{Computing the right demand}
			\State \textsc{Branch}( $ V_{r}, f, m, D_{r}, \capac_{r} \defining (c_{i}: lcs \, \leq \, i \, \leq \, \vert V \vert ) ) $) \Comment{Recursing for the right subtree}
			\State \Return D\&C tree
			\Else 
			\State $ V \rightarrow \text{D\&C tree}  $
			\Comment{Push list $ V $ as node of the D\&C tree}
			\State \Return D\&C tree
			\EndIf
			\EndFunction
			\EndProcedure
		\end{algorithmic}
	\end{algorithm}
	In the table \ref{Tbl Tree Generated by Head-Left Algorithm} below, we present a binary tree for the first column of \textsc{Table} \ref{Tbl Uniform Random Setting Example} (Realization 1), with the following parameters: sorting by specific weight ($s =  \boldsymbol{\gamma} $), $ f = 0.5 $, $ m = 2 $; \textsc{Figure} \ref{Fig Tree Generated by Head-Left Algorithm} shows its graphic representation. Finally, \textsc{Figure} \ref{Fig Tree Generated by Head-Left Algorithm Biased} depicts a tree generated for the same realization, but with parameters $s =  \boldsymbol{\gamma} $, $ f = 0.4 $ and $ m = 2 $; the corresponding table is omitted.
	\begin{table}[h!]
		\begin{centering}
			\rowcolors{2}{gray!25}{white}
			\begin{tabular}{c | ccccccc}
				\rowcolor{gray!50}
				\hline
				\diagbox{Item}{Vertex} & $ V_{0} $ & $ V_{1} $ & $ V_{2} $ & $ V_{3} $ & $ V_{4} $ & $ V_{5} $ & $ V_{6} $
				\\
				\hline
				1 &	1 &	1 & 1 & 0 & 0 & 0 & 0 \\				
				7 & 1 & 1 &	1 & 0 & 0 & 0 & 0 \\			
				2 &	1 & 1 & 0 & 1 & 0 & 0 & 0 \\			
				3 &	1 & 1 & 0 & 1 & 0 & 0 & 0 \\		
				5 &	1 &	0 & 0 & 0 &	1 &	1 & 0 \\	
				4 &	1 &	0 & 0 & 0 &	1 &	1 & 0 \\
				6 &	1 &	0 & 0 & 0 &	1 &	0 & 1 \\
				0 &	1 &	0 & 0 & 0 &	1 &	0 & 1 \\
				\hline
				$ D $ & 633 & 309 & 144 & 165 & 324 & 155 & 169\\
				\hline		
			\end{tabular}
			\caption{\textsc{Algorithm} \ref{Alg Head-Left Tree Generation} tree generated for Realization 1 of \textsc{Table} \ref{Tbl Uniform Random Setting Example}. Parameters: sorting by specific weight $ \boldsymbol{\gamma} $, left subtree fraction $ f = 0.5 $, minimum size leaf $ m = 2 $.}
			\label{Tbl Tree Generated by Head-Left Algorithm}
		\end{centering}
	\end{table}
	\begin{figure}
		\centering
		\begin{tikzpicture}
		[scale=.9,auto=left,every node/.style={}]
		\node (n0) at (6,4) {
			$ 
			\begin{pmatrix} 
			A_{0} = [1, 7, 2, 3, 5, 4, 6, 0]\\[3pt]
			D_{0} = 633
			\end{pmatrix} \mapsto V_{0}$
		};
		\node (n1) at (2,2)  {$ 
			\begin{pmatrix}
			A_{1} = [1, 7, 2, 3]\\[3pt]
			D_{1} = 309
			\end{pmatrix} \mapsto V_{1}$};
		\node (n2) at (0,0)  {$ 
			\begin{pmatrix}
			A_{2} = [1, 7]\\[3pt]
			D_{2} = 144
			\end{pmatrix} \mapsto V_{2}$};
		\node (n3) at (4,0)  {$ 
			\begin{pmatrix}
			A_{3} = [2,3] \\[3pt]
			D_{3} = 165
			\end{pmatrix} \mapsto V_{3} $};
		\node (n4) at (10,2)  {$ 
			\begin{pmatrix}
			A_{4} = [5, 4, 6, 0] \\[3pt]
			D_{4} = 324
			\end{pmatrix} \mapsto V_{4} $};
		\node (n5) at (8,0)  {$ 
			\begin{pmatrix}
			A_{5} = [5,4] \\[3pt]
			D_{5} = 155
			\end{pmatrix} \mapsto V_{5} $};
		\node (n6) at (12,0)  {$ 
			\begin{pmatrix}
			A_{6} = [6, 0] \\[3pt]
			D_{6} = 169
			\end{pmatrix}  \mapsto V_{6} $};
		
		\foreach \from/\to in {n0/n1, n1/n2,n1/n3, n0/n4, n4/n5, n4/n6}
		\draw (\from) -- (\to);
		\end{tikzpicture}
		%
		\caption{\textsc{Algorithm} \ref{Alg Head-Left Tree Generation} D\&C tree generated for Realization 1 of \textsc{Table} \ref{Tbl Uniform Random Setting Example}. The tree is consistent with \textsc{Table} \ref{Tbl Tree Generated by Head-Left Algorithm}. Parameters: sorting by specific weight ($ s = \boldsymbol{\gamma} $), left subtree fraction $ f = 0.5 $, minimum size leaf $ m = 2 $. Every vertex $ V_{i} $ has associated a subproblem analogous to \textsc{Problem} \ref{Pblm Integer Problem}, whose input data are the demand $ D_{i} $ and the sorted list of eligible items $ A_{i} $ (together with its corresponding lists of capacities and prices).}
		\label{Fig Tree Generated by Head-Left Algorithm}
	\end{figure}

	%
	%
	\begin{figure}
		\centering
		\begin{tikzpicture}
		[scale=.9,auto=left,every node/.style={}]
		\node (n1) at (6,4) {$ 
			\begin{pmatrix}
			A_{0} = [1, 7, 2, 3, 5, 4, 6, 0]\\[3pt]
			D_{0} = 633 
			\end{pmatrix} \mapsto V_{0}$};
		\node (n2) at (2,2)  { $ 
			\begin{pmatrix}
			A_{1} =  [1, 7, 2]\\[3pt]
			D_{1} = 229
			\end{pmatrix} \mapsto V_{1} $ };
		\node (n3) at (10,2)  {$ 
			\begin{pmatrix}
			A_{2} = [3, 5, 4, 6, 0]\\[3pt]
			D_{2} = 404
			\end{pmatrix} \mapsto V_{2} $};
		\node (n4) at (8,0)  {$ 
			\begin{pmatrix}
			A_{3} = [3, 5] \\[3pt]
			D_{3} = 159
			\end{pmatrix} \mapsto V_{3} $ };
		\node (n5) at (12,0)  {$ 
			\begin{pmatrix}
			A_{4} = [6, 0]\\[3pt]
			D_{4} = 245
			\end{pmatrix} \mapsto V_{4} $ };
		
		\foreach \from/\to in {n1/n2,n1/n3, n3/n4,n3/n5}
		\draw (\from) -- (\to);
		\end{tikzpicture}
		%
		\caption{\textsc{Algorithm} \ref{Alg Head-Left Tree Generation} D\&C tree generated for Realization 1 of \textsc{Table} \ref{Tbl Uniform Random Setting Example}. Parameters: sorting by specific weight ($ s = \boldsymbol{\gamma} $), left subtree fraction $ f = 0.4 $, minimum size leaf $ m = 2 $. Every vertex $ V_{i} $ has associated a subproblem analogous to \textsc{Problem} \ref{Pblm Integer Problem}, whose input data are the demand $ D_{i} $ and the sorted list of eligible items $ A_{i} $ (together with its corresponding lists of capacities and prices).}
		\label{Fig Tree Generated by Head-Left Algorithm Biased}
	\end{figure}
	%
	%
	\item \textbf{Second Case: Balanced Left-Right Subtrees.} Select the same parameters as in the previous case except for the fraction head $ f\in [0,1] $ since this will be $ 0.5 $ by default. Once the list of eligible items is sorted according to criterion $ s \in \{ \p, \capac, \boldsymbol{\gamma}\} $, the list of items $ A^{V_{l}} $ assigned to the left-child $ V_{l} $ is defined as the items in even positions on the sorted list $ A^{V} $. The items $ A^{V_{r}} $ assigned to the right-child, is defined as the complement of those assigned to the left-child i.e., $ A^{V_{r}} \defining A^{V} - A^{V_{l}} $ i.e., the left and right lists of items are as balanced as possible, according to $ s $. The left and right demands are computed according to \textsc{Equation} \eqref{Eqn Demand Fraction}.
	%
	%
	%
	%
	%
	Again, the tree is constructed recursively as the \textsc{Algorithm} \ref{Alg Balanced-Left Tree Generation} shows. In \textsc{Table} \ref{Tbl Balanced Tree for Particular Example} below, we present a binary tree for the first column (Realization 1) of \textsc{Table} \ref{Tbl Uniform Random Setting Example}, with the following parameters: sorting by specific weight ($s =  \boldsymbol{\gamma} $), $ m = 2 $; its graphic representation is displayed in \textsc{Figure} \ref{Fig Tree Generated by Balanced Left-Right Algorithm}.
	\begin{algorithm} 
		\caption{Balanced Left-Right Subtrees Algorithm, returns a D\&C tree}
		\label{Alg Balanced-Left Tree Generation}
		\begin{algorithmic}[1]
			\Procedure{Balanced Left-Right Subtrees Generator}{Items' List.
				Prices: $ \p $, Capacities: $ \capac$,\newline
				Demand: $ D $. 
				Sorting: $s \in \{ \p ,\capac, \boldsymbol{\gamma}, \text{random}\} $,
				Minimum list size: $ m \in [1, \# \text{Items' List}]\cap \N $
			}
			\If {$ s = \gamma $ } \Comment{Initializing the root of the  D\&C tree}
			\State \textbf{compute} list of specific weights $ \big(\gamma_{i}: i \in [N]\big) $ 
			\Comment{Introduced in \textsc{Definition} \ref{Def Specific Weight}.}
			\EndIf
			\State 
			$ V_{0} = $ \textbf{sorted} (Items' List) according to chosen criterion $ s $\newline
			\State $ V \defining V_{0} $ \Comment{Initializing the root of the  D\&C tree}
			\State \text{D\&C tree} $  DCT = \emptyset $ \Comment{Initializing D\&C tree as empty list}
			\Function{Branch}{ $ V , m, D, \capac$ }
			\If {$ \vert V \vert > m $}
			\State $ V \rightarrow DCT  $
			\Comment{Push list $ V $ as node of the D\&C tree}
			\State $ V_{l} = \big(R_{i}:  1\leq  i \leq \vert V \vert, \, i \text{ even } \big) $ 
			\Comment{Computing the left child}
			\State $ D_{l} = \Big\lfloor
			\dfrac{ \sum \{c_{i}: 1\leq  i \leq \vert V \vert, \, i \text{ even } \}}{\sum\{ c_{i}: 1\leq  i \leq \vert V \vert\} } \,
			D \Big\rfloor $
			\Comment{Computing the left demand}
			\State \textsc{Branch}( $ V_{l}, m, D_{l}, \capac_{l} \defining (c_{i}: 1\leq  i \leq \vert V \vert, \, i \text{ even }  ) $) 
			\Comment{Recursing for the left subtree}
			\State $ V_{r} = \big(R_{i}:  1 \leq  i \leq \vert V \vert , \, i \text{ odd } \big) $
			\Comment{Computing the right child}
			\State $ D_{r} \defining D - D_{l} $ \Comment{Computing the right demand}
			\State \textsc{Branch}( $ V_{r}, m, D_{r}, \capac_{r} \defining (1\leq  i \leq \vert V \vert, \, i \text{ odd }  ) ) $) \Comment{Recursing for the right subtree}
			\State \Return $ DCT $\Comment{return the D\&C tree}
			\Else 
			\State $ V \rightarrow DCT  $
			\Comment{Push list $ V $ as node of the D\&C tree}
			\State \Return $ DCT $\Comment{return the D\&C tree}
			\EndIf
			\EndFunction
			\EndProcedure
		\end{algorithmic}
	\end{algorithm}
	\begin{table}[h!]
		\begin{centering}
			\rowcolors{2}{gray!25}{white}
			\begin{tabular}{c | ccccccc}
				\rowcolor{gray!50}
				\hline
				\diagbox{Item}{Vertex} & $ V_{0} $ & $ V_{1} $ & $ V_{2} $ & $ V_{3} $ & $ V_{4} $ & $ V_{5} $ & $ V_{6} $
				\\
				\hline
				1 &	1 & 1 & 1 & 0 & 0 & 0 & 0\\			
				7 &	1 & 0 & 0 & 0 & 1 &	1 & 0 \\
				2 &	1 & 1 & 0 & 1 & 0 & 0 & 0 \\		
				3 &	1 & 0 & 0 & 0 &	1 & 0 & 1 \\
				5 &	1 &	1 & 1 &	0 & 0 & 0 & 0 \\		
				4 &	1 & 0 & 0 & 0 & 1 & 1 & 0 \\
				6 &	1 &	1 & 0 &	1 & 0 & 0 & 0 \\			
				0 &	1 & 0 & 0 & 0 &	1 & 0 &	1 \\
				$ D $ &	633 & 281 &	127 & 154 &	352 & 171 &	181 \\
				\hline		
			\end{tabular}
			\caption{\textsc{Algorithm} \ref{Alg Balanced-Left Tree Generation} tree generated for Realization 1 of \textsc{Table} \ref{Tbl Uniform Random Setting Example}. Parameters: sorting by specific weight $ \boldsymbol{\gamma} $, minimum size leaf $ m = 2 $.}
			\label{Tbl Balanced Tree for Particular Example}
		\end{centering}
	\end{table}
\end{enumerate}
\begin{figure}
	\centering
	\begin{tikzpicture}
	[scale=.9,auto=left,every node/.style={}]
	\node (n0) at (6,4) {
		$ 
		\begin{pmatrix}
		A_{0} = [1, 7, 2, 3, 5, 4, 6, 0] \\[3pt]
		D_{0} = 633
		\end{pmatrix} \mapsto V_{0} $
	};
	\node (n1) at (2,2)  {$ 
		\begin{pmatrix}
		A_{1} = [1, 2, 5, 6]\\[3pt]
		D_{1} = 281
		\end{pmatrix} \mapsto V_{1} $};
	\node (n2) at (0,0)  {$ 
		\begin{pmatrix}
		A_{2} = [1, 5]\\[3pt]
		D_{2} = 127
		\end{pmatrix} \mapsto V_{2} $};
	\node (n3) at (4,0)  {$ 
		\begin{pmatrix}
		A_{3} =  [2,6]\\[3pt]
		D_{3} = 154
		\end{pmatrix} \mapsto V_{3} $};
	\node (n4) at (10,2)  {$ 
		\begin{pmatrix}
		A_{4} = [7, 3, 4, 0]\\[3pt]
		D_{4} = 352
		\end{pmatrix} \mapsto V_{4} $};
	\node (n5) at (8,0)  {$ 
		\begin{pmatrix}
		A_{5} =  [7,4]\\[3pt]
		D_{5} = 171
		\end{pmatrix} \mapsto V_{5} $};
	\node (n6) at (12,0)  {$ 
		\begin{pmatrix}
		A_{6} = [3, 0]\\[3pt]
		D_{6} = 181
		\end{pmatrix} \mapsto V_{6} $};
	
	\foreach \from/\to in {n0/n1, n1/n2,n1/n3, n0/n4, n4/n5, n4/n6}
	\draw (\from) -- (\to);
	\end{tikzpicture}
	%
	\caption{\textsc{Algorithm} \ref{Alg Balanced-Left Tree Generation} D\&C tree generated for Realization 1 of \textsc{Table} \ref{Tbl Uniform Random Setting Example}. The tree is consistent with \textsc{Table} \ref{Tbl Balanced Tree for Particular Example}. Parameters: sorting by specific weight ($ s = \boldsymbol{\gamma} $), minimum size leaf $ m = 2 $. Every vertex $ V_{i} $ has associated a subproblem analogous to \textsc{Problem} \ref{Pblm Integer Problem}, whose input data are the demand $ D_{i} $ and the sorted list of eligible items $ A_{i} $ (together with its corresponding lists of capacities and prices).}
	\label{Fig Tree Generated by Balanced Left-Right Algorithm}
\end{figure}
%
%
%
%
%
\subsection{Efficiency Quantification}
%
%
In this section we describe the general algorithm to compute the efficiency of the D\&C tree approach. The efficiencies will be measured according to \textsc{Definition} \ref{Def DC Tree efficiency}, moreover the computations will be done based on three values: 
\begin{enumerate}[1.]
	\item 
	Exact solution of \textsc{Problem} \ref{Pblm Integer Problem}, computed using the algorithm COMBO presented in \cite{martello1999dynamic}, from now on 
	%
	%
	denoted by $ DPS $ (the algorithm heavily relies on dynamic programming).
	
	\item Upper bound furnished by the \textsc{Greedy Algorithm} \ref{Alg Greedy Algorithm}, denoted by $ GAS $ in the sequel.
	
	\item Lower bound, given by the solution of \textsc{Problem} \ref{Pblm Natural LOP Problem}, i.e., the natural linear relaxation  of the problem \ref{Pblm Integer Problem}, from now on denoted by $ LRS $.
\end{enumerate}
The effectiveness of upper and lower bounds mentioned above is measured in the standard way i.e.,
\begin{align}\label{Eqn upper and lower bounds efficiency}
& GAE \defining 100\times\frac{GAS - DPS}{DPS}, &
& LRE \defining 100\times\frac{DPS - LRS}{DPS}.
\end{align}
Here, $ GAE, LRE $ respectively indicate, Greedy Algorithm and Linear Relaxation Efficiency. 
The general structure is as follows
\begin{enumerate}[(i)]
	\item Execute the \textsc{Random Setting Algorithm} described in \textsc{Section} \ref{Sec Random Setting}, according to its parameters of choice and store its results in the file \textbf{Eligible\_Items.xls}.
	
	
	\item Loop through the columns of file \textbf{Eligible\_Items.xls}, each of them is a random realization (see \textsc{Table} \ref{Tbl Uniform Random Setting Example}). 
	
	\item For each column/realization, 
	\begin{enumerate}[(a)]
		\item Retrieve the basic information of \textsc{Problem} \ref{Pblm Integer Problem} i.e., Items' List, Prices: $ \p $, Capacities: $ \capac$, Demand: $ D $.
		
		\item Build the D\&C tree, Head-Left (\textsc{Algorithm} \ref{Alg Head-Left Tree Generation}) or balanced (\textsc{Algorithm} \ref{Alg Balanced-Left Tree Generation}) according to user's choice. 
		
		\item Loop through the D\&C tree nodes, compute the Greedy Algorithm \ref{Alg Greedy Algorithm}, Exact and Linear Relaxation solutions and store them in the D\&C tree structure. 
		
		\item Loop through the D\&C tree heights, compute the global and stepwise efficiencies according to \textsc{Definition} \ref{Def DC Tree efficiency} (iv) and store them in stack structures within a realizations' global table (see, \textsc{Table} \ref{Tbl DPS GbE and SwE Efficiencies 5 Realizations of Table 4}). Compute the Greedy Algorithm and Linear Relaxation Efficiencies as defined in \textsc{Equation} \eqref{Eqn upper and lower bounds efficiency} and store them in stack structures within a realizations' global table (see \textsc{Table} \ref{Tbl DPS GAE and LRE Efficiencies 5 Realizations of Table 4}). 
	\end{enumerate}
	\item In the realizations' global table, compute the average of the global and stepwise efficiencies.
\end{enumerate}
The steps (ii) and (iii) of the previous description are detailed in the pseudocode \ref{Alg D and C Efficiency Quantification}, an example of its output is presented in the table \ref{Tbl Realization 1 Efficiencies Table} below, where the efficiencies of the method are reported for the Realization 1 of \textsc{Table} \ref{Tbl Uniform Random Setting Example}, using the D\&C tree structure, depicted in \textsc{Figure} \ref{Fig Tree Generated by Head-Left Algorithm} and detailed in \textsc{Table} \ref{Tbl Tree Generated by Head-Left Algorithm}. 
\begin{table}[h!]
	\begin{centering}
		\rowcolors{2}{gray!25}{white}
		\begin{tabular}{cccccccccc}
			\rowcolor{gray!50}
			\hline
			Height 
			& $ LRS $ & $ DPS $ & $ GAS $ 
			& $ GbE_{LRS} $ & $ GbE_{DPS} $ & $ GbE_{GAS} $
			& $ SwE_{LRS} $ & $ SwE_{DPS} $ & $ SwE_{GAS} $
			\\
			\hline
			0 &	14.12 &	15 & 16 & 0.00 & 0.00 & 0.00	& & &	\\
			1 &	14.25 & 16 & 16	& 0.98 & 6.67 &	0.00 &	0.98 &	6.67 &	0.00 \\
			2 &	14.36 &	16 & 16	& 1.71 & 6.67 & 0.00 &	0.72 &	0.00 & 0.00	
			\\
			\hline		
		\end{tabular}
		\caption{\textsc{Algorithm} \ref{Alg D and C Efficiency Quantification} height efficiencies for Realization 1 of \textsc{Table} \ref{Tbl Uniform Random Setting Example}. Parameters: sorting by specific weight $ \boldsymbol{\gamma} $, left subtree fraction $ f = 0.5 $, minimum size leaf $ m = 2 $, see also \textsc{Figure} \ref{Fig Tree Generated by Head-Left Algorithm} and \textsc{Table} \ref{Tbl Tree Generated by Head-Left Algorithm} for details on the tree structure. The Linear Relaxation, Exact and Greedy Algorithm solutions are represented with the initials $ LRS, DPS $ and $ GAS $ respectively. The Global and Stepwise Efficiencies are represented with the initials $ GbE, SwE $ respectively and the subindex affecting them, indicates for which of the solutions $ LRS, DPS, GAS $ the column values apply.}
		\label{Tbl Realization 1 Efficiencies Table}
	\end{centering}
\end{table}

\noindent In addition, for the five realizations of \textsc{Table} \ref{Tbl Uniform Random Setting Example}, the  table \ref{Tbl DPS GbE and SwE Efficiencies 5 Realizations of Table 4} presents the result of computing the global and stepwise efficiencies ($ GbE $ and $ SwE $) of the Exact Solutions ($ DPS $), while \textsc{Table} \ref{Tbl DPS GAE and LRE Efficiencies 5 Realizations of Table 4} displays the corresponding values of the Greedy Algorithm and the Linear Relaxation Efficiencies ($ GAE $ and $ LRE $).
\begin{table}[h!]
	\begin{centering}
		\rowcolors{2}{gray!25}{white}		
		\begin{tabular}{cccccccccccc}
			\rowcolor{gray!50}
			\hline
			Height 
			& $ GbE_{1} $ & $ GbE_{2}$ & $ GbE_{3} $ & $ GbE_{4} $ & $ GbE_{5} $ 
			& $ SwE_{1} $ & $ SwE_{2}$ & $ SwE_{3} $ & $ SwE_{4} $ & $ SwE_{5} $ 
			\\
			\hline
			0 & 0.00 & 	0.00 &	0.00 &	0.00 &	0.00 & & & & & \\				
			1 &	6.67 &	14.29 &	14.29 &	7.14 &	13.33 &	6.67 & 14.29 & 14.29 & 7.14	& 13.33 \\
			2 & 6.67 &	14.29 &	14.29 & 7.14 &	13.33 &	0.00 &	0.00  &	0.00   & 0.00 & 0.00
			\\
			\hline		
		\end{tabular}
		\caption{\textsc{Algorithm} \ref{Alg D and C Efficiency Quantification} example of Global Efficiency ($ GbE $) and Stepwise Efficiency ($ SwE $) results for the case Exact Solution ($ DPS $) through the 5 realizations of \textsc{Table} \ref{Tbl Uniform Random Setting Example}. Parameters: sorting by specific weight $ \boldsymbol{\gamma} $, left subtree fraction $ f = 0.5 $, minimum size leaf $ m = 2 $. The subindex affecting $ GbE $ and $ SwE $ indicates the corresponding number of realization for which the column applies.}
		\label{Tbl DPS GbE and SwE Efficiencies 5 Realizations of Table 4}
	\end{centering}
\end{table}
\begin{table}[h!]
	\begin{centering}
		\rowcolors{2}{gray!25}{white}		
		\begin{tabular}{cccccccccccc}
			\rowcolor{gray!50}
			\hline
			Height 
			& $ GAE_{1} $ & $ GAE_{2}$ & $ GAE_{3} $ & $ GAE_{4} $ & $ GAE_{5} $ 
			& $ LRE_{1} $ & $ LRE_{2}$ & $ LRE_{3} $ & $ LRE_{4} $ & $ LRE_{5} $ 
			\\
			\hline
			0 & 6.67 &	0.00 &	0.00 &	7.14 &	0.00 &	5.90  &	0.66 &	0.45 &	6.65 &	2.62 \\
			1 & 0.00 &	0.00 & 	0.00 &	0.00 &	0.00 &	10.91 &	11.76 &	11.21 &	11.74 &	12.00 \\
			2 & 0.00 &	0.00 &	0.00 &	0.00 &	0.00 &	10.27 &	10.68 &	10.51 &	10.52 &	10.58
			\\
			\hline		
		\end{tabular}
		\caption{\textsc{Algorithm} \ref{Alg D and C Efficiency Quantification} example of Greedy Algorithm Efficiency ($ GAE $) and Linear Relaxation Efficiency ($ LRE $) (see \textsc{Equation} \eqref{Eqn upper and lower bounds efficiency} for its definition), through the 5 realizations of \textsc{Table} \ref{Tbl Uniform Random Setting Example}. Parameters: sorting by specific weight $ \boldsymbol{\gamma} $, left subtree fraction $ f = 0.5 $, minimum size leaf $ m = 2 $. The subindex affecting $ GAE $ and $ LRE $ indicates the corresponding number of realization for which the column applies.}
		\label{Tbl DPS GAE and LRE Efficiencies 5 Realizations of Table 4}
	\end{centering}
\end{table}
\begin{algorithm} 
	\caption{D\&C Efficiency Quantification, returns a list of global and stepwise efficiencies}
	\label{Alg D and C Efficiency Quantification}
	\begin{algorithmic}[1]
		\Procedure{D\&C Efficiency Quantification}{File \textbf{Eligible\_Items.xls} contains:\newline
			Items' List, Prices: $ \p $, Capacities: $ \capac$, Demand: $ D $. \newline
			\textbf{User Decisions}: Sorting: $s \in \{ \p ,\capac, \boldsymbol{\gamma}, \text{random}\} $,
			Head-left subtree fraction: $ f \in [0, 1] $,\newline
			Minimum list size: $ m \in [1, \# \text{Items' List}]\cap \N $,
			Price-Capacity rate: $ r \in [1, \max\limits_{i} c_{i} ]$,\newline
			Type of Tree: $ t \in \{ \text{Head-Left, Balanced} \} .$ 
		}
		\For{column of \textbf{Eligible\_Items.xls} }
		\Comment{Each column is a random realization, e.g. \textsc{Table} \ref{Tbl Uniform Random Setting Example}}
		
		\State \textbf{retrieve} from \textbf{Eligible\_Items.xls} the information: Items' List, Prices: $ \p $, Capacities: $ \capac$, Demand: $ D $, corresponding to column/realization.
		
		\If {$ t = \text{Head-Left} $ } 
		
		\State D\&C tree: $ DCT \defining $ \textbf{call} \textsc{Algorithm} \ref{Alg Head-Left Tree Generation} (Items' List, $ \p $, $ \capac$, $ D $, $ s $, $ f $, $ m $ )\newline
		\Comment{Producing the Head-Left D\&C tree}
		
		\Else
		
		\State D\&C tree: $ DCT \defining $ \textbf{call} \textsc{Algorithm} \ref{Alg Balanced-Left Tree Generation} (Items' List, $ \p $, $ \capac$, $ D $, $ s $, $ m $ )\newline
		\Comment{Producing the Balanced D\&C tree}
		
		\EndIf
		
		\State Solutions Tree: $ ST = \emptyset $
		\Comment{Initializing Solutions Tree as empty list}
		
		\For {$ V \in \text{vertices of } DCT $} 
		\Comment{Recall that $ DCT $ has table format as \textsc{Table} \ref{Tbl Balanced Tree for Particular Example} }
		
		\State Linear Relaxation Solution: $ LRS_{V}\leftarrow $ \textbf{call simplex algorithm solver} (Data $ \{\p, \capac, D\} $, corresponding to vertex $ V $)
		
		\State Exact Solution: $ DPS_{V} \leftarrow $  \textbf{call MT1 solver} Data $ \{\p, \capac, D\} $, corresponding to vertex $ V $) 
		
		\State Greedy Algorithm Solution: $ GAS_{V} \leftarrow $ \textbf{call} \textsc{Algorithm} \ref{Alg Greedy Algorithm} (Data $ \{\p, \capac, D\} $, corresponding to vertex $ V $) 
		
		\State $ [LRS_{V}, DPS_{V}, GAS_{V}] \rightarrow ST $ 
		\Comment{Push the triple $ [LRS_{V}, DPS_{V}, GAS_{V}] $ as vertex of the solutions tree $ ST $}
		
		
		\EndFor
		
		\State $ z_{*}\defining \emptyset $
		\Comment{Initializing solution values stack as empty list} 
		
		\State 
		$ GbE \defining [0] $
		\Comment{Initializing global efficiency stack; 0 is the first value} 
		
		\State 
		$ SwE \defining \emptyset $
		\Comment{Initializing stepwise eficiency stack as empty list} 
		
		\State 
		$ GAE \defining \emptyset $
		\Comment{Initializing greedy algorithm efficiency stack as empty list} 
		
		\State 
		$ LRE \defining \emptyset $
		\Comment{Initializing linear relaxation eficiency stack as empty list} 
		
		\State $ H \defining $ height of $ DCT $. 
		
		\For {$ h \in [H] $ }
		
		\State $ DCT_{h} = \text{subgraph of } DCT \text{ induced on the set }\{V \in DCT: \text{ height} (V) \leq h\} $ 
		\Comment{Tree pruned at height $ h $}
		
		\State $ L(DCT_{h}) = \{V \in DCT_{h}: \deg(V) = 1 \}$
		\Comment{Selecting the leaves of the pruned tree $ DCT_{h} $}
		
		\State $ z_{*}^{h}\leftarrow \sum\{ [LRS_{V}, DPS_{V}, GAS_{V}] : V\,\in \,L(DCT_{h}) \}=
		\sum\{ ST(V) : V\,\in \,L(DCT_{h}) \} $
		\Comment{Push the total solutions (Linear Relaxation, Exact, Greedy) at height $ h $ of the three $ DCT $, to the stack}
		
		\If{ $ h > 0 $ }
		
		\State $ GbE(h) \leftarrow 100\times \dfrac{z_{*}^{h} - z_{*}^{0}}{z_{*}^{0}} $
		\Comment{Push global efficiency at height $ h $ to the stack}
		
		\State $ SwE(h - 1)\leftarrow 100\times \dfrac{z_{*}^{h-1}- z_{*}^{h}}{z_{*}^{h - 1}} $
		\Comment{Push stepwise efficiencyror at height $ h $ to the stack}
		
		\EndIf
		
		\State $ GAE(h) \leftarrow 100\times \dfrac{ z_{*}^{h}[ GAS ] - z_{*}^{h}[ DPS ] }{ z_{*}^{h}[ DPS ] } $
		\Comment{Push greedy algorithm efficiencies into the stack, see \textsc{Equation} \eqref{Eqn upper and lower bounds efficiency} }
		
		\State $ LRE(h) \leftarrow 100\times \dfrac{ z_{*}^{h}[ DPS ] - z_{*}^{h}[ LRS ] }{ z_{*}^{h}[ DPS ] } $
		\Comment{Push linear relaxation efficiencies into the stack, see \textsc{Equation} \eqref{Eqn upper and lower bounds efficiency} }
		
		\EndFor
		
		\State \Return $ (GbE, SwE) $ 
		\Comment{Efficiencies corresponding to column/realization}
		
		\EndFor
		
		\EndProcedure
	\end{algorithmic}
\end{algorithm}
So far, we have been using Realization 1 in \textsc{Table} \ref{Tbl Uniform Random Setting Example} to illustrate the method, however, we close this section presenting an example significantly larger in order to illustrate the method for a richer D\&C tree and bigger range of heights. 
\begin{example}[The D\&C tree of a large random realization]
	\label{Exm DC tree random realization}
	In \textsc{Table} \ref{Fig Table Solutions through tree} we present the $ LRS, DPS, GAS $ solutions for a D\&C tree corresponding to a random realization of 128 eligible items, uniformly distributed capacities, with demand-capacity fraction of 0.9. The respective D\&C tree is constructed using the head-left algorithm \ref{Alg Head-Left Tree Generation}, sorted by specific weight $ \boldsymbol{\gamma} $, left subtree fraction $ f = 0.5 $ and minimum size $ m = 1 $, i.e., its height is 7. To avoid redundancy, we omit tables displaying the corresponding values of $ GAE $, $ LRE $ as well as $ GbE, SwE $ for $ LRS, DPS, GAS $, analogous to those reported in \textsc{Tables} \ref{Tbl Realization 1 Efficiencies Table} and \ref{Tbl DPS GAE and LRE Efficiencies 5 Realizations of Table 4}, since they can be completely derived from \textsc{Table} \ref{Fig Table Solutions through tree}; however, we display the graphics corresponding to all such tables. \\
	
	\noindent In \textsc{Figure} \ref{Fig Efficiencies through tree} we depict the behavior through the heights of a D\&C tree, for the solutions $ LRS, DPS, GAS $, the efficiencies $ GAE, LRE $, as well as the global and stepwise efficiencies $ \big\{GbE_{LRS}, GbE_{DPS}, GbE_{GAS}\big\} $, $ \big\{SwE_{LRS}, SwE_{DPS}, SwE_{GAS}\big\} $. 
	As it can be seen in figures (a), (b), $ GAS $ is significantly more accurate than $ LRS $ to the point that one curve stays below the other through all the height of the D\&C tree. In the case of global efficiencies we also observe that the behavior of $ GbE_{GAS} $ and $ GbE_{DPS} $ are similar, though none is above the other through all the D\&C tree heights and $ GbE_{DPS} $ stays below both of them. A similar behavior is observed for the case of stepwise efficiencies ($ SwE $), although the curves $ SwE_{DPS} $ and $ SwE_{LRS} $ intersect in this case for $ h = 2 $. Observe that if $ h \geq 4 $, the results for $ DPS, GAS, GAE, GBE_{DPS}, GbE_{GAS} $ become stable i.e., the D\&C method no longer deteriorates the exact solution; since $ N = 128 $, $ h \geq 4 $ corresponds to lists of $ 8 $ items or smaller.
	\begin{table}[h!]
		\begin{centering}
			\rowcolors{2}{gray!25}{white}		
			\begin{tabular}{cccc}
				\rowcolor{gray!50}
				\hline
				Height 
				& $ LRS $ & $ DPS $ & $ GAS $
				\\
				\hline
				0 & 233.43	& 234 &	236 \\
				1 & 236.41	& 239 &	239 \\
				2 & 238.02	& 240 & 242 \\
				3 & 238.79	& 248 &	249 \\
				4 & 239.12	& 262 &	266 \\
				5 & 239.25	& 266 &	266 \\
				6 & 239.33	& 266 &	266 \\
				7 & 239.38	& 266 &	266
				\\
				\hline		
			\end{tabular}
			\caption{\textsc{Example} \ref{Exm DC tree random realization}. Solutions $ LRS, DPS $ and $ GAS $ table for a random realization of 128 eligible items uniformly distributed and demand-capacity fraction of 0.9. The D\&C tree has height 7, generated by the head-left algorithm \ref{Alg Head-Left Tree Generation}, sorted by specific weight $ \boldsymbol{\gamma} $, left subtree fraction $ f = 0.5 $ and minimum size $ m = 1 $.}
			\label{Fig Table Solutions through tree} 
		\end{centering}
	\end{table}
	\begin{figure}[h!]
		\centering
		\begin{subfigure}[$ LRS, DPS $ and $ GAS $ solutions.]
			{\includegraphics[scale = 0.380]{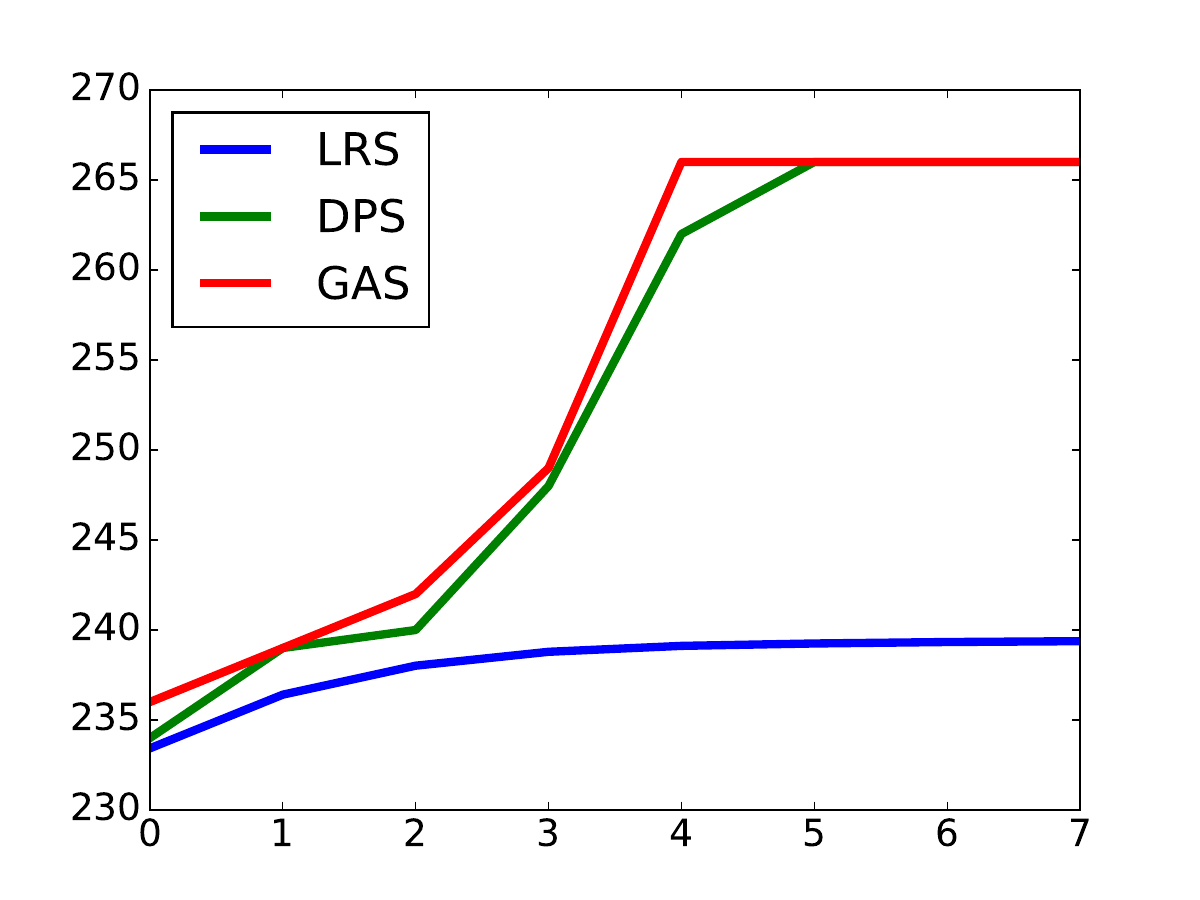} } 
		\end{subfigure}
		~ 
		\begin{subfigure}[$ GAE $ and $ LRE $ efficiencies.]
			{\includegraphics[scale = 0.380]{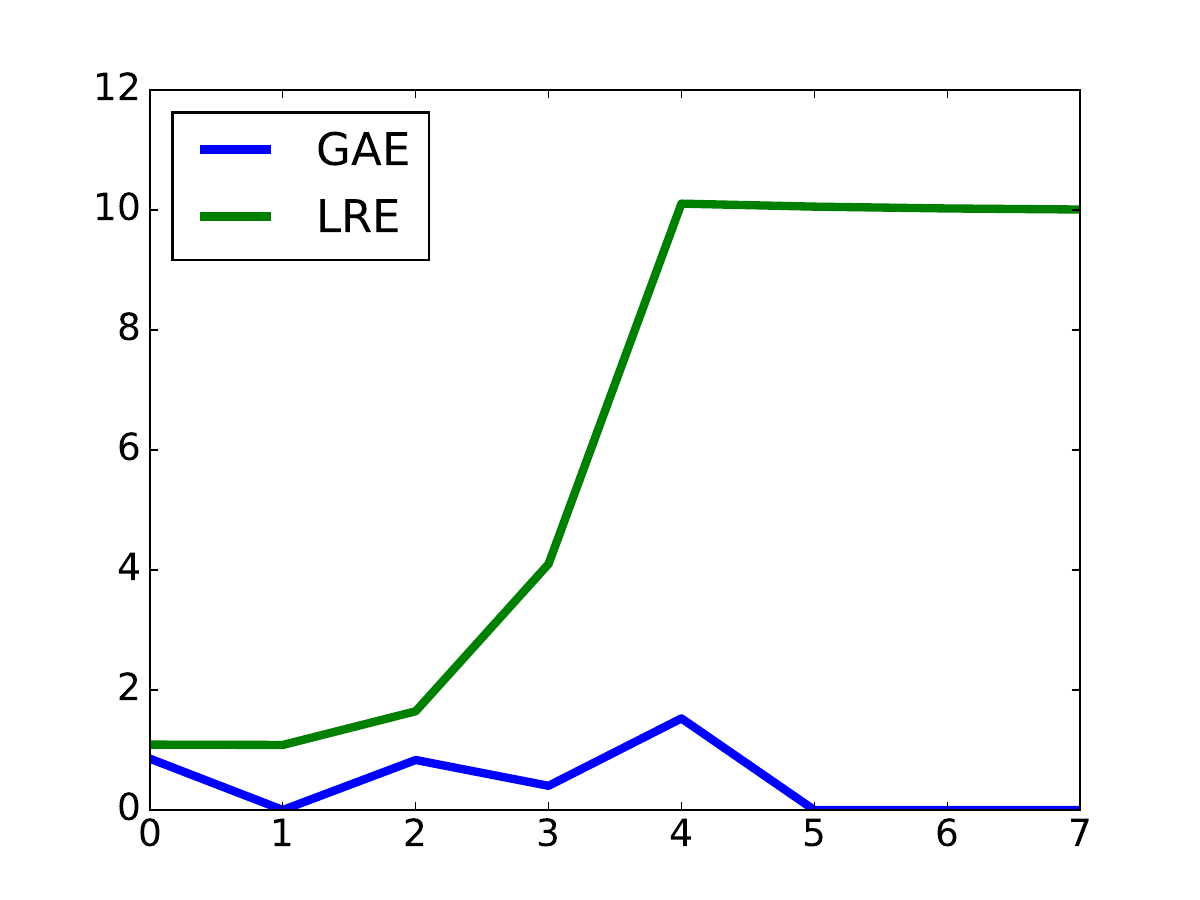} } 
		\end{subfigure}
		\begin{subfigure}[$ GbE_{LRS}, GbE_{DPS} $ and $ GbE_{GAS} $ efficiencies.]
			{\includegraphics[scale = 0.380]{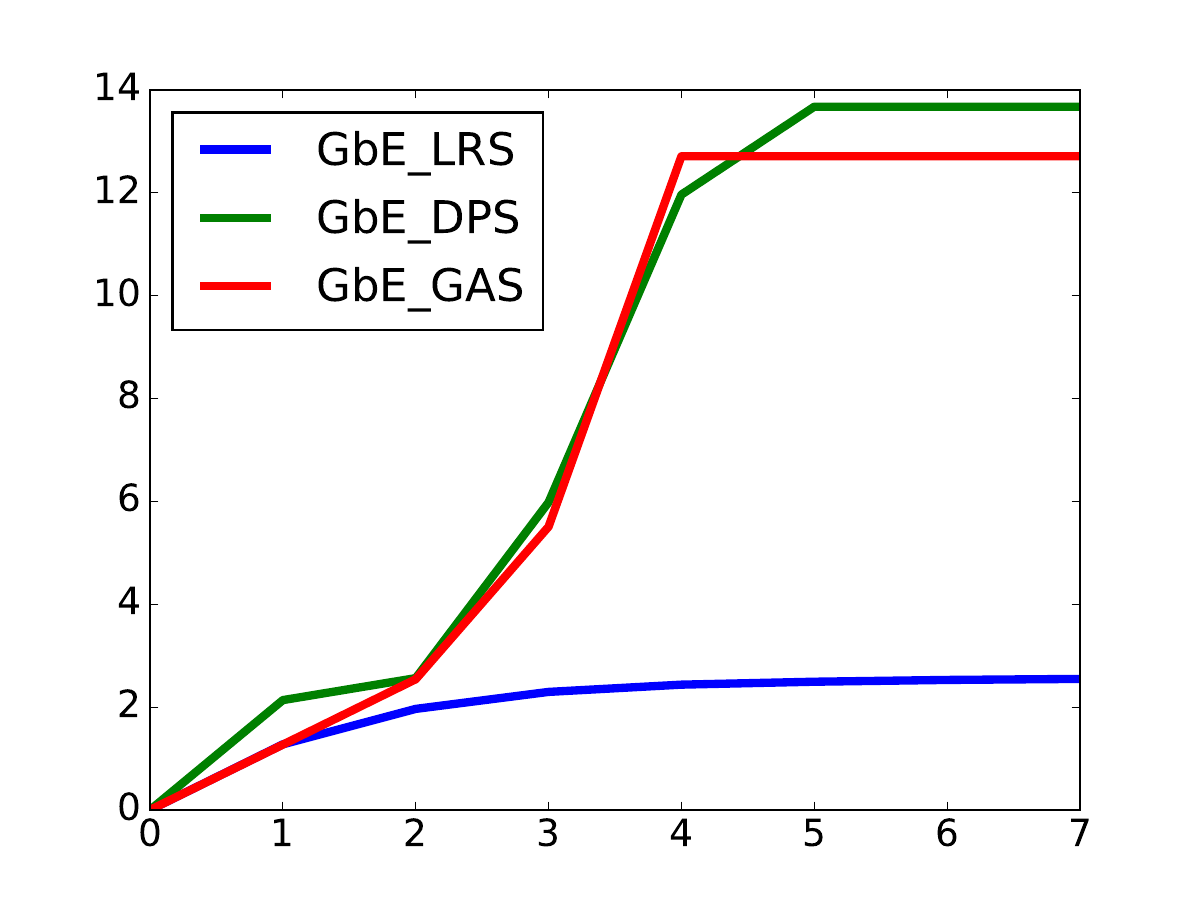} } 
		\end{subfigure}
		~ 
		\begin{subfigure}[$ SwE_{LRS}, SwE_{DPS} $ and $ SwE_{GAS} $ efficiencies.]
			{\includegraphics[scale = 0.380]{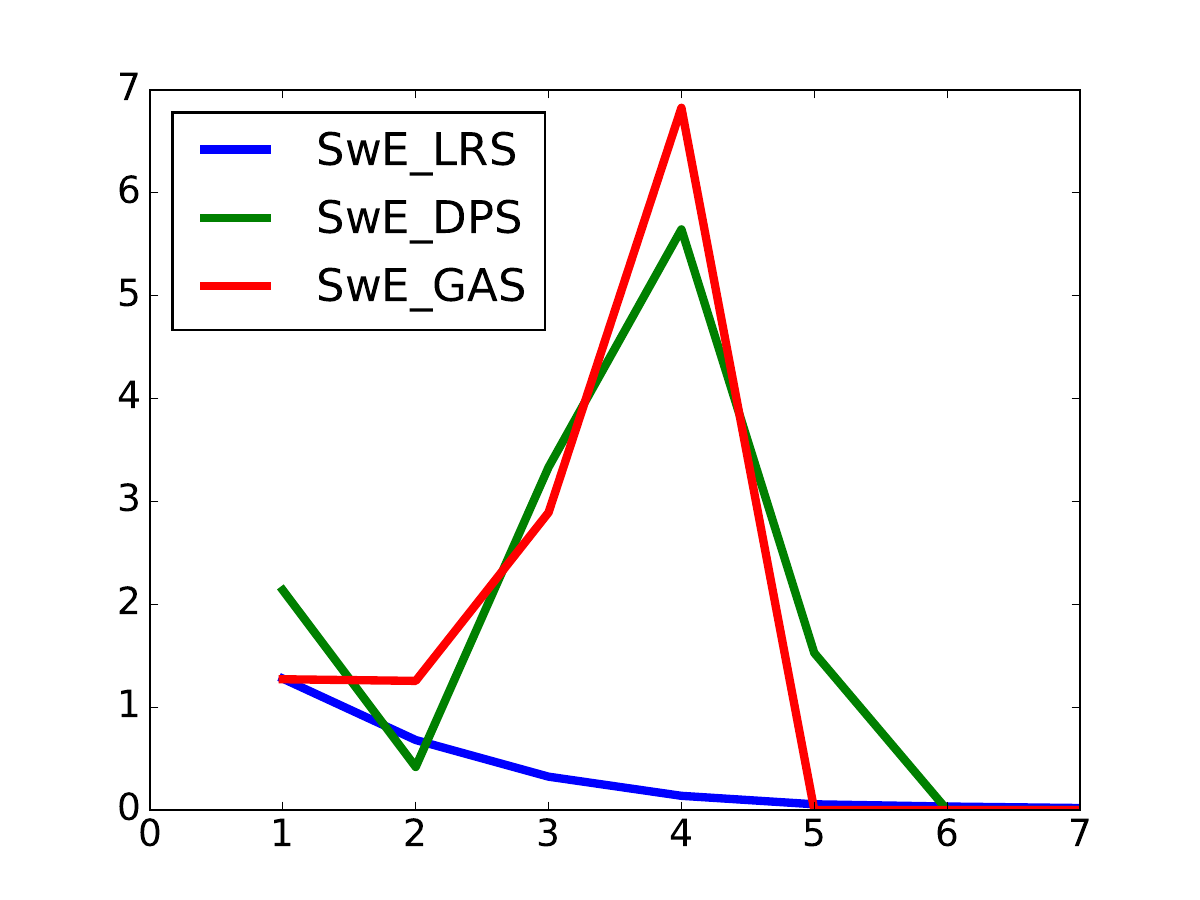} } 
		\end{subfigure}
		\caption{\textsc{Example} \ref{Exm DC tree random realization}. Random realization of 128 eligible items, with uniformly distributed capacities and demand-capacity fraction of 0.9. The D\&C tree has height 7, it is generated by the head-left algorithm \ref{Alg Head-Left Tree Generation}, sorted by specific weight $ \boldsymbol{\gamma} $, left subtree fraction $ f = 0.5 $ and minimum size $ m = 1 $. In Figure (a) the $ y $-axis is expressed in absolute values while in figures (b), (c) and (d) the $ y $-axis is a percentage.}
		\label{Fig Efficiencies through tree} 
	\end{figure}
	\noindent Finally, in \textsc{Figure} \ref{Fig Shorten Five Random Realizations} we present the efficiencies $ GbE_{DPS}, SwE_{DPS}, GAE$ and $ LRE $ for five random realizations. We choose depicting this efficiencies because the Exact Solution ($ DPS $) is the most important parameter, as it measures the quality of the exact solution and the $ GAE $, $ LRE $ efficiencies store the quality of the usual bounds (Greedy Algorithm and Linear Relaxation). The realizations are generated with the same parameters of the previous one (therefore comparable to it) and follow similar behavior amongst them as expected. In particular, notice that for $ h \geq 4 $ (subproblems of size $ 8 $ or smaller) the solutions stabilize. 
	\begin{figure}[h!]
		\centering
		\begin{subfigure}[$ LRE $ efficiencies.]
			{\includegraphics[scale = 0.380]{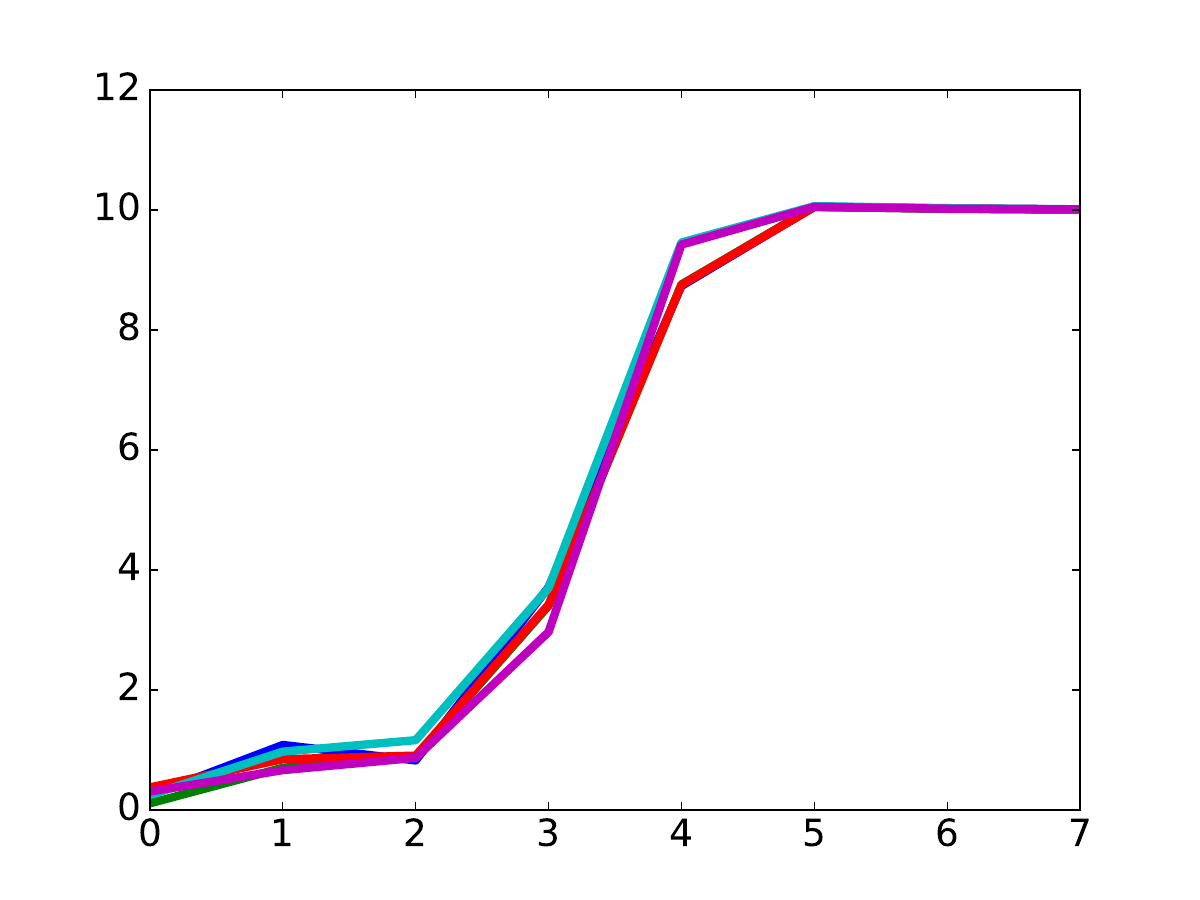} } 
		\end{subfigure}
		~ 
		\begin{subfigure}[$ GAE $ efficiencies. (Different line styles are used to help the reader identify the path that each random realization follows.)]
			{\includegraphics[scale = 0.38]{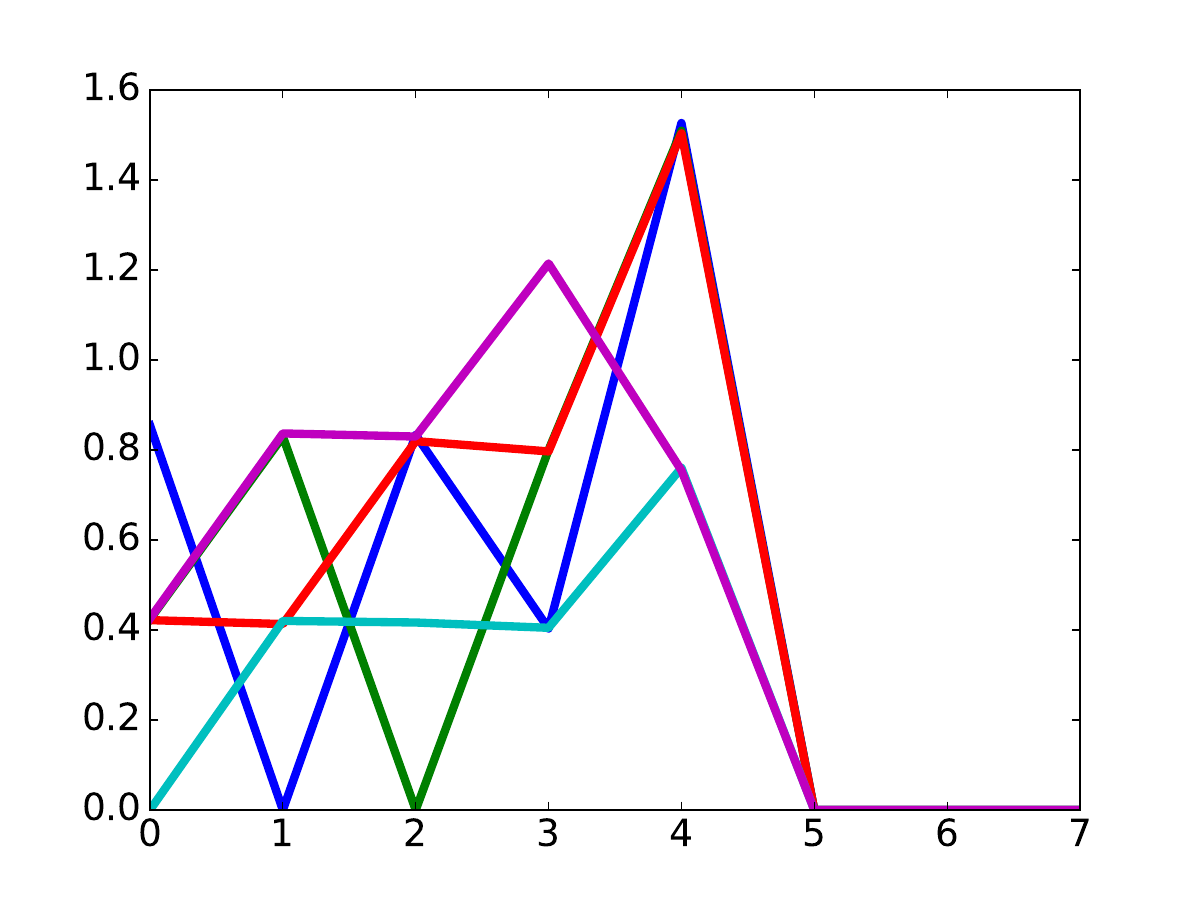} } 
		\end{subfigure}
		\begin{subfigure}[$ GbE_{DPS} $ efficiencies.]
			{\includegraphics[scale = 0.380]{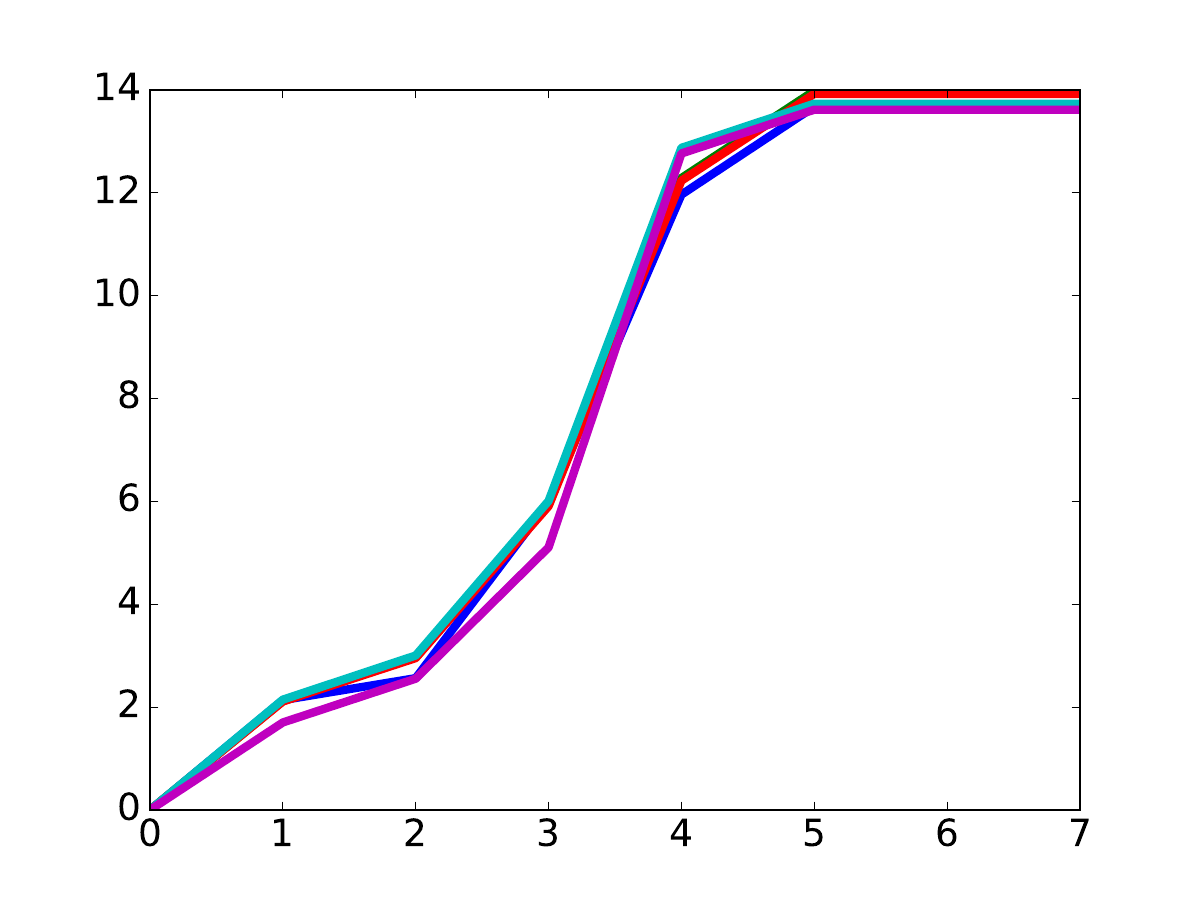} } 
		\end{subfigure}
		~ 
		\begin{subfigure}[$ SwE_{DPS} $ efficiencies.]
			{\includegraphics[scale = 0.380]{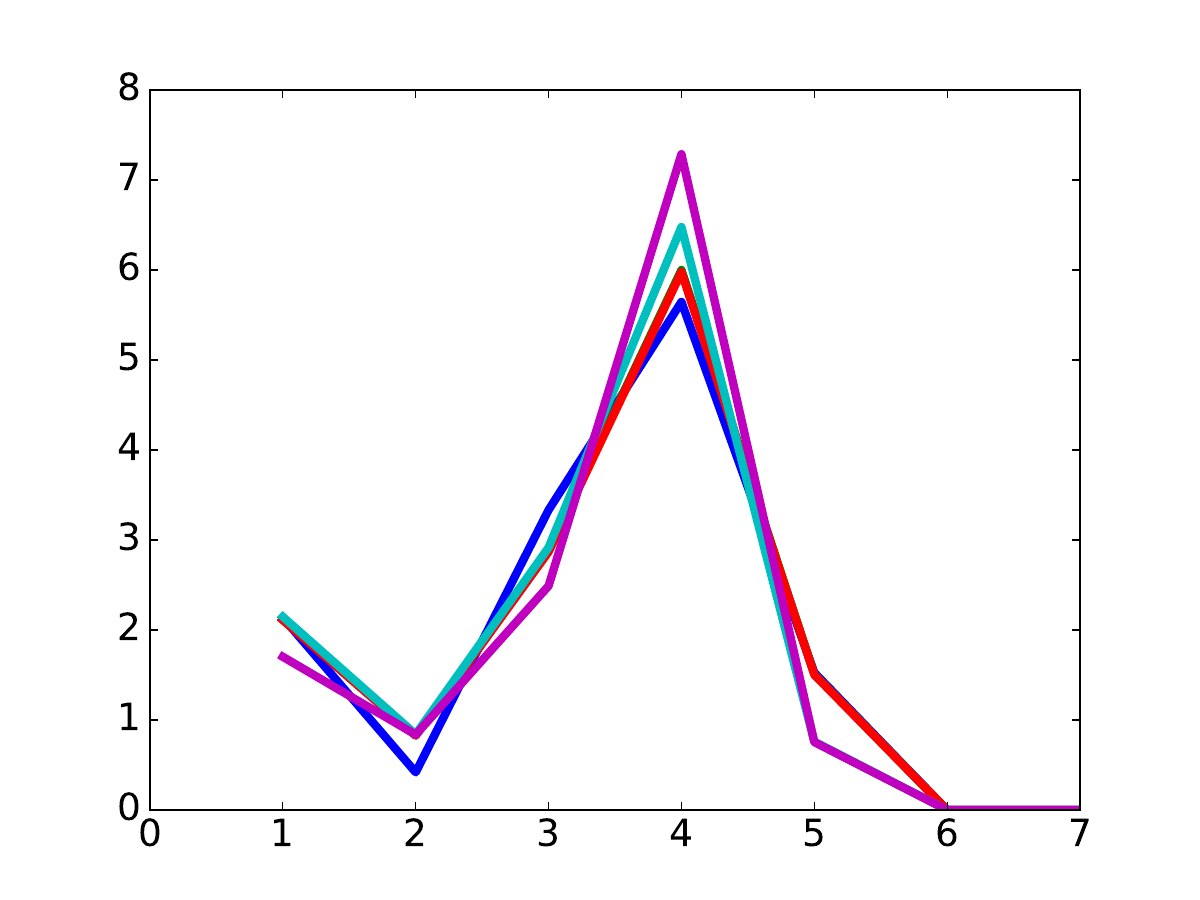} } 
		\end{subfigure}
		\caption{\textsc{Example} \ref{Exm DC tree random realization}. Five random random realization of 128 eligible items, with uniformly distributed capacities and demand-capacity fraction of 0.9. The D\&C tree has height 7, it is generated by the head-left algorithm \ref{Alg Head-Left Tree Generation}, sorted by specific weight $ \boldsymbol{\gamma} $, left subtree fraction $ f = 0.5 $ and minimum size $ m = 1 $.}
		\label{Fig Shorten Five Random Realizations} 
	\end{figure}
\end{example}
\begin{remark}\label{Rem Global behavior comments}
	Examples of 128 eligible items, with a large number of realizations and different distributions (uniform, binomial, Poisson) present similar behavior to the one presented in \textsc{Example} \ref{Exm DC tree random realization}. For the three distributions, most of the results stabilize for $ h \geq 4 $ (subproblems of $ 8 $ items).
\end{remark}
%
%
%
\section{Numerical Experiments}
%
%
%
%
In this section, we present the results from the numerical experiments. All the codes needed for the present work were implemented in Python 3.4 and the databases were handled with Pandas (Python Data Analysis Library). The full scale experiments were run in the server Gauss at Universidad Nacional de Colombia, Sede Medell\'in, Facultad de Ciencias. The Script can be downloaded from the address \url{https://sites.google.com/a/unal.edu.co/fernando-a-morales-j/home/research/software}
%
%
%
%
\subsection{The Experiments Design}
%
%
The numerical experiments are aimed to asses the effectiveness of the heuristic D\&C method presented in \textsc{Section} \ref{Sec The Heuristic Method}. Its whole construction was done in a way such that its effectiveness could be analyzed under the probabilistic view of the Law of Large Numbers (which we write below for the sake of completeness, its proof and details can be found in \cite{BillingsleyProb}). 
\begin{theorem}[Law of Large Numbers]\label{Th the Law of Large Numbers}
	Let $ \big(\Z^{(n)}:n\in \N\big) $ be a sequence of independent, identically distributed random variables with expectation $ \Exp\big(\Z^{(1)}\big) $, then
	\begin{equation}\label{Eq the Law of Large Numbers}
	\prob\bigg[ \Big\vert \frac{\Z^{(1)} + \Z^{(2)} + \ldots + \Z^{(n)}}{n} - 
	\Exp\big(\Z^{(1)}\big) \Big\vert > 0 \bigg]
	\xrightarrow[n\, \rightarrow \,\infty]{} 0 ,
	\end{equation}
	i.e. , the sequence $ \big(Z^{(n)}:n\in \N\big) $  converges to $ \mu $ in the Ces\`aro sense.
\end{theorem}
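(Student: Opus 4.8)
The plan is to establish the weak form of the result by a second-moment argument, reading the event $\{\,\vert\,\cdot\,\vert > 0\,\}$ in \textsc{Statement} \eqref{Eq the Law of Large Numbers} as $\{\,\vert\,\cdot\,\vert > \varepsilon\,\}$ for an arbitrary but fixed $\varepsilon > 0$, which is the standard assertion of convergence in probability (the Ces\`aro-convergence statement quoted after the display). First I would reduce to a clean form by setting $\mu \defining \Exp(\Z^{(1)})$ and $S_{n} \defining \Z^{(1)} + \Z^{(2)} + \ldots + \Z^{(n)}$, so that by linearity of expectation $\Exp(S_{n}/n) = \mu$; the task becomes controlling the deviation of $S_{n}/n$ from its mean.

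The core step combines Chebyshev's inequality with independence. Assuming first that $\Var(\Z^{(1)}) = \sigma^{2} < \infty$, the independent, identically distributed hypothesis gives $\Var(S_{n}) = n\sigma^{2}$, and hence $\Var(S_{n}/n) = \sigma^{2}/n$. Chebyshev's inequality then yields
\begin{equation*}
\prob\Big[\,\Big\vert \frac{S_{n}}{n} - \mu \Big\vert > \varepsilon \,\Big]
\leq \frac{\Var(S_{n}/n)}{\varepsilon^{2}}
= \frac{\sigma^{2}}{n\,\varepsilon^{2}}
\xrightarrow[n \to \infty]{} 0 ,
\end{equation*}
which is precisely the claimed convergence for each fixed $\varepsilon > 0$.

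The main obstacle is that the hypothesis as stated only guarantees a finite first moment, not a finite variance, so the direct Chebyshev estimate is unavailable in full generality. To cover this case I would use a truncation argument: fix a level $M > 0$, split each summand as $\Z^{(k)} = \Z^{(k)}\ind_{\{\vert \Z^{(k)}\vert \leq M\}} + \Z^{(k)}\ind_{\{\vert \Z^{(k)}\vert > M\}}$, apply the finite-variance estimate above to the bounded part, and control the unbounded part in $L^{1}$ using dominated convergence, sending $M \to \infty$ after $n \to \infty$. An alternative route is via characteristic functions: show that $\Exp\big(e^{\mathrm{i} t\, S_{n}/n}\big) \to e^{\mathrm{i} t \mu}$ through a first-order Taylor expansion valid under a finite mean, and conclude by the L\'evy continuity theorem. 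Since the result is quoted here only for completeness and the details are deferred to \cite{BillingsleyProb}, I would present the finite-variance Chebyshev computation in full and merely indicate the truncation refinement needed in the general case.
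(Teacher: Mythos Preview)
Your proposal is correct as a proof sketch, and in fact it goes strictly beyond what the paper does: the paper provides no proof of this theorem at all, stating only that it is written ``for the sake of completeness'' with ``its proof and details'' deferred to \cite{BillingsleyProb}. Your reading of the display \eqref{Eq the Law of Large Numbers} as $\{\vert\,\cdot\,\vert > \varepsilon\}$ rather than the literal $\{\vert\,\cdot\,\vert > 0\}$ is the right interpretation (the latter would not hold in general), and the Chebyshev argument under finite variance followed by a truncation to handle the finite-mean case is exactly the classical route taken in Billingsley. So there is nothing to compare against in the paper itself; your sketch simply supplies what the paper chose to omit.
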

%
%
%
The D\&C method introduces several free/decision parameters to analyze the behavior of \textsc{Problem} \ref{Pblm Integer Problem} under different scenarios. We have the following list of domains for each of these parameters
\begin{enumerate}[a.]
	\item Number of items: $ N \in \N $.
	
	\item Distribution of items' capacities: $ \dist \in \{ \text{Ud, Pd, Bd} \} $ (Ud: uniform, Pd: Poisson, Bd: Binomial). 
	
	\item Demand-Capacity fraction: $ o \in \{0,50, 0.55, \ldots,  0,90\} $ (to satisfy hypotheses of (iv) \textsc{Theorem} \ref{Thm Feasibility of DC}).
	
	\item Price-Capacity rate: $ r \in \{34, 44, 54, 64, 74 \} $ (to avoid hypotheses of \textsc{Theorem} \ref{Thm Quality of the Greedy Algorithm} been satisfied).
	
	\item D\&C tree algorithm $ \talg \in \{\text{hlT, blT} \} $ (hlT head-left Tree \textsc{Algorithm} \ref{Alg Head-Left Tree Generation}, blT balanced-left Tree \textsc{Algorithm} \ref{Alg Balanced-Left Tree Generation}).
	
	\item Eligible Items list sorting method: $ s \in \{ \p, \capac, \boldsymbol{\gamma} , \text{random} \} $.
	
	\item Fraction of the left list: $ f\in \{ 0.35, 0.40, \ldots, 0.65 \} $.
	
	\item Minimum list size: $ m\in \N $.
\end{enumerate}
\begin{remark}[Parameters Domains]
	It is clear that $ o $ and $ f $ could very well adopt any value inside the interval $ [0.1] $, while $ r $ could be any arbitrary number in $ \N $. However, adopting such ranges is impractical for two reasons. First, their infinite nature prevents an exhaustive exploration as we intend to do. Second, most of the values in such a large range are unrealistic. For instance: $ o = 0.1 $ means that the capacity of available items is 10 times the demand (scenario that will hardly occur in real-world problems), $ f = 0 $ means no D\&C pair was introduced and $ r\geq \max_{i\,\in\,[N]} c_{i} $ means that all the items have the same price regardless of their capacity. 
\end{remark}
In order to model, an integer problem of type \ref{Pblm Integer Problem} and its D\&C solution as random variables, we need to introduce the following definition
\begin{definition}\label{Def Probabilistic View}
	Consider the following probabilistic space and random variables.
	\begin{enumerate}[(i)]
		\item Denote by $ \Omega $ the set of all possible integer problems of the type \ref{Pblm Integer Problem}.
		
		\item Define the \textbf{random problem generator} variable as
		\begin{equation}\label{Eqn Problem Random Variable}
		\begin{split}
		\X : \N \times 
		\{\text{Ud, Pd, Bd} \}\times
		\{0,50, 0.55, \ldots,  0,90\} \rightarrow & \Omega\\
		(N, \dist, o) \mapsto & \X(N, \dist, o) .
		\end{split}
		\end{equation}
		Here, $\X(N, \dist, o) $ is an integer problem of type \ref{Pblm Integer Problem}.
		
		\item Define the \textbf{D\&C solution variable} by
		\begin{equation}\label{Eqn Solution Problem Random Variable}
		\begin{split}
		\ups  : \Omega \times \{34, 44,\ldots , 74 \}  \times
		\{\text{hlT, blT} \} \times 
		\{ 0.35, 0.40, \ldots, 0.65 \} \times \\
		\{\p, \capac, \boldsymbol{\gamma}, \text{random} \}\times
		\N  \rightarrow &  \bigcup\limits_{h\,\in \, \N} \N^{h}\\
		(\X, r, \talg, s, f, m )\mapsto & \ups(\X, r, \talg, s, f, m). 
		\end{split}
		\end{equation}
		In the expression above, it is understood that $ \X = \X(N, dist, o) $ is the random problem generator variable and $\ups(\X, r, \talg, s, f, m) $ indicates the solution for the chosen integer problem $ \X \in \Omega $, under the D\&C tree solution parameters $ r, \talg, s, f, m $. This is, a stack/vector of solutions in $ \N^{\uph} $ where $ \uph $ is the height of the constructed D\&C tree. In particular, notice that $ \uph $ is also a random variable.
	\end{enumerate}
\end{definition}
Notice that if the parameters $ N, s, m $ are fixed, then, $ \uph $ is constant and the D\&C solutions random variable $\ups\big(\X(N, dist, o), r, \talg, s, f, m\big)\in \N^{\uph} $. However, a Monte Carlo simulation analysis can not be applied under these conditions, because the realizations of the random variable $ \ups $ would not meet the hypotheses of the Law of Large Numbers \ref{Th the Law of Large Numbers}, more specifically, the identically distributed condition. On the other hand, the analysis is pertinent for several realizations of the random variables $ \X $ and $ \ups $, with a fixed list of free/decision parameters, namely $ P = (N, dist, o, r, \talg, s, f, m) $. Under these conditions the Law of Large Numbers  can be applied on $ \ups $ to estimate the expected effectiveness of the method, conditioned to the chosen set of parameters $ P $. \\

\noindent In order to compare the different scenarios without introducing too many possibilities a standard setting has to be defined, which we introduce below, together with the justification behind its choice.
\begin{definition}\label{Def Standard Setting}
	In the following we refer to the \textbf{standard setting} of a numerical experiment 

	\begin{equation*}
	P = \begin{cases}
	(N, dist, o, r, \talg, s, f, m),  & \text{for }  \talg = \text{hlT}, \\
	(N, dist, o, r, \talg, s, m), & \text{for } \talg = \text{blT} ,
	\end{cases}
	\end{equation*}%
	if its parameters satisfy the following values:
	\begin{enumerate}[(i)]
		\item Head Fraction, $ f = 0.5 $ (applies for the head-left method only). To make it comparable with the balanced method.
		
		\item Demand-Capacity Fraction, $ o = 0.9 $. 
		
		\item Price-Capacity rate, $ r = 54 $. From experience, this is a reasonable value, as it permits explore problems of computable size without landing into trivial scenarios.
		
		\item Eligible Items list sorting method, $ s = \boldsymbol{\gamma} $ i.e., specific weight. Because this greedy function is closely related to the solutions furnished by the linear relaxation (LRS), presented in \textsc{Theorem} \ref{Thm Simplex and Greedy}, as well as the Greedy Algorithm \ref{Alg Greedy Algorithm}.
		
		\item Minimum list size, $ m = 4 $. From multiple random realizations, it has been observed that the D\&C method does not yield significantly different results for list sizes smaller than $ m = 8 $; see \textsc{Remark} \ref{Rem Global behavior comments}. Consequently, we adopt the size $ m = 4 $ in order to capture one step (and only one) of this ``steady behavior".
		
		\item Number of eligible items, $ N = 512 $. This size was chosen because for $ m = 4 $ it will produce in most of the studied cases a D\&C tree of height $ 7 $. The only exceptions will occur for head-left generated trees with head fraction $ f \neq 0.5 $.
		
	\end{enumerate}
	In addition the next conventions are adopted
	\begin{enumerate}[a.]
		\item An experiment is defined by a list of parameters, namely $ P $; from now on we do not make a distinction between the experiment and its list of parameters. Moreover, $ P $ has 8 parameters if $ \talg = \text{hlT} $ and 7 if $ \talg = \text{blT} $. To ease notation, from now on we denote $ P = (512, \dist, o, r, \talg, s, f, 4) $ for any experiment in general, in the understanding that if $ \talg = \text{blT} $ the head fraction $ f $ is not present in the list $ P $.
		
		\item Each case will be analyzed using $ 50 $ randomly generated realizations of $ 512 $ items with Uniform, Poisson and Binomial distributions respectively i.e., $ P = (512, dist, o, r, \newline \talg, s, f, 4) $, see \textsc{Figure} \ref{Fig Branch Strategies Tree}. 
		
		\item Given a standard setting $ P = (512, dist, o, r, \talg, s, f, 4) $ and a variable $ v \in \{ o, r, s, f \} $, we denote by $ P(v) $ the list of experiments where the variable $ v $ runs through its whole domain, see \textsc{Table} \ref{Tbl 50 realizations example} and \textsc{Figure} \ref{Fig Averaged Efficiencies Example} .
		
		\item The analysis of the efficiencies $ 
		GAE, LRE , GbE_{LRS}, GbE_{DPS},  GbE_{GAS}, 
		SwE_{LRS} $, $ SwE_{DPS} $ and $ SwE_{GAS}
		$ 
		will be done using their average values, corresponding to the 50 random realizations mentioned above. 
		In the following, we denote by $ \mathcal{E} $ the list ot these efficiencies;
		due to the Law of Large Numbers \ref{Th the Law of Large Numbers} we know this is an approximation of their expected values. An example is presented in \textsc{Table} \ref{Tbl 50 realizations example} and \textsc{Figure} \ref{Fig Averaged Efficiencies Example} below.
	\end{enumerate}
\end{definition}
\begin{figure}
	\centering
	\begin{subfigure}[]
		\centering
		\begin{tikzpicture}
		[scale=.9,auto=left,every node/.style={}]
		\node (n0) at (6,6) {
			$ \begin{pmatrix}
			\textsc{Standard Setting}\\[3pt]
			N = 512,\; m = 4 \\[3pt]
			\text{dist} \in \{ \text{Ub, Pd, Bd} \} \\[3pt]
			\textit{T-alg} \in \{ \text{hlT, blT} \}\\[3pt]
			o = 0.9, \; r = 54, \, s = \gamma, \; f = 0.5
			\end{pmatrix}  $
		};
		\node (n1) at (2,2)  {$ \begin{pmatrix}
			\textsc{Standard Setting}\\[3pt]
			N = 512,\; m = 4 \\[3pt]
			\text{dist} \in \{ \text{Ub, Pd, Bd} \}\\[3pt]
			\textit{T-alg} =  \text{hlT} \\[3pt]
			o = 0.9, \; r = 54, \, s = \gamma, \; f = 0.5
			\end{pmatrix} $};
		\node (n2) at (0,-0.5)  {};
		\node (n3) at (4,-0.5)  {};
		\node (n7) at (2,-0.5) {};
		\node (n4) at (10,2)  {$ \begin{pmatrix}
			\textsc{Standard Setting}\\[3pt]
			N = 512,\; m = 4 \\[3pt]
			\text{dist} \in \{ \text{Ub, Pd, Bd} \}\\[3pt]
			\textit{T-alg} =  \text{blT} \\[3pt]
			o = 0.9, \; r = 54, \, s = \gamma, \; f = 0.5
			\end{pmatrix} $};
		\node (n5) at (8,-0.5)  {};
		\node (n6) at (12,-0.5)  {};
		\node (n8) at (10, -0.5) {};
		
		\foreach \from/\to in {n0/n1, n1/n2,n1/n3, n1/n7, n0/n4, n4/n5, n4/n6, n4/n8}
		\draw (\from) -- (\to);
		\end{tikzpicture}
		%
	\end{subfigure}
	%
	\begin{subfigure}[]
		\centering 
		\begin{tikzpicture}
		[scale=.9,auto=left,every node/.style={}]
		\node (n0) at (5,6) {
			$ \begin{pmatrix}
			\textsc{Balanced-Left Method}\\[3pt]
			N = 512,\; m = 4 \\[3pt]
			\text{dist} \in \{ \text{Ub, Pd, Bd} \} \\[3pt]
			\textit{T-alg} \in \{ \text{hlT, blT} \}\\[3pt]
			o = 0.9, \; r = 54.
			\end{pmatrix}  $
		};
		\node (n1) at (0,2)  {$ \begin{pmatrix}
			\textsc{Balanced-Left Method}\\[3pt]
			N = 512,\; m = 4 \\[3pt]
			\text{dist} \in \{ \text{Ub, Pd, Bd} \}\\[3pt]
			\textit{T-alg} =  \text{blT} \\[3pt]
			o = 0.9, \; r = 54.
			\end{pmatrix} $};
		\node (n2) at (5,2)  {$ \begin{pmatrix}
			\textsc{Balanced-Left Method}\\[3pt]
			N = 512,\; m = 4 \\[3pt]
			\text{dist} \in \{ \text{Ub, Pd, Bd} \}\\[3pt]
			\textit{T-alg} =  \text{blT} \\[3pt]
			o = 0.9, \; r = 54.
			\end{pmatrix} $};
		\node (n3) at (10,2)  {$ \begin{pmatrix}
			\textsc{Standard Setting}\\[3pt]
			N = 512,\; m = 4 \\[3pt]
			\text{dist} \in \{ \text{Ub, Pd, Bd} \}\\[3pt]
			\textit{T-alg} =  \text{blT} \\[3pt]
			o = 0.9, \; r = 54, \, s = \gamma, \; f = 0.5
			\end{pmatrix} $  };
		\node (n4) at (-2, -0.5)  {};
		\node (n5) at (0, -0.5)  {};
		\node (n6) at (2, -0.5)  {};
		\node (n7) at (3,-0.5) {};
		\node (n8) at (5, -0.5) {};
		\node (n9) at (7,-0.5) {};
		\node (n10) at (8,-0.5) {};
		\node (n11) at (10, -0.5) {};
		\node (n12) at (12,-0.5) {};
		
		\foreach \from/\to in {n0/n1, n0/n2, n0/n3, n1/n4, n1/n5, n1/n6, 
			n2/n7, n2/n8, n2/n9, n3/n10, n3/n11, n3/n12}
		\draw (\from) -- (\to);
		\end{tikzpicture}
		%
	\end{subfigure}
	%
	%
	\caption{Schematics of the set of numerical experiments in search of optimal strategies. The first level, depicted in Figure (a), branches on the tree generation method: $ \text{lhT} $ and $ \text{blT} $. The second level branches on the remaining strategies: $ o $, $ r $, $ s $ for both $ \{ \text{lhT}, \text{blT}\} $ and $ f $ for the $ \text{lhT} $ method. Figure (b) displays the branching process for the $ \text{blT} $ method; a similar diagram corresponds for the $ \text{lhT} $ method.}
	\label{Fig Branch Strategies Tree} 
\end{figure}
%
%
\subsection{Critical Height and $ \text{hlT} $ vs. $ \text{blT} $ 
	strategies Comparison}\label{Sec Critical Height}
%
%
As a first step we find a critical height. From the numerical experiments, it is observed that the method heavily deteriorates beyond certain height i.e., after certain number of D\&C iterations, as it can be seen in the 
figures \ref{Fig Efficiencies through tree} and \ref{Fig Shorten Five Random Realizations} from \textsc{Example} \ref{Exm DC tree random realization}, where it can be observed that beyond $ h > 3 $ the slope becomes very steep, therefore a critical height needs to be adopted.
\begin{definition}\label{Def Critical Height}
	Given an experiment of 50 realizations with a fixed set of parameters $ P = (512, dist, o, r, \talg, s, f, 4) $ and let $ v\in \{ o, r, s, f \} $ be a variable running through its full domain. Define
	\begin{enumerate}[(i)]
		\item For a fixed efficiency $ \eff \in \mathcal{E} $, denote respectively $ \bar{\ups}(\eff, P) $, $ \bar{\ups}(\eff, P, v) $, the average value of 50 random realizations executed with parameters $ P $ and the list of such values when the variable $ v $ runs through its whole domain, see \textsc{Table} \ref{Tbl 50 realizations example} and \textsc{Figure} \ref{Fig Averaged Efficiencies Example} below.
		
		\item For a fixed efficiency $ \eff \in \mathcal{E} $, denote by 
		\begin{align*}
		& \bar{\ups}' (\eff, P, v)(h) \defining \bar{\ups}(\eff, P, v)(h) - \bar{\ups}(\eff, P, v)(h -1),  &
		& \text{for }  h = 1, 2, \ldots, H ,
		\end{align*}
		with $ H$ the height of the D\&C tree. Denote by $ \bar{\ups}'_{v} (\eff, P)(h) = \max\{ \bar{\ups}' (\eff, P, v)(h): v\,\in \, \text{full domain}\}  $.
		
		\item For each of the efficiencies $ \eff \in \mathcal{E} $, its critical height relative to the variable $ v $, denote by $ h_{v}(\eff, P) $ the last height $ h $ satisfying $ \bar{\ups}'_{v} (\eff, P)(h) \leq 2 \bar{\ups}'_{v} (\eff, P)(h - 1)  $, see \textsc{Table} \ref{Tbl 50 realizations example} and \textsc{Figure} \ref{Fig Averaged Efficiencies Example}. 
		
		\item The critical height of the experiment relative to the variable $ v $, denoted by $ h_{v}(P) $ is given by the mode of the list $ \{ h_{v}(\eff, P): \eff \in \mathcal{E} \} $.
		
		
		\item In order to compare the experiments $ P_{\text{hlT}} = (512, dist, o, r, \text{hlT}, s, 0.5, 4) $ and $ P_{\text{blT}} = (512, dist, o, r, \text{lbT}, s, 4) $ (head-left vs balanced), relative to the variable $ v $, we proceed as follows: set the height $ \tilde{h} \defining \min\{ h_{v}(P_\text{{hlT}}), h_{v}(P_\text{{blT}}) \} 
		$ (see \textsc{Table} \ref{Tbl Critical Heights Table Optimal Head Fraction}) and compute the $ \ell^{1}$-norm for the arrays $ \{\bar{\ups}(\eff, P_{\text{hlT}}, v)(h): \eff \in \mathcal{E},\, h = 1, 2, \ldots, \tilde{h}\}  $, $ \{\bar{\ups}(\eff, P_{\text{blT}}, v)(h): \eff \in \mathcal{E}, \, h = 1, 2, \ldots, \tilde{h}\}  $, when regarded as lists (not as matrices, as \textsc{Table} \ref{Tbl 50 realizations example} would suggest). The lowest of these norms yields the best strategy among $ \text{hlT} $ and $ \text{blT} $.
		
	\end{enumerate}
\end{definition}
\begin{example}\label{Exm 50 realizations}
	In the table \ref{Tbl 50 realizations example} below we display $ \{\bar{\ups}(GbE_{DPS}, P, r)(h): h = 0, 1, \ldots, 7\} $ i.e., the averaged values corresponding to 50 realizations for the efficiency $ \eff = GbE_{DPS} $ running through the full domain of the price-capacity rate i.e., $ v = r $. The list of parameters is given by $ P = (512, \text{Ud}, 0.9, r, \text{hlT}, \boldsymbol{\gamma}, 0.5, 4) $ with $ r \in \{ 34, 44, 54, 64, 74 \} $. The tables corresponding to the intermediate slope variables $ \{\bar{\ups}' (P, v)(h): h  = 0, 1, \ldots 7\}$, $ \{\bar{\ups}'_{v} (P)(h): h  = 0, 1, \ldots 7\}$ are omitted since they can be completely deduced from \textsc{Table} \ref{Tbl 50 realizations example}. In this particular example $ h_{v}(\eff)  = h_{r}(GbE_{DPS}) = 5 $. 
	\begin{table}[h!]
		\begin{centering}
			\rowcolors{2}{gray!25}{white}
			\begin{tabular}{c c c c c c }
				\hline
				\rowcolor{gray!50}
				Height &  $ r = 34 $ & $ r = 44 $ &$ r = 54 $ & $ r = 64 $ & $ r = 74 $ 
				\\
				\hline
				0 &	0.00 &	0.00 &	0.00 &	0.00 &	0.00 
				\\
				1 & 1.26 &	1.93 &	1.95 &	2.29 &	1.82 
				\\
				2 & 2.11 &	3.19 &	3.40 &	3.45 &	2.74 
				\\
				3 &	2.66 &	3.91 &	4.27 &	4.34 &	3.53 
				\\
				4 & 3.15 &	4.52 &	4.85 &	4.92 &	4.17 
				\\
				\ding{217} $ h_{r}(P) = 5 $ &	3.93 &	5.68 &	6.63 &	7.08 &	6.24 
				\\
				6 &	7.72 &	9.56 &	10.11 &	9.99 &	8.74 
				\\
				7 & 14.11 &	15.50 &	15.70 &	15.66 &	14.83 
				\\
				\hline
			\end{tabular}
			\caption{Average values of 50 random realizations for the efficiency variable $ \eff = GbE_{DPS} $ relative to the variable $ v = r $. The experiments parameters $ P = (N, dist, o, r, \talg, s, f, m) $ have the following values: $ N = 512 $, $ \dist = \text{Ud} $, $ o = 0.9 $, $ r \in \{34, 44, 54, 64, 74\} $, $ \talg = \text{hlT} $, $ s = \boldsymbol{\gamma} $, $f = 0.5 $, $ m = 4 $.}
			\label{Tbl 50 realizations example}
		\end{centering}
	\end{table}
	Finally, the corresponding solution is presented in \textsc{Figure} \ref{Fig Averaged Efficiencies Example} (a), together with its analogous for the efficiencies $ SwE_{DPS}, GAE, LRE $ ((b), (c) and (d) respectively). We chose to present these efficiencies because the Exact Solution behavior $ DPS $, is the central parameter to asses the quality of the method for measuring the quality of the solution, while the efficiencies $ GAE, LRE $ measure the expected quality of the usual bounds (Greedy Algorithm and Linear Relaxation) through the D\&C tree.
	\begin{figure}[h!]
		\centering
		\begin{subfigure}[$ \bar{\ups}(GbE_{DPS}, P, r) $, $ r \in\{34, 44, 54, 64, 74\} $.]
			{\includegraphics[scale = 0.380]{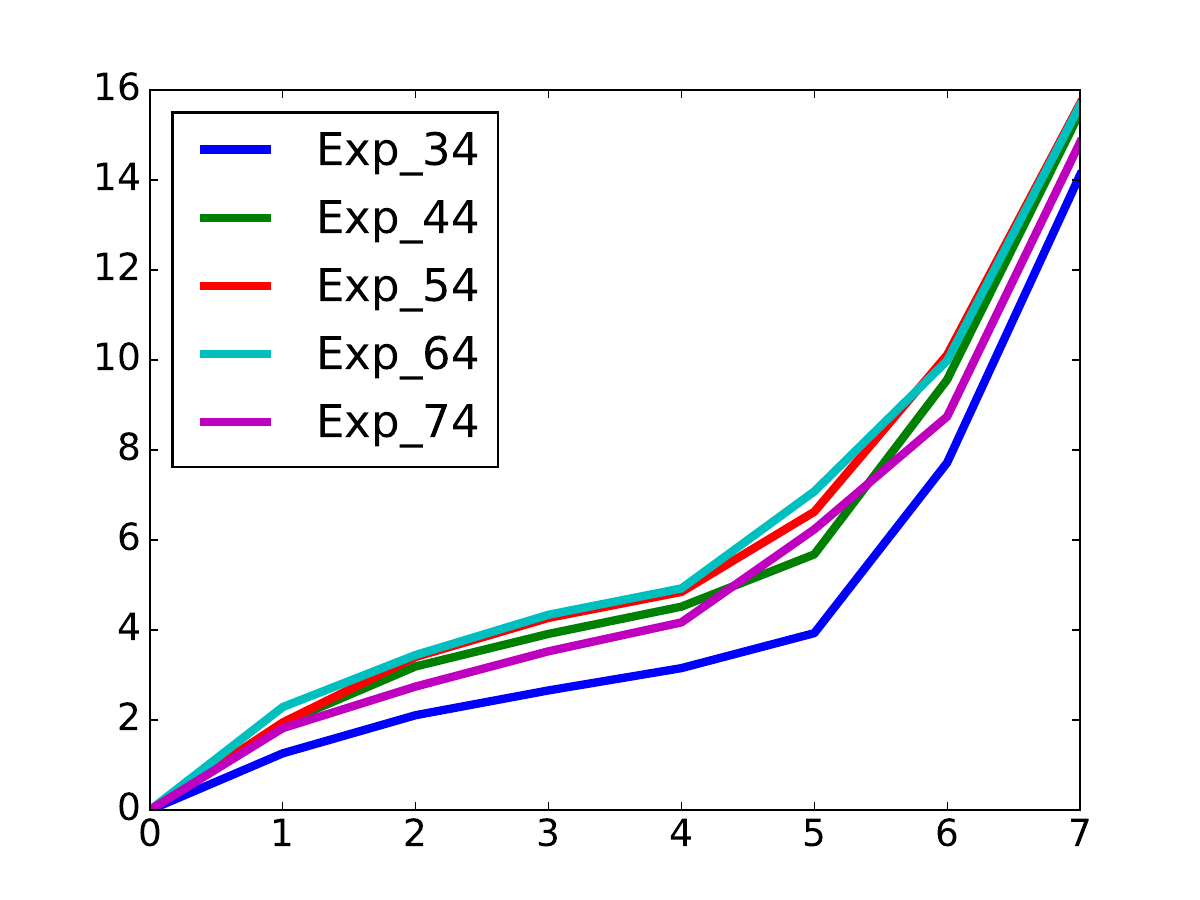} } 
		\end{subfigure}
		~ 
		\begin{subfigure}[$ \bar{\ups}(SwE_{DPS}, P, r) $, $ r \in\{34, 44, 54, 64, 74\} $.]
			{\includegraphics[scale = 0.380]{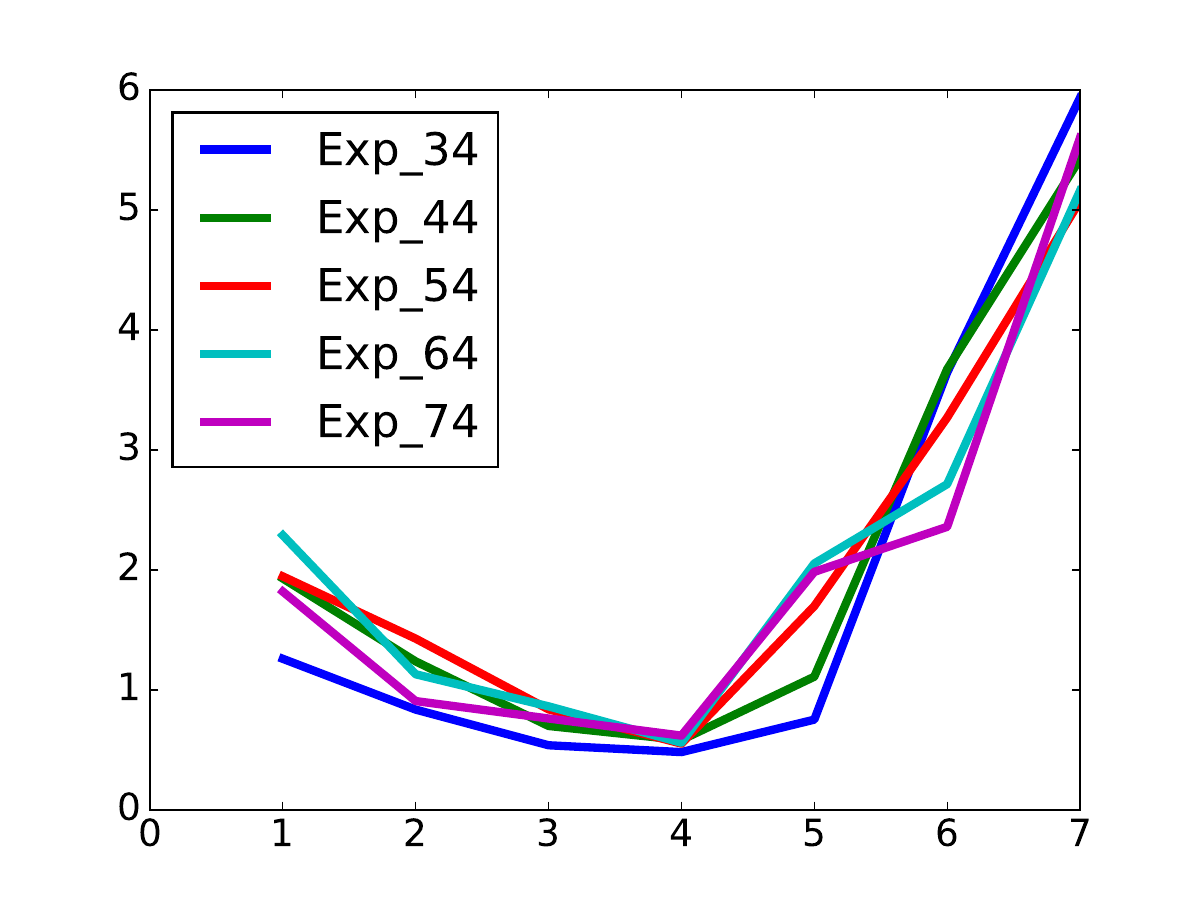} } 
		\end{subfigure}
		\begin{subfigure}[$ \bar{\ups}(GAE, P, r) $, $ r \in\{34, 44, 54, 64, 74\} $.]
			{\includegraphics[scale = 0.380]{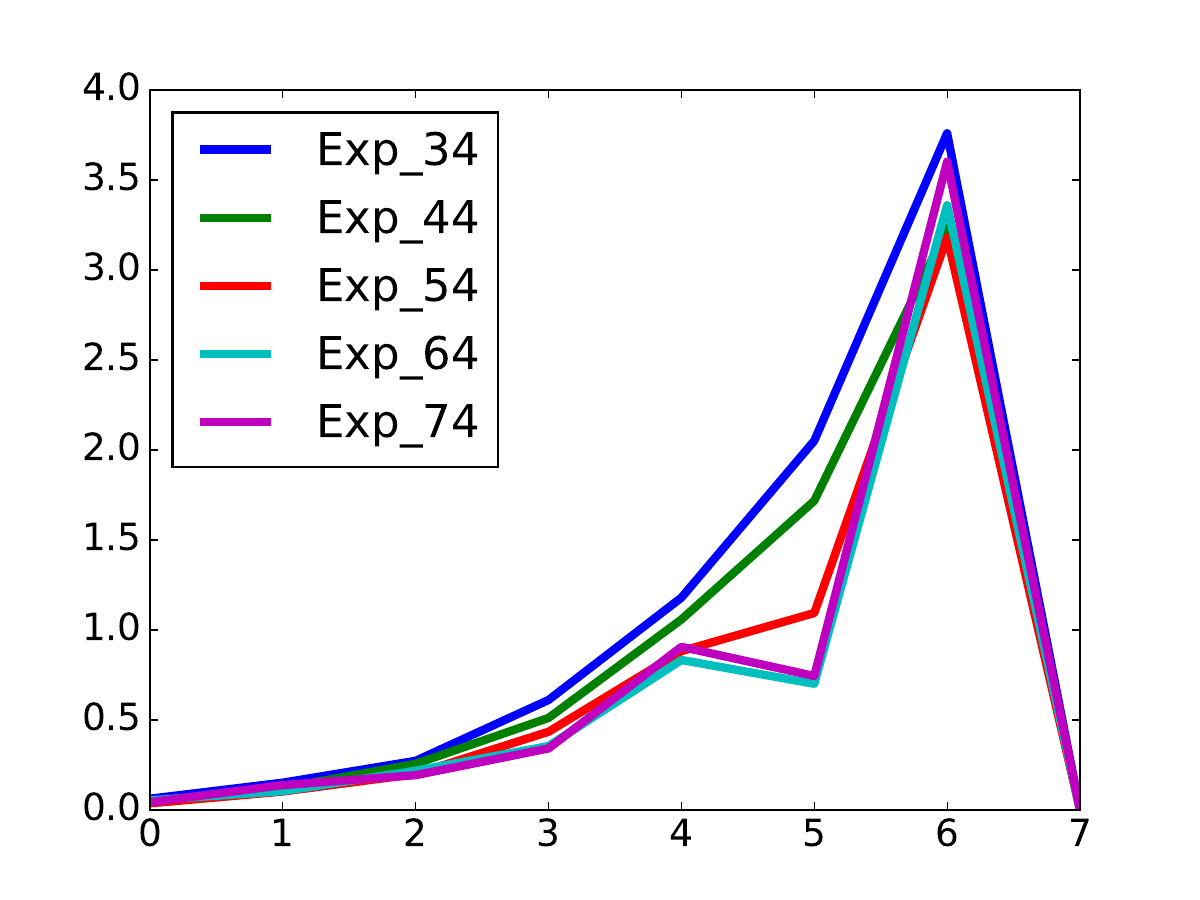} } 
		\end{subfigure}
		~ 
		\begin{subfigure}[$ \bar{\ups}(LRE, P, r) $, $ r \in\{34, 44, 54, 64, 74\} $.]
			{\includegraphics[scale = 0.380]{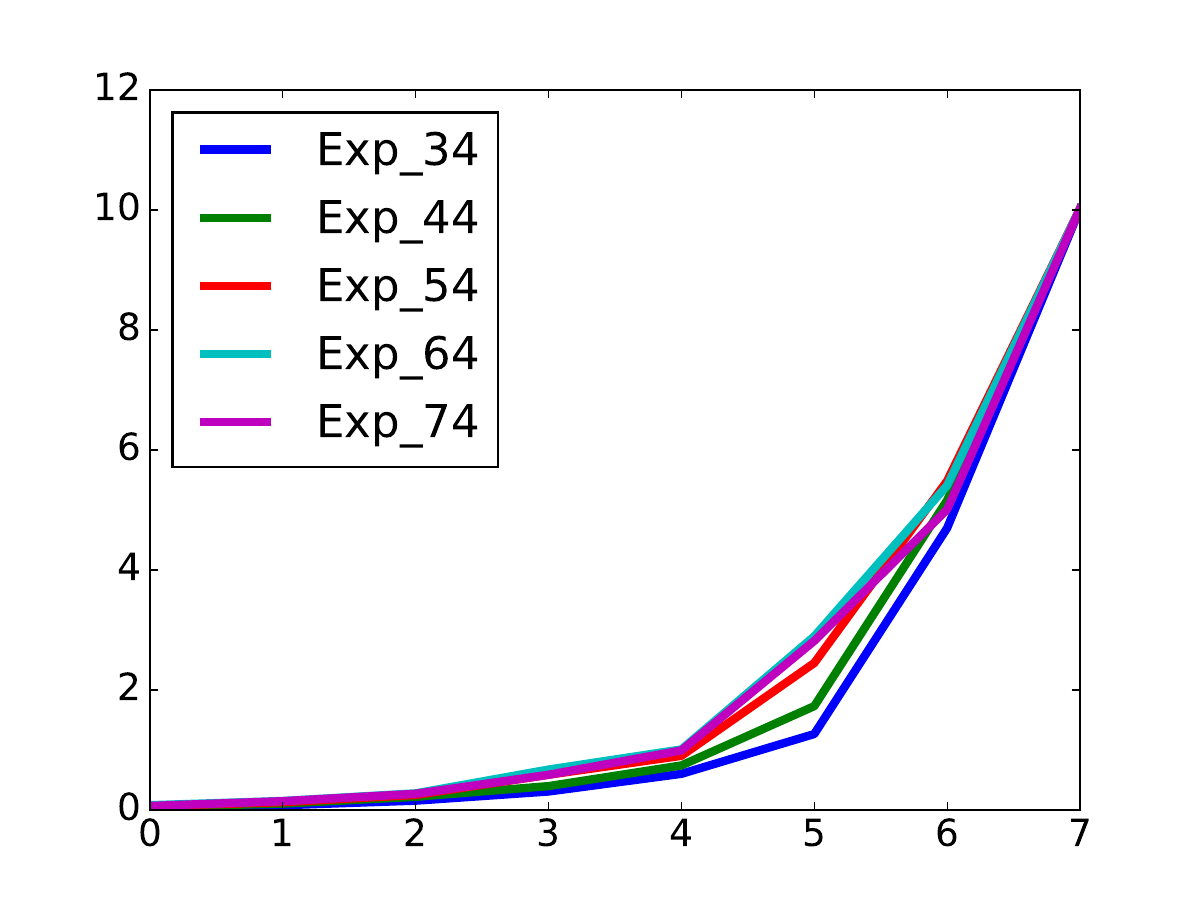} } 
		\end{subfigure}
		\caption{\textsc{Example} \ref{Exm 50 realizations}. Averaged values for 50 random realizations. Four particular efficiencies are depicted $ GbE_{DPS}, SwE_{DPS} , GAE, LRE $. The notation $ \text{Exp}\_{r} $ with $ r\in \{34,,44, 54, 64, 74 \} $ in the graphics' legends, stands for the expected value for the corresponding $ \bar{\ups}(\eff, P, r) $, $ \eff \in \{ GbE_{DPS}, SwE_{DPS} , GAE, LRE\} $.  }
		\label{Fig Averaged Efficiencies Example} 
	\end{figure}
\end{example}

\noindent Given that the aim of this section is to compare the generation methods $ \text{hlT} $ vs. $ \text{blT} $, we first find the optimal head fraction value $ f $ for $ \text{hlT} $, in order to attain the best possible efficiencies for the $ \text{hlT} $ method. The results are summarized in \textsc{Table} \ref{Tbl Head Fraction Method} below; the pointing arrows indicate the optimal head fraction values
\begin{table}[h!]
	\begin{centering}
		\rowcolors{2}{gray!25}{white}
		\begin{tabular}{c c c c c c c }
			\hline
			\rowcolor{gray!50}
			Head Fraction &  \multicolumn{2}{ c }{Uniform}  & \multicolumn{2}{ c }{Poisson} 
			& \multicolumn{2}{ c }{Binomial} \\
			\rowcolor{gray!50}
			$ f $ 
			&  \multicolumn{2}{ c }{Height $ h_{o}(P_{\text{blT}}) = 4 $ }  
			& \multicolumn{2}{ c }{Height $ h_{o}(P_{\text{blT}}) = 3 $} 
			& \multicolumn{2}{ c }{Height $ h_{o}(P_{\text{blT}}) = 3 $} \\
			\rowcolor{gray!50}
			& $ GbE_{DPS}$ &  $ \eff \in \mathcal{E} $ 
			& $ GbE_{DPS}$ &  $ \eff \in \mathcal{E} $  
			& $ GbE_{DPS}$ &  $ \eff \in \mathcal{E} $  \\
			\hline
			0.35 
			& \ding{217} 8.27  & \ding{217} 26.45 & 2.83 & 8.87 & \ding{217} 3.36 & \ding{217} 10.54\\
			0.40 
			& 8.88 & 28.20 & 3.06 & 9.60 & 3.54 & 11.04\\
			0.45 
			& 9.30 & 29.43 & 2.87 & 8.97 & 3.69 & 11.56\\
			0.50 
			& 9.62 & 30.51  & 2.94 & 9.27 & 3.67 & 11.49\\
			0.55 
			& 9.72 & 30.57 & 2.95 & 9.30 & 4.02 & 12.51\\
			0.60 
			& 9.64 & 30.48 & 3.12 & 9.78 & 4.23 & 13.12\\
			0.65 
			& 9.73 & 30.63 & \ding{217} 2.82 & \ding{217} 8.81 & 4.16 & 13.06\\
			\hline
		\end{tabular}
		\caption{Head Fraction Comparison. Table registering values of $ \ell^{1} $-norms of arrays $  \{\bar{\ups}(\eff, P_{\text{blT}}, f)(h): \eff \in \mathcal{E}, \, h = 1, 2, \ldots, h_{f}(P_{\text{blT})} \} $ and  $ \{\bar{\ups}(GbE_{DPS}, P_{\text{blT}}, f)(h): h = 1, 2, \ldots, h_{f}(P_{\text{blT}} )\} $. The values are displayed for $ f \in \text{full domain} $ and $\dist \in \{\text{Ub, Pd, Bd} \} $. The remaining parameters are $ o = 0.9 $, $ r = 54 $, $ s = \boldsymbol{\gamma} $ and Minimum List Size $ m = 4 $ i.e., $ P_{\text{hlT}} = (512, dist, o, 54, \text{hlT}, \boldsymbol{\gamma}, 0.5, 4) $. The pointing arrows indicate the optimal strategy within its column  or family of comparable experiments.}
		\label{Tbl Head Fraction Method}
	\end{centering}
\end{table}
%
\begin{remark}\label{Rem hlt interpretation}
	As it can be observed in \textsc{Table} \ref{Tbl Head Fraction Method}, the optimal values are attained at the extremes of the head fraction experimental range and this happens in the three distributions in analysis. This is hardly surprising, as a bigger head (for the Poisson distribution) or a bigger tail (for the Uniform and Binomial distributions) have a better chance to capture a big chunk of a real optimal solution. Furthermore, when the range of head fraction is extended, namely if we take $ [a, b] \subseteq [0,1] $ such that $ [0.35, 0.65] \subsetneq [a, b] $, the optima occur at the extremes $ a $ and $ b $. In particular for $ f \in \{0,1\} $ we are back in the original problem and attaining the original optima with the original computational complexity, which defeats the purpose of the D\&C method itself.
\end{remark}
\noindent Adopting the optimal heights for the $ \text{lhT} $ method and recalling the definitions above, the list of critical heights is summarized in the table \ref{Tbl Critical Heights Table Optimal Head Fraction} below. The pointing arrows indicate the comparison height between $ \text{hlT} $ and $ \text{blT} $ tree generation methods.\\
%
%
%
\begin{table}[h!]
	\begin{centering}
		\rowcolors{2}{gray!25}{white}
		\begin{tabular}{c c c c c c c }
			\hline
			\rowcolor{gray!50}
			Var &  \multicolumn{2}{ c }{Uniform, $ f = 0.35 $}  
			& \multicolumn{2}{ c }{Poisson, $ f = 0.65 $} 
			& \multicolumn{2}{ c }{Binomial, $ f = 0.35 $} \\
			\rowcolor{gray!50}
			$ v $ &  Head-Left & Balanced & Head-Left & Balanced & Head-Left & Balanced\\
			\hline
			$ o $ 
			& 5  & \ding{217} 4 & 4 & \ding{217} 3 & \ding{217} 4 & \ding{217} 4\\
			$ r $ 
			& \ding{217} 5 & \ding{217} 5 & \ding{217} 2 & 3 & \ding{217} 2 & 3\\
			$ s $ 
			& \ding{217} 4 & 6 & \ding{217} 3 &  4 & \ding{217} 3& 5\\
			\hline
		\end{tabular}
		\caption{Critical Heights table. Each corresponds to the expected values of efficiencies $ \mathcal{E} $ coming from the experiments with parameters $ P = (512, dist, o, r, \text{lfT}, s, f, 4) $ ($ f = 0.35 $ for $ \dist \in \{\text{Ud, Bd} \}$, $ f = 0.65 $ for $ \dist = Pd $) or $ P = (512, dist, o, r, \text{blT}, s, 4)   $. The heights pointed with arrows are the values valid for comparison between the  $ \text{lfT} $ and $ \text{blT} $ tree generation methods.
		}
		\label{Tbl Critical Heights Table Optimal Head Fraction}
	\end{centering}
\end{table}

\noindent Once the heights' comparison values are found, we proceed to compare both methods in analogous conditions i.e., when the remaining variables are equal. The results for the demand-capacity fraction variable $ o $, running through its full domain, are summarized in \textsc{Table} \ref{Tbl Occupancy Left Head Balanced Methods Comparison Head Optimal}. Similar tables were constructed for the price-capacity rate $ r \in \{34, 44, 54, 64, 74 \}$ and the sorting $ s \in \{\p, \c, \boldsymbol{\gamma}, \text{random}  \}$ variables running through their respective full domains which we omit here for the sake of brevity.\\

\noindent It is important to notice that in \textsc{Table} \ref{Tbl Occupancy Left Head Balanced Methods Comparison Head Optimal} all the values corresponding to the $ \text{blT} $ method are lower than its corresponding analogous for the $ \text{hlT} $ algorithm. The same phenomenon can be observed for the table running through the price-capacity rate $ r $. The table running through the sorting variable also shows clear predominance of the $ \text{blT} $ over the $ \text{hlT} $ method, though it is not absolute (10 out of 12 cases) as in the previous cases. Furthermore, noticing the differences of values, it follows that $ \text{blT} $ produces significant better results than $ \text{hlT} $. Therefore, this choice of strategy when using the D\&C approach is clear and the remaining strategies need to be decided based on $ \text{blT} $ tree generation algorithm results. 
%
%
\begin{remark}[Head Fraction $ f $ values]
	With regard to the optimal head fraction values it is important to notice the following
	\begin{enumerate}[(i)]
		\item As expected, the optimal values tend to be on the extremes $ f = 0.35 $ or $ f = 0.65 $, since $ f = 0 $ or $ f = 1 $ would imply that no D\&C pair has been introduced and therefore the efficiency should be 100\%.
		
		\item Notice that $ \text{hlT} $ generated D\&C tree for $ f \neq 0.5 $ will be deeper than its analogous for $ \text{blT} $, see \textsc{Figures} \ref{Fig Tree Generated by Head-Left Algorithm}, \ref{Fig Tree Generated by Head-Left Algorithm Biased} and \ref{Fig Tree Generated by Balanced Left-Right Algorithm}. In particular, the $ \text{hlT} $ method with optimal head fraction values ($ f = 0.35 $, $ f = 0.65 $) has higher complexity than its $ \text{blT} $ analogous.
		
		\item A similar comparison procedure was done between $ \text{hlT} $ and $ \text{blT} $, when $ f = 0.5 $ i.e., the standard. As expected, the $ \text{hlT} $ yields poorer results than using the optimal head fraction values and $ \text{blT} $ is remarkably superior.
	\end{enumerate}
\end{remark}
\begin{table}[h!]
	\begin{centering}
		\rowcolors{2}{gray!25}{white}
		\begin{tabular}{c c c c c c c }
			\hline
			\rowcolor{gray!50}
			Occupancy &  \multicolumn{2}{ c }{Uniform}  & \multicolumn{2}{ c }{Poisson} 
			& \multicolumn{2}{ c }{Binomial} \\
			\rowcolor{gray!50}
			$ o $ &  \multicolumn{2}{ c }{Comparison Height $ \tilde{h} = 4 $ }  & 
			\multicolumn{2}{ c }{Comparison Height $ \tilde{h} = 3$} 
			& \multicolumn{2}{ c }{Comparison Height $ \tilde{h} = 4$} \\
			\rowcolor{gray!50}
			&  Head-Left & Balanced & Head-Left & Balanced & Head-Left & Balanced\\
			\rowcolor{gray!50}
			&  $ f = 0.35 $ &  & $ f = 0.65 $ &  & $ f = 0.35 $ & \\
			\hline
			0.50 
			& 98.66 & 7.19 & 35.51 & 3.12 & 49.59 & 7.00\\
			0.55 
			& 86.04 & 7.69 & 32.95 & 2.65 & 44.37 & 5.64\\
			0.60 
			& 74.47 & 7.57 & 29.20 & 2.58 & 40.19 & 5.86\\
			0.65 
			& 64.25 & 6.73  & 25.35 & 2.42 & 37.71 & 5.33\\
			0.70 
			& 55.42 & 6.16 & 21.83 & 2.01 & 34.41 & 4.46\\
			0.75 
			& 47.39 & 5.56 & 18.61 &  1.77 & 30.73 & 4.56\\
			0.80 
			& 40.53 & 5.13 & 15.58 & 2.09 & 27.25 & 3.81\\
			0.85 
			& 34.35 & 3.74 & 12.57 & 2.30 & 23.93 & 4.00\\
			0.90 
			& 26.45 & 3.81 & 8.81 & 2.08 & 18.83 &  4.11\\
			\hline
		\end{tabular}
		\caption{Occupancy Fraction Comparison. Table registering values of $ \ell^{1} $-norms of arrays $  \{\bar{\ups}(\eff, P_{\talg}, o)(h): \eff \in \mathcal{E}, \, h = 1, 2, \ldots, \tilde{h}\} $ for $ o \in \text{full domain} $, $ \talg \in \{\text{hlT}, \text{blT} \}$ and $\dist \in \{\text{Ub, Pd, Bd} \} $. The remaining parameters are $ r = 54 $, $ s = \boldsymbol{\gamma} $, Left Head Fraction $ f = 0.35 $ for $ \dist \in \{\text{Ud, Bd} \}$, $ f = 0.65 $ for $ \dist = Pd $ (if $ \talg = \text{hlT} $) and Minimum List Size $ m = 4 $ i.e., $ P = (512, dist, o, 54, \talg, \boldsymbol{\gamma}, 
			f \in \{0.35, 0.65\}, 4) $. Observe that in all the instances of the problems, the $ \text{blT} $ method gives better results than the $ \text{hlT} $. 
		}
		\label{Tbl Occupancy Left Head Balanced Methods Comparison Head Optimal}
	\end{centering}
\end{table}
\subsection{Optimal Strategies 
}\label{Sec Best Strategies}
%
%
In the previous section, it was determined that $ \text{blT} $ produces better results than $ \text{hlT} $. Consequently, from now on, we focus on finding the best values for the remaining parameters: $ o $, $ r  $ and $ s $ conditioned to the $ \text{blT} $ tree generation method. 

\noindent First we revisit the pruning height of the tree: given that $ \tilde{h} \leq h_{v}(P_{\text{blT}})  $ (as introduced in \textsc{Definition} \ref{Def Critical Height} (v)) and the analysis is now narrowed down to the $ \text{blT} $ method, the computations will be done for these heights because is desirable to stretch the D\&C method as far as possible but within the quality deterioration control established by $  h_{v}(P_{\text{blT}}) $ (see \textsc{Definition} \ref{Def Critical Height} (iv)). 

\noindent Second, now we analyze the method from two points of view. A global one, as it has been done so far accounting for the overall efficiency of the variables in $ \mathcal{E} $ by computing the $ \ell^{1} $-norm of the array $ \{\bar{\ups}(\eff, P_{\text{blT}}, v)(h): \eff \in \mathcal{E}, \, h = 1, 2, \ldots, \tilde{h}\} $ as introduced in \textsc{Definition} \ref{Def Critical Height} (v). A specialized and second point of view, uses only the $ \ell^{1} $-norm of the array $ \{\bar{\ups}(GbE_{DPS}, P_{\text{blT}}, v)(h): h = 1, 2, \ldots, \tilde{h}\} $ i.e., regarding only the behavior of the efficiency $ GbE_{DPS} $,
through the variables $ o $, $ r $ and $ s $.  This specialized measurement is presented because the efficiency of the Exact Solution ($ DPS $) is the most important parameter, given that it contains the behavior of the exact solution. 

\noindent In \textsc{Table} \ref{Tbl Occupancy Balanced Methods} below the $ GbE_{DPS} $ and the global efficiency $ \eff \in \mathcal{E} $ are presented for the demand-capacity fraction variable $ o $, running through its full domain. The pointing arrows indicate the optimal strategy within its column  or family of comparable experiments. As in the previous stage, similar tables were built for the price-capacity rate $ r \in \{34, 44, 54, 64, 74 \}$ and the sorting $ s \in \{\p, \capac, \boldsymbol{\gamma}, \,\text{random}  \}$ variables, running through their respective full domains, which we omit here for the sake of brevity. Finally, in the tables \ref{Tbl Strategies Summary DPS} and \ref{Tbl Strategies Summary Global} below we summarize the optimal strategies from both points of view, the specialized $ GbE_{DPS} $ and the global one $ \eff \in \mathcal{E} $. 
\begin{table}[h!]
	\begin{centering}
		\rowcolors{2}{gray!25}{white}
		\begin{tabular}{c c c c c c c }
			\hline
			\rowcolor{gray!50}
			Occupancy &  \multicolumn{2}{ c }{Uniform}  & \multicolumn{2}{ c }{Poisson} 
			& \multicolumn{2}{ c }{Binomial} \\
			\rowcolor{gray!50}
			$ o $ 
			&  \multicolumn{2}{ c }{Height $ h_{o}(P_{\text{blT}}) = 4 $ }  
			& \multicolumn{2}{ c }{Height $ h_{o}(P_{\text{blT}}) = 3 $} 
			& \multicolumn{2}{ c }{Height $ h_{o}(P_{\text{blT}}) = 4 $} \\
			\rowcolor{gray!50}
			& $ GbE_{DPS}$ &  $ \eff \in \mathcal{E} $ 
			& $ GbE_{DPS}$ &  $ \eff \in \mathcal{E} $  
			& $ GbE_{DPS}$ &  $ \eff \in \mathcal{E} $  \\
			\hline
			0.50 
			& 1.28  & 7.19 & 0.67 & 3.12 & 1.83 & 7.00\\
			0.55 
			& 1.14 & 7.69 & 0.58 & 2.65 & 1.35 & 5.64\\
			0.60 
			& 1.08 & 7.57 & 0.50 & 2.58 & 1.30& 5.86\\
			0.65 
			& 0.95 & 6.73  & 0.54 & 2.42 & 1.29 & 5.33\\
			0.70 
			& 0.87 & 6.16 & \ding{217} 0.33 & 2.01 & 1.00 & 4.46\\
			0.75 
			& 0.79 & 5.56 & 0.42 & \ding{217} 1.77 & 0.98  & 4.56\\
			0.80 
			& \ding{217} 0.75 & 5.13 & 0.50 & 2.09 & 0.82 & \ding{217} 3.81\\
			0.85 
			& \ding{217} 0.75 & \ding{217} 3.74 & 0.52 & 2.30 & 0.78 & 4.00\\
			0.90 
			& 0.84 & 3.81 & 0.56 & 2.08 & \ding{217} 0.62 & 4.11\\
			\hline
		\end{tabular}
		\caption{Occupancy Fraction Comparison. Table registering values of $ \ell^{1} $-norms of arrays $  \{\bar{\ups}(\eff, P_{\text{blT}}, o)(h): \eff \in \mathcal{E}, \, h = 1, 2, \ldots, h_{o}(P_{\text{blT}} \} $ and  $ \{\bar{\ups}(GbE_{DPS}, P_{\text{blT}}, o)(h): h = 1, 2, \ldots, h_{o}(P_{\text{blT}} )\} $. The values are displayed for $ o \in \text{full domain} $ and $\dist \in \{\text{Ub, Pd, Bd} \} $. The remaining parameters are $ r = 54 $, $ s = \boldsymbol{\gamma} $ and Minimum List Size $ m = 4 $ i.e., $ P = (512, dist, o, 54, \text{blT}, \boldsymbol{\gamma}, 4) $. The pointing arrows indicate the optimal strategy within its column  or family of comparable experiments.}
		\label{Tbl Occupancy Balanced Methods}
	\end{centering}
\end{table}
\begin{table}[h!]
	\begin{centering}
		\rowcolors{2}{gray!25}{white}
		\begin{tabular}{c c c c c c c c c c }
			\hline
			\rowcolor{gray!50}
			Var &  \multicolumn{3}{ c }{Uniform}  & \multicolumn{3}{ c }{Poisson} 
			& \multicolumn{3}{ c }{Binomial} \\
			\rowcolor{gray!50}
			$ v $ 
			& Strategy & Height & Error
			& Strategy & Height & Error
			& Strategy & Height & Error \\
			\rowcolor{gray!50}
			&  & $ h_{v}(P_{\text{blT}}) $ & $ \ell^{1} $
			&  & $ h_{v}(P_{\text{blT}}) $ & $ \ell^{1} $
			&  & $ h_{v}(P_{\text{blT}}) $ & $ \ell^{1} $ \\
			\hline
			$ o $ 
			& 0.80/0.85  & 4 & 0.75 & 0.70 & 3 & 0.33 & 0.90 &  4 & 0.62 \\
			$ r $ 
			& 34 & 5 & 1.08 & 34 & 3 & 0.22 & 34 & 3 & 0.15 \\
			$ s $ 
			& $ \capac$ & 6 & 2.85 
			& $ \text{random} $ & 4 & 1.07
			& $ \capac$ & 5 & 1.01
			\\
			\hline
		\end{tabular}
		\caption{Chosen Strategies Table. Summary of best strategies. The expected errors are measured with the $ \ell^{1} $-norms of arrays $  \{\bar{\ups}(GbE_{DPS}, P_{\talg}, v)(h):  h = 1, 2, \ldots,  h_{v}(P_{\text{blT}})\} $, for each of the variables $ v \in \{o, r, s \}$. These were used as decision parameters; given that the norms are computed only for the $ BgE_{DPS} $ efficiency, this point of view only considers the exact solution. The tree generation method is the $ \text{blT} $ since it was determined as the best tree generation strategy. 
		}
		\label{Tbl Strategies Summary DPS}
	\end{centering}
\end{table}
\begin{table}[h!]
	\begin{centering}
		\rowcolors{2}{gray!25}{white}
		\begin{tabular}{c c c c c c c c c c }
			\hline
			\rowcolor{gray!50}
			Var &  \multicolumn{3}{ c }{Uniform}  & \multicolumn{3}{ c }{Poisson} 
			& \multicolumn{3}{ c }{Binomial} \\
			\rowcolor{gray!50}
			$ v $ 
			& Strategy & Height & Error
			& Strategy & Height & Error
			& Strategy & Height & Error \\
			\rowcolor{gray!50}
			&  & $ h_{v}(P_{\text{blT}}) $ & $ \ell^{1} $
			&  & $ h_{v}(P_{\text{blT}}) $ & $ \ell^{1} $
			&  & $ h_{v}(P_{\text{blT}}) $ & $ \ell^{1} $ \\
			\hline
			$ o $ 
			& 0.85  & 4 & 3.74 & 0.75 & 3 & 1.77 & 0.80 &  4 & 3.81 \\
			$ r $ 
			& 44 & 5 & 6.77 & 34 & 3 & 1.69 & 64 & 3 & 1.06 \\
			$ s $ 
			& $ \boldsymbol{\gamma} $ & 6 & 12.29 
			& $ \boldsymbol{\gamma} $ & 4 & 3.79
			& $ \boldsymbol{\gamma} $ & 5 & 8.52
			\\
			\hline
		\end{tabular}
		\caption{Summary of Chosen Strategies. In this case the expected errors are measured with the $ \ell^{1} $-norms of arrays $  \{\bar{\ups}(\eff, P_{\talg}, v)(h): \eff \in \mathcal{E}, \, h = 1, 2, \ldots,  h_{v}(P_{\text{blT}})\} $, for each of the variables $ v \in \{o, r, s \}$. These were used as decision parameters; given that the norms are computed through all the efficiencies in $ \mathcal{E} $, this a global point of view. The tree generation method is the $ \text{blT} $ since it was determined as the best tree generation strategy. 
		}
		\label{Tbl Strategies Summary Global}
	\end{centering}
\end{table}
%
%
%
%
%
%
\subsection{Computational Time}\label{Sec Computational Time}
%
%
%
%
In this section, we discuss the computational time needed for the Divide \& Conquer method. To that end, we present the relative times rather than the absolute computational times, as the latter values can greatly vary from one computer to another. More specifically we focus on the relative computational global time (GbT) and stepwise time (SwT), introduced in \textsc{Definition} \ref{Def DC Tree efficiency}, equations \eqref{Eqn Global Times}, \eqref{Eqn Stepwise Times}. \\

\noindent In the tables \ref{Tbl Computational Times DPS} we display the expected values of $ GbT $ and $ SwT $ (after 50 realizations) of the Exact Solution ($ DPS $), for the datasets generated by the three random distributions and taking the problems in standard setting (see \textsc{Definition} \ref{Def Standard Setting}), the corresponding graphs are depicted in \textsc{Figures} \ref{Fig Averaged Computational Times} (a), (b). In the same fashion, \textsc{Table} \ref{Tbl Computational Times LRS} and \textsc{Figures} \ref{Fig Averaged Computational Times} (c), (d), summarize the expected values for $ GbT $ and $ SwT $ when measuring the computational time of the Linear Relaxation Solution ($ LRS $). The Greedy Approximation Solution ($ GAS $) presents an analogous behavior to the $ LRS $, which we omit here for brevity. As it can be observed in both, the table and figures, the difference in computational time is marginal, but is strongly tied to the algorithm that must be solved along the D\&C tree. Moreover, a similar phenomenon will be observed when moving away from the standard setting to the other problem instances explored before (see \textsc{Figure} \ref{Fig Branch Strategies Tree} ), i.e., the computational time is essentially indifferent with respect to the D\&C strategies and the data distribution. \\
%
%
\begin{table}[h!]
	\begin{centering}
		\rowcolors{2}{gray!25}{white}
		\begin{tabular}{c c c c c c c  }
			\hline
			\rowcolor{gray!50}
			Height &  \multicolumn{2}{ c }{Uniform}  & \multicolumn{2}{ c }{Poisson} 
			& \multicolumn{2}{ c }{Binomial} \\
			\rowcolor{gray!50} 
			& $ GbT $ & $ SwT $
			& $ GbT $ & $ SwT $
			& $ GbT $ & $ SwT $ \\
			\hline
			0 &	100	& &	100	& &	100	& \\
			1 &	46.93 &	46.93 &	47.42 &	47.42 &	45.69 &	45.69 \\
			2 &	25.47 &	54.29 &	26.26 &	55.37 &	24.61 &	53.86 \\
			3 &	15.97 &	62.70 &	16.74 &	63.76 &	15.05 &	61.15 \\
			4 &	11.51 &	72.05 &	12.35 &	73.82 &	10.65 &	70.80 \\
			5 &	8.94 &	77.68 &	9.66 &	78.23 &	8.09 &	76.00 \\
			6 &	7.88 &	88.21 &	8.66 &	89.59 &	7.10 &	87.79 \\
			7 &	8.39 &	106.39 & 9.23 &	106.64 & 7.53 &	106.03 \\
			\hline
		\end{tabular}
		\caption{Summary of Computational Times Exact Solution ($ DPS $). See equations \eqref{Eqn Global Times}, \eqref{Eqn Stepwise Times} for the definitions of $ GbT $ and $ SwT $ respectively. See \textsc{Figures} \ref{Fig Averaged Computational Times} (a), (b) for their corresponding depiction.
		}
		\label{Tbl Computational Times DPS}
	\end{centering}
\end{table}
\begin{table}[h!]
	\begin{centering}
		\rowcolors{2}{gray!25}{white}
		\begin{tabular}{c c c c c c c  }
			\hline
			\rowcolor{gray!50}
			Height &  \multicolumn{2}{ c }{Uniform}  & \multicolumn{2}{ c }{Poisson} 
			& \multicolumn{2}{ c }{Binomial} \\
			\rowcolor{gray!50} 
			& $ GbT $ & $ SwT $
			& $ GbT $ & $ SwT $
			& $ GbT $ & $ SwT $ \\
			\hline
			0 &	100	& &	100	& &	100	& \\
			1 &	42.60 &	42.60 &	46.76 &	46.76 &	46.23 &	46.23 \\
			2 &	24.95 &	58.59 &	25.81 &	55.20 &	25.41 &	54.97 \\
			3 &	17.66 &	70.80 &	17.75 &	68.77 &	17.74 &	69.85 \\
			4 &	13.66 &	77.37 &	13.96 &	78.65 &	13.97 &	78.76 \\
			5 &	12.72 &	93.16 &	13.20 &	94.58 &	13.28 &	95.09 \\
			6 &	12.83 &	100.90 & 14.13 & 107.04 & 14.20	& 106.92 \\
			7 &	14.79 &	115.20 & 15.60 & 110.44	& 15.78	& 111.16 \\
			\hline
		\end{tabular}
		\caption{Summary of Computational Times Linear Relaxation Solution. See equations \eqref{Eqn Global Times}, \eqref{Eqn Stepwise Times} for the definitions of $ GbT $ and $ SwT $ respectively. See \textsc{Figures} \ref{Fig Averaged Computational Times} (c), (d) for their corresponding depiction.
		}
		\label{Tbl Computational Times LRS}
	\end{centering}
\end{table}

\noindent In the numerical results, we observe that the $ DPS $ shows an exponential decay for the $ BgT $, which is consistent with the almost linear behavior of the $ SwT $. The critical $ GbT $ point is $ h = 6 $ because taking $ h \geq 7 $ would produce bigger computational time and a lower quality solution, i.e., deterioration in both features with respect to $ h = 6 $. On the other hand, while the $ LRS $ shows also an exponential decay, it has wilder behavior in the $ SwT $ which scales up to a shift in the critical point $ h = 5 $ in its $ GbT $, i.e., for $ h \geq 6 $ there is no longer a trade-off between solution quality and computational time.\\

\noindent The existence of critical points in the $ GbT $ mentioned above, occurs because some sizes of the problem are small enough for the algorithm ($ DPS, LRS $ or $ GAS $) to become quite efficient, therefore, the decomposition of a given problem in multiple parts such as the D\&C tree generation, together with the reassembling of the problems' results (computation of pruned trees, leaves and sums, see \textsc{Definition} \ref{Def DC Tree efficiency}), add up to higher computational times. Further experiments with different size for the original 0-1 Minimization KP are summarized in the table \ref{Fig Critical Heights}. From there, it follows that the D\&C will continue to trade-off computational time vs. solution's quality, until the size of the subproblems is 16 for the Exact Solution $ DPS $ and 32 for the for the bounds $ LRS, GAS $.
\begin{table}[h!]
	\begin{centering}
		\rowcolors{2}{gray!25}{white}		
		\begin{tabular}{ccccccc}
			\rowcolor{gray!50}
			Problem
			& \multicolumn{2}{ c }{$ LRS $}  & \multicolumn{2}{ c }{$ DPS $}  & \multicolumn{2}{ c }{$ GAS $} \\
			\rowcolor{gray!50}
			Size
			& Height & Size  & Height & Size  & Height & Size \\
			\hline
			128 & 3 & 32 & 4 & 16 & 3 & 32\\
			256 & 4 & 32 & 5 & 16 & 4 & 32\\
			512 & 5 & 32 & 6 & 16 & 5 & 32\\
			1024 & 6 & 32 & 7 & 16 & 6 & 32\\
			\hline		
		\end{tabular}
		\caption{Critical D\&C tree Heights and associated subproblem sizes, for several sizes of the original 0-1 Minimization KP.}
		\label{Fig Critical Heights} 
	\end{centering}
\end{table}
\begin{figure}[t]
	\centering
	\begin{subfigure}[Global Computational Time $ GbT $, $ DPS $. See \textsc{Table} \ref{Tbl Computational Times DPS}.]
		{\includegraphics[scale = 0.350]{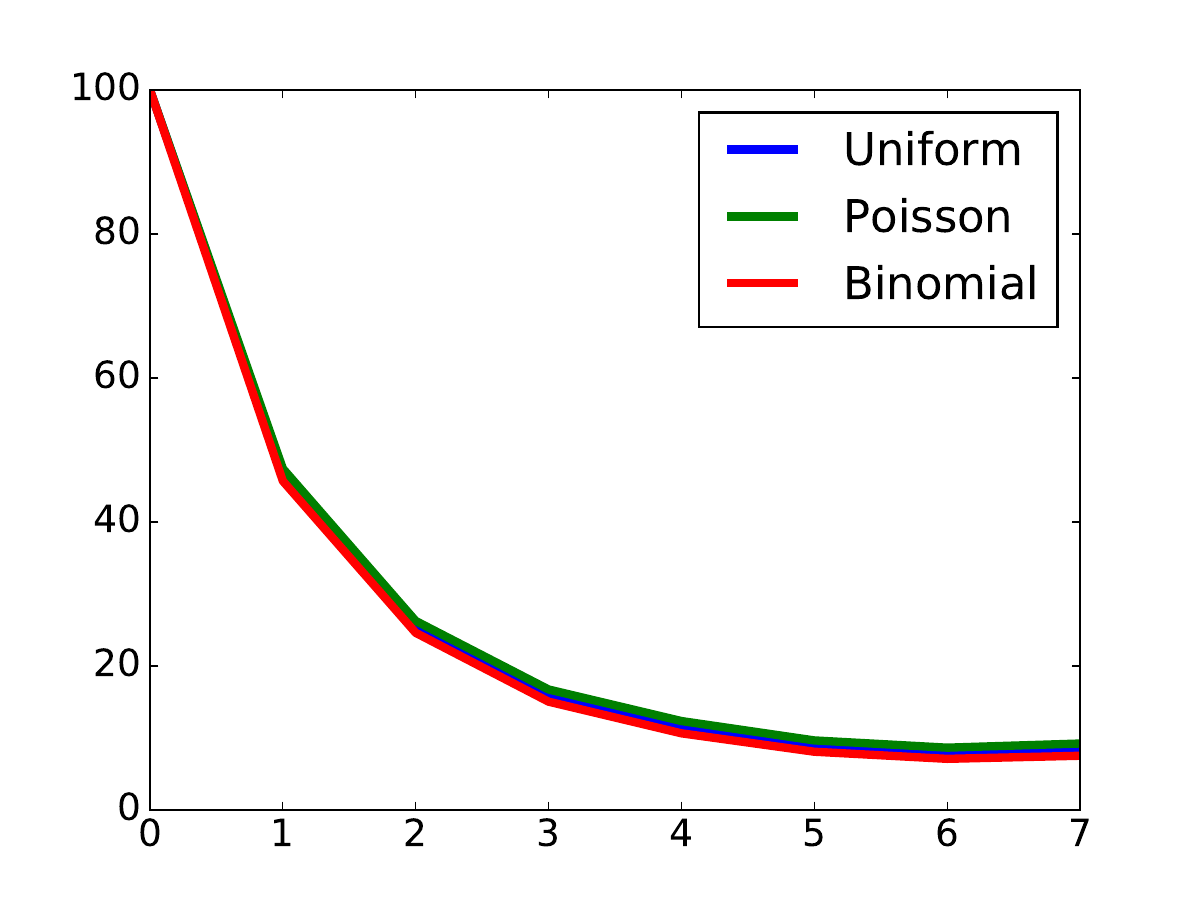} } 
	\end{subfigure}
	~ 
	\begin{subfigure}[Stepwise Computational Time $ SwT $, $ DPS $. See \textsc{Table} \ref{Tbl Computational Times DPS}.]
		{\includegraphics[scale = 0.350]{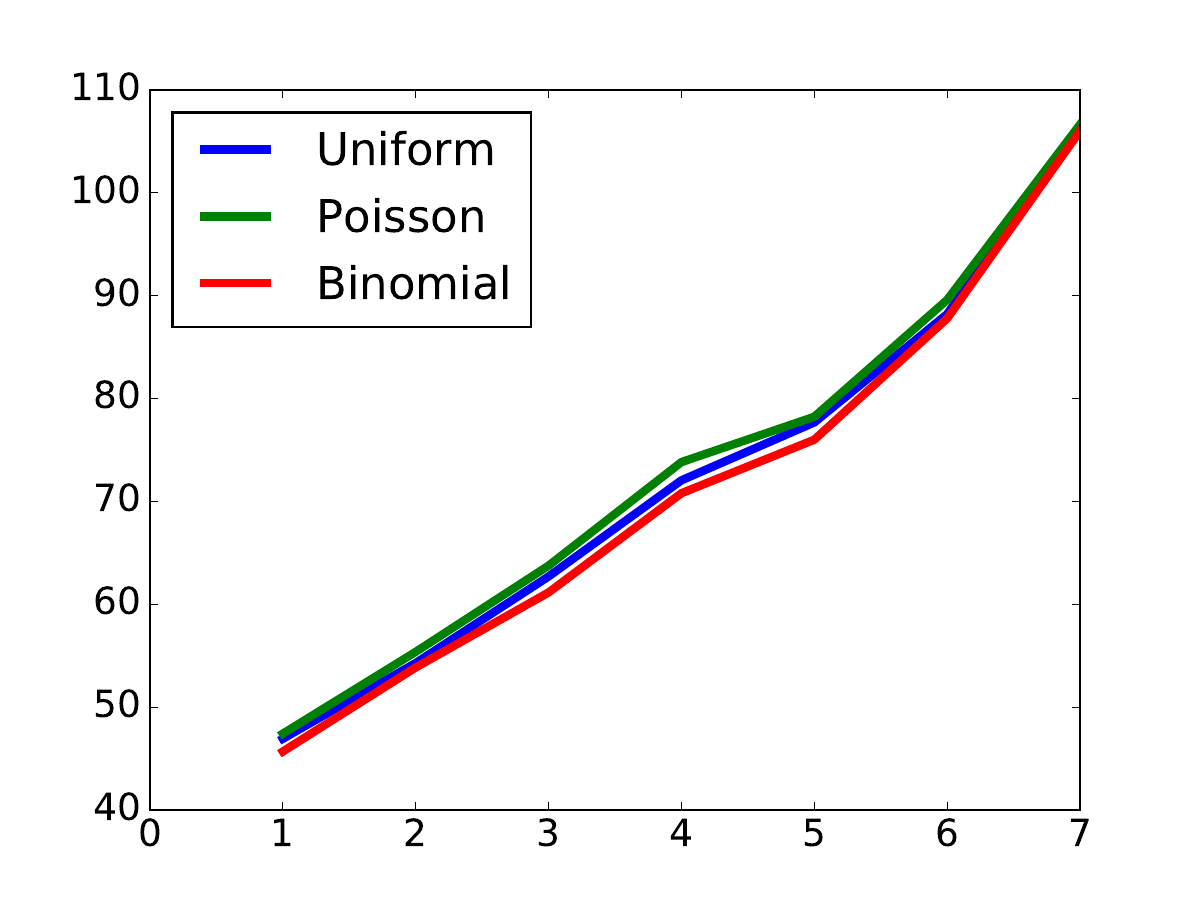} } 
	\end{subfigure}
	\begin{subfigure}[Global Computational Time $ GbT $, $ LRS $. See \textsc{Table} \ref{Tbl Computational Times LRS}.]
		{\includegraphics[scale = 0.350]{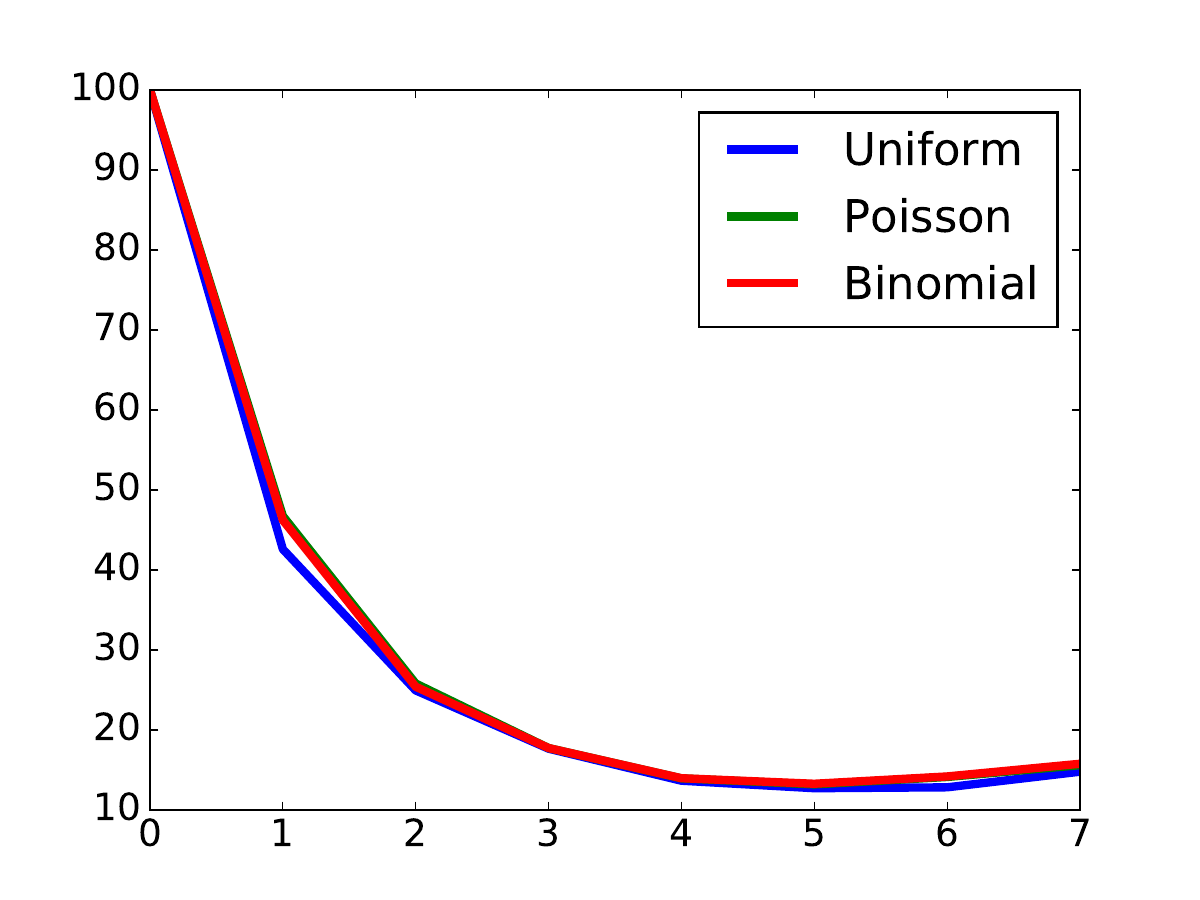} } 
	\end{subfigure}
	~ 
	\begin{subfigure}[Stepwise Computational Time $ SwT $, $ LRS $. See \textsc{Table} \ref{Tbl Computational Times LRS}.]
		{\includegraphics[scale = 0.350]{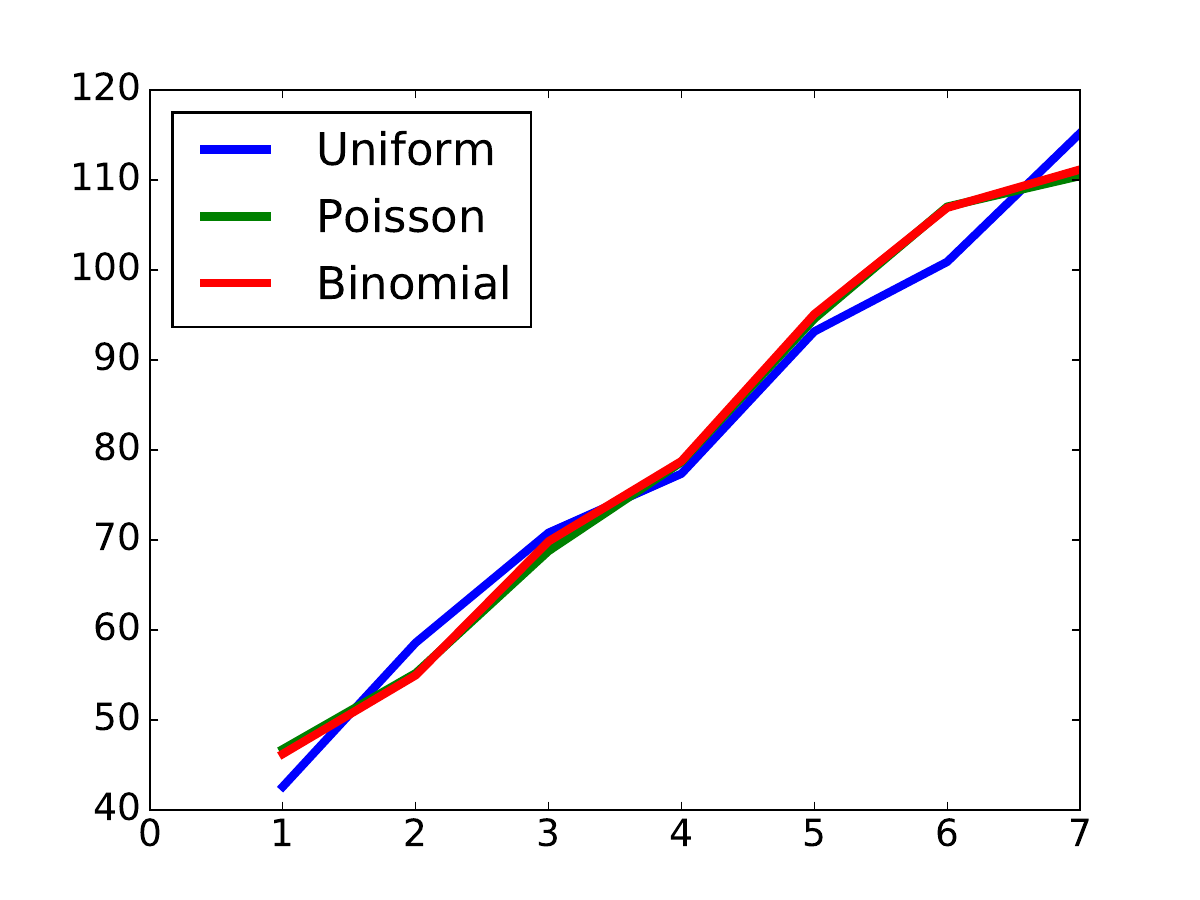} } 
	\end{subfigure}
	\caption{The figures display the average computational time values for 50 random realizations: global $ GbT $ (see \textsc{Equation} \eqref{Eqn Global Times}) and Stepwise $ SwT $ (see \textsc{Equation} \eqref{Eqn Stepwise Times}). The Exact Solution $ DPS $ (see \textsc{Table} \ref{Tbl Computational Times DPS}) is depicted in figures (a), (b), while the Linear Relaxation Solution $ LRS $ (see \textsc{Table} \ref{Tbl Computational Times LRS}) is displayed in figures (c), (d). The experiments were done in the standard setting (see \textsc{Definition} \ref{Def Standard Setting}). Other instances of the problem show similar behavior to its analogous in standard setting. The Greedy Approximation Solution (GAS) shows similar behavior to $ LRS $.}
	\label{Fig Averaged Computational Times} 
\end{figure}
%
%
%
%
%
%
\section{Conclusions and Final Discussion}\label{Sec Conclusions}
%
%
%
%
The present work yields the following conclusions. The heuristics of the method can be summarized as
\begin{enumerate}[(i)]
	\item We have proposed A Divide and Conquer method to solve the Knapsack Problem at large scale. The method reduces the computational time at the expense of loosing quality in the solution. Consequently, the central goal of the paper is to minimize the quality loss by finding the optimal strategies to use the method.
	
	\item The deterioration of the solution's accuracy and/or other parameters of control (such as upper and lower bounds) is defined as the efficiency of the method, and it is the main quantity to asses the quality of the method.  
	
	\item The method is heuristic therefore, several scenarios need to be explored in order to asses its efficiency. The scenarios are modeled using intermediate variables, some deterministic and some probabilistic, e.g. $ \text{lhT, blT} $ tree generation methods, distribution of capacities $ \dist\in \{\text{Ud, Pd, Ud} \} $ respectively, see \textsc{Figure} \ref{Fig Branch Strategies Tree}.

	\item The assessment of strategies is done statistically, using random realizations, computing the respective averages and appealing to the Law of Large Numbers \ref{Th the Law of Large Numbers} to approximate the expected behavior.  
	
	\item 
	It is important to stress that the D\&C method is not \textbf{directly comparable} with previous algorithms, because it does \textbf{not compete} with them, it \textbf{complements} them. In particular, approximation algorithms (such as those presented here or others included in \cite{kellerer2005knapsack} and \cite{martello1990knapsack}), essentially exact algorithms (such as COMBO from \cite{martello1999dynamic}) or exact algorithms (such as a naive Dynamic Programming implementation, or MT1 from \cite{martello1990knapsack}) can be combined with it. Matter of fact, it must be combined with a solution algorithm at certain level of branching if it is to produce an approximate solution at all. 
\end{enumerate}
From the results point of view
\begin{enumerate}[(i)]
	\item The D\&C method can be applied several times to the original KP and generate a tree of subproblems, as those depicted in \textsc{Figures} \ref{Fig Tree Generated by Head-Left Algorithm}, \ref{Fig Tree Generated by Head-Left Algorithm Biased}, \ref{Fig Tree Generated by Balanced Left-Right Algorithm}. However, it is not reasonable to branch the problem producing subproblems smaller than $ n = 32 $ (see \textsc{Section} \ref{Sec Computational Time}) due to the trade-off between computational time and quality deterioration. Such limit is denoted by $ h_{v}(P_{\text{blT}}) $ and it constitutes the first strategy in applying the D\&C method within a reasonable range of efficiency. 
	
	\item Two methods have been introduced to iterate the D\&C heuristics, namely $ \text{lhT, blT} $. They are compared after a common limit for the branching has been established: $ \tilde{h} \defining \min\{ h_{v}(P_{\text{blT}}), h_{v}(P_{\text{hlT}})\} $. Next, the efficiencies of both methods are compared from three points of view: demand-capacity fraction (e.g. \textsc{Table} \ref{Tbl Occupancy Left Head Balanced Methods Comparison Head Optimal}), price-capacity rate and sorting method. It follows that the $ \text{blT} $ furnishes significantly better results than the $ \text{hlT} $ in most of the possible scenarios.
	
	\item Once the $ \text{blT} $ algorithm has been determined as the best tree/branching generation method, the remaining optimal strategies are searched from two points of view: a specialized one, focused on the exact solution only $ GbE_{DPS} $, and a global one analyzing also the decay of the bounds of control $ GAE, LRE $. The results are
	summarized in the tables \ref{Tbl Strategies Summary DPS} and \ref{Tbl Strategies Summary Global} above.
	
	\item
	As it can be seen, the optimal strategies disagree from one point of view to the other for most of the cases. It is useful to have these information for both cases because in practice, depending on the method to be used in solving the family of subproblems derived from successive applications of the D\&C branching, it may be more convenient to prioritize one point of view over the other. For instance, if the family of subproblems will be solved using Exact, then $ GbE_{DPS} $ is more important. On the other hand, if the method includes bounds control (quantified in $ GAE $ and $ LRE $) the global point of view may be preferable. 
	
	\item It is also important to stress that in most of the cases $ GbE_{DPS} $ represents, in average, a fraction of  33\% of the global efficiency. This shows that when applying the D\&C method, the deterioration of the exact solution's quality is important with respect to the deterioration of the bounds' quality.
	
	\item A paramount feature is that the D\&C method deteriorates within reasonable values. In the case of $ GbE_{DPS} $, a maximum expected error of $ 2.85\% $ is observed. However, such an error occurs after the 6th D\&C iteration, which drastically reduces the computational time. On the other hand, the global quantification $ \eff \in \mathcal{E} $, presents a quality decay of 12.29\% in the worst case scenario but again, 6 D\&C iterations were used and this value encompasses all the efficiencies. It follows that the proposed method is efficient.
	
	\item The computational time is indifferent with respect to the strategies for the D\&C tree design as well as the data probabilistic distribution. 
\end{enumerate}
The present paper opens up new research lines to be explored in future work
\begin{enumerate}[(i)]
	\item The reduction of computational time and critical problem sizes, discussed in \textsc{Section} \ref{Sec Computational Time}, were quantified considering a serial algorithm implementation. A parallel implementation, on the other hand may furnish better results, because a D\&C iteration produces two fully decoupled optimization problems. The assessment of computational time for a parallel scheme will be pursued in future work.

	\item As mentioned above, currently a D\&C iteration produces two fully decoupled subproblems. However, another scheme with partial coupling can be proposed namely introducing a pair of problems like that presented in \textsc{Definition} \ref{Def Divide and Conquer Setting} (ii), \textsc{Problem} \ref{Pblm DC subproblems}, but such that $ A_{0} \cup A_{1} = [N] $ and $ A_{0} \cap A_{1} \neq \emptyset $; with assigned demands $ D_{0} $, $ D_{1} $, computed by rules analogous to \textsc{Equation} \eqref{Eqn Demand Fraction} i.e., construct artificially an integer problem with the structure 
	\begin{problem}[$ \Pi^{b}, \, b = 0, 1 $]\label{Pblm Dantzig-Wolfe}
		\begin{subequations}\label{Eqn Dantzig-Wolfe Problem}
			\begin{equation}\label{Eqn Dantzig-Wolfe Problem Objective Function}
			\min \Bigg[\sum\limits_{b\, \in \, \{0,1\} }\sum\limits_{i\, \in \, A^{b} } p_{i}x_{i} 
			- \sum\limits_{j\, \in \, A^{0}\cap A^{1} } p_{j}x_{j}\Bigg] ,
			\end{equation}
			subject to
			\begin{align}\label{Eqn Dantzig-Wolfe Problem Capacity Constraint}
			& \sum\limits_{i\, \in \, A^{0} } c_{i}x_{i} \geq D^{0},&
			& \sum\limits_{i\, \in \, A^{1} } c_{i}x_{i} \geq D^{1}, &
			& x_{i} \in \{0,1\}, \; \forall\, i \in [N] .
			\end{align}
		\end{subequations}
	\end{problem}
	A future line of research is the optimal choice of coupling/overlapping sets $ A_{0} \cap A_{1} \neq \emptyset $ and exploit the structure of the integer programming problem \ref{Pblm Dantzig-Wolfe} (analogously to the Dantzig-Wolfe decomposition for linear problems with the same structure). Furthermore, the optimality has to be analyzed from the perspective quality vs. computing time.  
	
	\item In this work, the method used a static choice of strategies, e.g. if the sorting method was $ s = \boldsymbol{\gamma} $, it remained constant through all the nodes of the D\&C tree (as \textsc{Table} \ref{Tbl Balanced Tree for Particular Example}, \textsc{Figure} \ref{Fig Tree Generated by Balanced Left-Right Algorithm} illustrate). A future line of research is to investigate the effect of mixing the strategies, e.g. the sorting parameter $ s $ taking different  values from $ \{ \p, \capac, \boldsymbol{\gamma}, \text{random} \} $ from one node to its children, or from one height (tree level) to the next.
	
	\item The $ \text{blT} $ algorithm is significantly superior to the $ \text{hlT} $ method; the numerical evidence suggests that an analytic proof of this conjecture is plausible. A future line of research is to look for a rigorous mathematical proof, which of course, would use probability theory and furnish its results in terms of expected efficiencies.   
	
	\item Finally, a future line of research is the implementation and assessment of the D\&C method for the optimization of general linear integer programs. However, such a step should be done only once the aforementioned aspects have been deeply studied.  
\end{enumerate}
%
%
%
%
%
\section*{Acknowledgments}
%
%
\noindent The first Author wishes to thank Universidad Nacional de Colombia, Sede Medell\'in for supporting the production of this work through the project Hermes 45713 as well as granting access to Gauss Server, financed by ``Proyecto Plan 150x150 Fomento de la cultura de evaluaci\'on continua a trav\'es del apoyo a planes de mejoramiento de los programas curriculares". (\url{gauss.medellin.unal.edu.co}), where the numerical experiments were executed. The second Author wishes to thank Universidad EAFIT for its financial support as MSc student, through the Internal Grant 819156 ``Modelos matem\'aticos y m\'etodos de soluci\'on a un tipo de problema log\'istico que involucha agrupamiento de clientes, distribuci\'on y ruteo". The authors also wish to thank the anonymous referees whose meticulous review and insightful suggestions enhanced substantially the quality of this work. Special thanks to Professor Daniel Cabarcas from Universidad Nacional de Colombia, Sede Medell\'in for his help in understanding and running the code COMBO from \cite{martello1999dynamic}.
\bibliographystyle{plain}

%
%
%
%
%
%
\end{document}